\documentclass[11pt]{article}
\usepackage{tikz}
\usepackage{ucs}
\usepackage{amssymb}
\usepackage{amsthm}
\usepackage{amsmath}
\usepackage{latexsym}
\usepackage[cp1251]{inputenc}
\usepackage[english]{babel}
\usepackage{graphicx}
\usepackage{wrapfig}
\usepackage{caption}
\usepackage{subcaption}
\usepackage{txfonts}
\usepackage{lmodern}
\usepackage{mathrsfs}
\usepackage{algorithm}
\usepackage{multicol}
\usepackage{enumerate}
\usepackage{titlesec}
\usepackage{paralist}
\usepackage{mathtools}
\setcounter{secnumdepth}{4}

\titleformat{\paragraph}
{\normalfont\normalsize\bfseries}{\theparagraph}{1em}{}
\titlespacing*{\paragraph}
{0pt}{3.25ex plus 1ex minus .2ex}{1.5ex plus .2ex}

\newcommand*{\myproofname}{Proof}

\usepackage{wasysym}

\usepackage{fullpage}

\def\qed{\hfill\ifhmode\unskip\nobreak\fi\qquad\ifmmode\Box\else\hfill$\Box$\fi}

\usepackage{amsthm, amssymb, amsmath}
\usepackage{verbatim, float}
\usepackage{mathrsfs}
\usepackage{graphics}
\usepackage[usenames,dvipsnames]{pstricks}
\usepackage{epsfig}
\usepackage{pst-grad} 
\usepackage{pst-plot} 
\usepackage{cases}
\usepackage{algorithmic}
\usepackage{subcaption}
\usepackage{tikz,pgfplots}
\usepackage{chngcntr}

\newcommand*\xor{\oplus}

\newlength\figureheight 
\newlength\figurewidth 
\usepackage[top=1.8cm, bottom=1.8cm, left=1.8cm, right=1.8cm]{geometry}

\newtheorem{theorem}{Theorem}
\newtheorem{corollary}[theorem]{Corollary}

\newtheorem{lemma}[theorem]{Lemma}
\newtheorem{claim}[theorem]{Claim}
\newtheorem{proposition}[theorem]{Proposition}
\newtheorem{remark}[theorem]{Remark}
\theoremstyle{definition}
\newtheorem{defn}[theorem]{Definition}

\newtheorem{example}[theorem]{Example}
\theoremstyle{remark}
\setcounter{theorem}{0}

\title{Query-Based Selection of Optimal Candidates under the Mallows Model\footnote{Parts of the work will be presented at the IEEE Information Theory Workshop (ITW) 2023, Saint-Malo, France.}}

\author{
Xujun Liu\thanks{Department of Foundational Mathematics, Xi'an Jiaotong-Liverpool University, Suzhou, Jiangsu Province, 215123, China, xujun.liu@xjtlu.edu.cn} \and
Olgica Milenkovic\thanks{Department of Electrical and Computer Engineering, University of Illinois, Urbana-Champaign, Urbana, IL, 61801, USA, milenkov@illinois.edu} \and
George V. Moustakides\thanks{Department of Electrical and Computer Engineering, University of Patras, Rio, 26500, Greece, moustaki@upatras.gr}
 }

\begin{document}
\maketitle

\begin{abstract} We study the secretary problem in which rank-ordered lists are generated by the Mallows model and the goal is to identify the highest-ranked candidate through a sequential interview process which does not allow rejected candidates to be revisited. The main difference between our formulation and existing models is that, during the selection process, we are given a fixed number of opportunities to query an infallible expert whether the current candidate is the highest-ranked or not. If the response is positive, the selection process terminates, otherwise, the search continues until a new potentially optimal candidate is identified.  Our optimal interview strategy, as well as the expected number of candidates interviewed and the expected number of queries used, can be determined through the evaluation of well-defined recurrence relations. Specifically, if we are allowed to query $s-1$ times and to make a final selection without querying (thus, making $s$ selections in total) then the optimum scheme is characterized by $s$ thresholds that depend on the parameter $\theta$ of the Mallows distribution but are independent on the maximum number of queries.  
\end{abstract}

\section{Introduction}\label{intro}

The secretary problem, also known as the game of googol and the picky bride problem, was formally introduced by Gardner~\cite{G1, G2} and is considered a prototypical example in sequential analysis, optimization, and decision theory. It can be stated as follows: $N$ individuals are assumed to be ranked from best-to-worst without ties according to their qualifications. They apply for a ``secretary'' position, and are interviewed one by one, in a uniformly random order. When the $i^{\text{th}}$ candidate appears, one can only rank her/him with respect to the $i-1$ already interviewed individuals. At the time of the $i^{\text{th}}$ interview, the employer can make the decision to hire the person presented or continue with the interview process by rejecting the candidate; rejected candidates cannot be revisited at a later time. If only one selection is to be made, what selection strategy (i.e., stopping rule) maximizes the probability of selecting the best (highest ranked) candidate? 

The first published solution to the problem is due to Lindley~\cite{L1} in 1961 and is based on algebraic methods. Dynkin~\cite{D1} solved the problem in 1963 by viewing the selection process as a Markov chain. For $N$ large enough, the answer turns out to be surprisingly elegant and simple: the first $N/e$ candidates are automatically rejected ($e$ stands for the base of the natural logarithm) and the first candidate that outranks all previously seen candidates after that point is selected for an offer. This strategy ensures a probability of successful identification of the best candidate equal to $1/e$, provided that $N$ is allowed to go to infinity.  

The secretary problem has been extended in many directions. Examples include full information games~\cite{GM2}, the classical secretary problem on posets~\cite{FW1, GM1, GKMN1, P1} and in the context of matroid theory~\cite{BIK1, Soto}, as well as the Prophet inequality model~\cite{KS1}. For more extensions and a detailed history of the development of secretary problem, the interested reader is referred to~\cite{F1, F2}. In particular, an extension of the classical secretary problem, known as the \emph{Dowry problem with multiple choices} (henceforth, the Dowry problem), was studied by Gilbert and Mosteller in their seminal work~\cite{GM2}. In the Dowry problem, one is allowed to select $s \ge 1$ candidates during the interview process, and the criteria for success is that the selected group includes the optimal candidate. This review process can be motivated or extended in many different ways: For example, one may view the $s$-collection to represent candidates invited for a second round of interviews. 

For both the secretary and Dowry problem, the modeling assumption is that the candidates are presented to the evaluator uniformly at random. Nevertheless, it is often the case that the candidates are presented in an order that is nonuniform~\cite{CJMTUW1, jones2019, jones2020weighted, LM1, postdocpaper}. This consideration lends itself to a generalization of the secretary problem introduced by Jones in~\cite{jones2020weighted}. The paper~\cite{jones2020weighted} 
considers candidates arriving in an order dictated by a sample permutation from the Mallows distribution~\cite{mallows1957non}. In particular, when $N \to \infty$, Jones~\cite{jones2020weighted} showed that the optimal strategy for the classical secretary problem under the Mallows model (parametrized by $\theta > 0$) is as follows: (1) when $\theta < 1$, reject all but the last $j = \max(-1/\ln{\theta}, 1)$ candidates and select the next left-to-right maximum thereafter; and (2) when $\theta > 1$, reject the first $k$ candidates and select the next left-to-right maximum thereafter. Here, $k$ is a function of $\theta$ but independent on $N$. It is important to point out that the focus of the work addressed the secretary problem (with one selection), with recent extensions providing companion solutions for the postdoc problem~\cite{LM1, postdocpaper}, introduced by Dynkin in the 1980s~\cite{R1, V1}. 

Motivated by a recent line of problems considering learning problems with queries~\cite{ashtiani,dixie,arya,barna}, we introduce the problem of query-based sequential analysis under the Mallows model. In our setting, we make use of the Mallows distribution\footnote{Our results actually apply to a broader class of distributions represented by prefix-equivariant statistics, as will be apparent from our proofs.} and assume that the decision making entity has access to a limited number of queries to an infallible expert. When faced with a candidate identified by an exploration-exploitation procedure as the potentially optimal choice, an expert provides an answer of the form ``Best'' and ``Not the best.'' If the answer is ``Not the best,'' a new exploration-exploitation stage is initiated, with the potential of using another query at the end of the process. If the answer is ``Best,'' the sequential examination process terminates. Given a budget of $s-1$ queries, where $s$ is usually a relatively small positive integer, e.g., $2 \le s \leq 5$, the questions of interest is to find the optimal interview strategy, the optimal probability of success and the expected number of candidates interviewed or experts queried until success or termination. Note that we are allowed to make a final selection without querying experts after using all $s-1$ queries; thus, we have a budget of $s-1$ queries and can make at most s selections.
 
When applied on the random interview model, the above setting resembles the Dowry problem as both allow for selecting $s$ candidates. In the Dowry problem, one is allowed to make $s$ selections without the information about the global ranking of the selected candidates while in our query-based setting the expert is asked if the candidate is globally the best. Furthermore, it assumes that candidate lists are Mallows samples and even for the Dowry problem, such a distribution assumption was not considered in the past. 

For a Mallows distribution parametrized by $\theta>0$, the case $\theta >1$ corresponds to a decreasing trend in the quality of candidates while the case $0 < \theta < 1$ corresponds to an increasing trend in the quality of candidates. The case $\theta=1$ corresponds to the uniform distribution. In our query-based model, we are provided with $s-1$ ($s \ge 1$) opportunities to query an expert whether the current candidate is the best or not. If the best candidate is not identified after these queries we are still allowed to make a final selection without querying experts. In terms of the optimal probability of winning and the optimal strategy, the query-based model with $s-1$ queries is equivalent to the Dowry problem with $s$ selections, since (i) they both have in total a maximum budget of $s$ selections; (ii.1) if a selected candidate presented as the $i^{\text{th}}$ selection, where $1 \le i \le s-1$, is the globally best candidate then in the query-based model we accept this candidate and stop our search. In the Dowry model we select, save, and move on since we still have at least one selection left, which will not influence the result as we already picked the best candidate; (ii.2) if the $s^{\text{th}}$ selection is the best candidate, the two models are equivalent as we will save the selected candidate in both models; (iii.1) if a candidate at the $i^{\text{th}}$ selection time, where $1 \le i \le s-1$, is not the globally best candidate then in the query-based model we will be informed about this fact and will continue the search while in the Dowry model we will not have this information available but will still continue since there is at least one selection left; (iii.2) if a selected candidate at the $s^{\text{th}}$ selection point is not the best, then under both models we will save the candidate and terminate.

The optimal query and selection strategies for both our query-based model and the Dowry model depend on the value of the parameter $\theta$ of the Mallows model. For $N \to \infty$ and $\theta > 1$, the optimal strategy\footnote{Since there may exist more than one optimal strategies which can attain the optimal winning probability, we use ``an optimal strategy'' to refer to one of the optimal strategy and ``the optimal strategy'' to refer to all optimal strategies.} is an $s$-threshold $(k_1, \ldots, k_s)$-strategy (formally defined in Theorem~\ref{strategy}) with $0 \le k_1 \le \ldots \le k_s \not \to \infty$, where $k_i$, $1 \le i \le s$, is the threshold for the $i^{\text{th}}$ selection. In this setting, an optimal strategy is as follows: When making the $i^{\text{th}}$ selection, for each $1 \le i \le s$, we reject all candidates up until position $k_i$, then select the next left-to-right maxima (a candidate which is the best when compared with all examined candidates up to that point). For $0 < \theta < 1$, the optimal strategy is also an $s$-threshold $(k_1, \ldots, k_s)$-strategy; however, in this case, $ k_1 \le \ldots \le k_s \le N-1$ and $N-k_1 \not \to \infty$. 

Furthermore, let $N$ be a fixed positive integer; for each positive real number $\theta$, there exists a sequence of numbers $a_1(\theta), a_2(\theta),\ldots$ that depends only on $\theta$, such that $a_1(\theta) \ge a_2(\theta) \ge a_3(\theta) \ge \ldots$ and an optimal strategy for the proposed query-based model (with $s-1$ queries in total) is the $s$-threshold $(a_s(\theta), \ldots, a_1(\theta))$-strategy. In addition, the thresholds in the $(a_s(\theta), \ldots, a_1(\theta))$-strategy do not change with the value of $s$ if read from the right. For example, let $\theta>0$ be fixed, and assume that the optimal strategy for $s = 2$ is the $(a_2(\theta), a_1(\theta))$-strategy and an optimal strategy for $s = 3$ is the $(a_3(\theta), a_2(\theta), a_1(\theta))$-strategy; when $s$ increases from $2$ to $3$, our optimal strategy will only add $a_3(\theta)$ on the left and the parameter values at later positions in the strategy will not change with $s$. Even though a special case of our version of the problem exhibits similarities with the Dowry problem with multiple choices proposed by Gilbert and Mosteller~\cite{GM2}, there are essential structural and methodological differences, since in our case we allow for early stopping, address non-uniform (e.g., Mallows) candidate interview distributions and, most importantly, provide an exact (non-asymptotic) proof of the optimality of our scheme for any number of queries. 

An important combinatorial method to study sequential problems under nonuniform ranking models was developed in a series of papers by Fowlkes and Jones~\cite{FJ19}, and Jones~\cite{jones2019,jones2020weighted}. For consistency, we use some of the notation and definitions from Jones~\cite{jones2020weighted} but also introduce a number of new concepts and combinatorial  proof techniques. 
In particular, finding recurrence relations for more than one selection is significantly more challenging than for the secretary problem, and the optimal strategies differ substantially from the classical ones as our results include multiple thresholds for stopping. 

The paper is organized as follows. Section~\ref{sec:preliminaries} introduces the relevant concepts, terminology and models used throughout the paper. The same section also contains a number of technical lemmas that help in establishing our main results pertaining to the optimal selection strategies. An in-depth analysis of the exploration stages and the probabilities of success for the optimal selection processes under the Mallows distribution are presented in Section~\ref{mallows}. Numerical results for the exploration phase lengths and optimal winning probabilities versus $\theta$ are discussed in Section~\ref{numerical}. Results of our analysis for the expected number of questions (selections) used as well as the numerical results are presented in Section~\ref{expect}. 


\vspace{-3mm}
\section{Preliminaries}\label{sec:preliminaries}

The sample space is the set of all permutations of $N$ elements, i.e. the symmetric group $S_N$, with the underlying $\sigma$-algebra equal to the power set of $S_N$. The best candidate is indexed by $N$, the second-best candidate by $N-1,\ldots,$ and the worst candidate is indexed by $1$. The interview committee can accurately compare the candidates presented up to a certain time point, but cannot assess the quality of the future candidates. We also assume that there is a budget of $s-1$ queries ($s \ge 1$) to be made that produce an answer whether a current candidate is the globally best one or not, as well as a final selection that does not involved queries (thus, a total of $s$ selections). These modeling assumptins are equivalent to those of the Dowry problem with $s$ selections if one only considers the optimal probability of success and winning strategy. Still, there are differences in the expected interview times which are discussed in Section~\ref{expect}. 

Furthermore, unlike the standard model of the secretary and Dowry problem, our framework assumes that the candidates are presented (one-by-one, from the left) according to a permutation (order) dictated by the Mallows distribution $\mathcal{M}_{\theta}$, parametrized by a real number $\theta>0$. The probability of presenting a permutation $\pi \in S_N$ to the hiring committee equals 
$$f(\pi) = \theta^{c(\pi)} \bigg/  \sum\limits_{\pi \in S_N} \theta^{c(\pi)},$$
where $c: S_N \to \mathbb{N}$ equals the smallest number of adjacent transpositions needed to transform $\pi$ into the identity permutation $[12\,\cdots \,N]$. Equivalently, $c(\pi)$ equals the number of pairwise element inversions, and is also known as the \emph{Kendall} $\tau$ distance between the permutation $\pi$ and the identity permutation $[12\,\cdots \,N]$. Note that the notation for a permutation in square bracket form should not be confused with the notation for a set $[a,b]=\{{a,a+1,\ldots,b\}},\, b\geq a,$ and the meaning of the notation used will be clear from the context. 


As remarked upon in Section~\ref{intro}, the query-based model and the Dowry model in the uniform permutation selection setting are the same problem when considering the maximum probability of winning and a corresponding optimal strategy. Therefore, for simplicity, we present our results for the new Mallows model in the ``language'' of the Dowry problem with $s$ selections.

\vspace{-2mm}
\subsection{The $Q,Q^o,\bar{Q}$ probabilities and strike sets}

For a given permutation $\pi \in S_N$ drawn according to the Mallows model, we say that a strategy \emph{wins} if it correctly identifies the best candidate when presented with $\pi$. The notion of a \emph{prefix} is introduced to represent the current relative ordering of candidates. Given a permutation $\pi \in S_N$, the $k^{(th)}$ prefix of $\pi$, denoted by $\pi|_k,$ is a permutation in $S_k$ and it represents the relabelling of the first $k$ elements of $\pi$ according to their relative order. For example, if $\pi=[635124] \in S_6$ and $k=4$, then  $\pi|_4=[4231]$.

\vspace{-2mm}

\begin{defn}
Let $\sigma \in \bigcup\limits_{i = 1}^{N} S_i$ and assume that the length of the permutation, $|\sigma|$, equals $k$. 

(1) We say that $\pi \in S_N$ is {\em $\sigma$-prefixed} if $\pi|_k = \sigma$. For example, $\pi=[165243] \in S_6$ is $\sigma=[1432]$-prefixed.

(2) Given that $\pi$ is $\sigma$-prefixed, we say that $\pi$ is {\em $\sigma$-winnable} if accepting the prefix $\sigma$, i.e. if accepting the $|\sigma|^{\text{th}}$ candidate when the order $\sigma$ is encountered identifies the best candidate of the interview order $\pi$. More precisely, for $\sigma = [\sigma(1)\sigma(2) \cdots \sigma(k)]$, we have that $\pi$ is $\sigma$-winnable if $\pi$ is $\sigma$-prefixed and $\pi(k) = N$.
\end{defn}

\begin{defn}
A {\em left-to-right maxima} in a permutation $\pi \in S_N$ is a position whose value is larger than all values to the left of the position. For example, if $\pi = [423516] \in S_6$, then the first, fourth, and sixth position are left-to-right maxima.
\end{defn}

\vspace{-5mm}
\begin{defn}
We say that a permutation $\sigma \in \bigcup\limits_{i = 1}^{N} S_i$ is {\em eligible} if it ends in a left-to-right maxima or has length $N$. For example, let $N = 6$. Then both $[1324]$ and $[165243]$ are eligible.
\end{defn}

A permutation $\pi \in S_N$ is sampled from the Mallows model before the interview process. During the interview process, each entry of $\pi$ is presented one-by-one from the left; the relative ordering of the positions presented so far forms a prefix of $\pi$. That is the only information that can be used to decide whether to accept or reject the current candidate. Therefore, every strategy can be represented as a set of permutations of possibly different lengths that lead to an accept decision for the last candidate observed; such a set is called a {\em strike set}. More precisely, the selection process proceeds as follow: If the prefix we have seen so far is in the strike set, then we accept the current candidate and continue (if there is at least one selection left); if it does not belong to the strike set, we reject the current candidate and continue. For example, let $N=4$ and $s=1$. Then, the boxed set of permutations $A = \{[12], [213], [3124], [3214]\}$ in Figure~\ref{tree-1} is a strike set. The corresponding interview strategy may be summarized as follows: If the relative order of the candidates interviewed so far is in the set $A$, then accept the current candidate; otherwise, reject the current candidate. 

Since in our model there are $s$ selections, we make use of $s$-strike sets defined below.

\begin{defn}
A set $X \subseteq \bigcup_{j = 1}^{N} S_j$ is called an {\em $s$-minimal set} if it is impossible to have $s+1$ elements $\alpha_1, \alpha_2, \ldots, \alpha_{s+1} \in X$ such that $\alpha_{i+1}$ is a prefix of $\alpha_i$, for all $i \in \{1, 2, \ldots, s\}$.
\end{defn}

\begin{defn}
A set of permutations $A \subseteq \bigcup_{j = 1}^{N} S_j$ is called an {\em $s$-strike set} if it satisfies the following three conditions:

(1) It comprises prefixes that are eligible. 

(2) The set $A$ is $s$-minimal. The set $A$ may contain elements $\alpha_1, \alpha_2, \ldots, \alpha_{s}$ such that $\alpha_{i+1}$ is a prefix of $\alpha_i$, for all $i \in \{1, 2, \ldots, s-1\}$. In other words, based on an $s$-strike set one can make at most $s$ selections. 

(3) Every permutation in $S_N$ contains some element of $A$ as its prefix (i.e., given an $s$-strike set one can always make a selection based on its elements).
\end{defn}

A $1$-strike set corresponds to the valid strike set defined in the paper of Jones~\cite{jones2020weighted} when only one choice is allowed. We use the term strike set whenever $s$, the number of total selections, is clear from the context.

From the previous definition and the fact that we are allowed to make at most $s$ selections it follows that any optimal strategy for our problem can be represented by an $s$-strike set. For example, the set \{[1], [12], [213], [3124], [3214]\} in Figure~\ref{tree-1} is a $2$-strike set, which also represents an optimal strategy for the case $N = 4$ and $s=2$. See also Example~\ref{prefixtree}. Furthermore, for a permutation $\sigma$ of length $k$, where $1 \le k \le N$, and $i \in \{1, 2, \ldots, s\}$, we make extensive use of the following probabilities. 

\begin{defn}\label{probabilities}
Let $\sigma$ be a permutation of length $k$, where $1 \le k \le N$, and let $i \in \{1, 2, \ldots, s\}$. Define

$Q_i(\sigma)$, the probability of identifying the best
candidate with the strategy accepting the $k^{\text{th}}$ position
and using the best strategy thereafter \textbf{conditioned on}
the pre-selected interviewing order $\pi$ being $\sigma$-prefixed and
$i$ selections still being available when interviewing the candidate at position $k$.

$Q_i^o(\sigma)$, the probability of identifying the best
candidate with the best strategy after making a decision
for the $k^{\text{th}}$ position \textbf{conditioned on} the pre-selected interviewing order $\pi$ being $\sigma$-prefixed and $i$ selections still being available right after the interview of the $k^{\text{th}}$ candidate.

$\bar{Q}_i(\sigma)$, which equals $\bar{Q}_i(\sigma) = \max\{Q_i(\sigma), Q^o_i(\sigma)\}.$
\end{defn}

In words, $Q$ represents the probability of winning by accepting the current candidate while $Q^o$ is the probability of winning based on future selections in the interview process. In order to ensure the maximum probability of winning, an optimal strategy will examine two available choices, i.e. ``accept the current candidate'' or ``reject the current candidate and implement the best strategy in the future'' at each stage of the interview and select the one with a better chance of identifying the best candidate.

\begin{remark}\label{large}
Let $1 \le i \le s$. Consider an arbitrary permutation $\sigma$ of length $N$. If the last position of $\sigma$ is $N$, then $Q_i(\sigma) = 1$, $Q^o_i(\sigma) = 0$, and $\bar{Q}^o_i(\sigma) = 1$. If the last position of $\sigma$ is not $N$, then $Q_i(\sigma) = 0$, $Q^o_i(\sigma) = 0$, and $\bar{Q}^o_i(\sigma) = 0$, since selecting the last candidate will not result in success and the search will terminate after interviewing the last candidate. For a permutation $\sigma$ of length at least $N-i$, $Q_{i+1}(\sigma) = \ldots = Q_{s}(\sigma)$. Furthermore, for a permutation $\sigma$ of length at least $N-i$, $Q^o_{i}(\sigma) = Q_i^o(\sigma) = \ldots = Q_{s}^o(\sigma)$.
\end{remark}

We also need the following definitions.

\begin{defn}
Let $\sigma$ be a permutation of length $k \le N$. {\em The standard denominator} $SD(\sigma)$ of $\sigma$ equals
$$SD(\sigma)=\sum\limits_{\sigma\text{-prefixed } \pi \in S_N} \theta^{c(\pi)}.$$
Furthermore, $\text{Win}(\sigma)$ stands for the sum of the weights $\theta^{c(\pi)}$ over all $\sigma$-winnable permutations $\pi \in S_N$.
\end{defn}

The case $i = 1$ was first analyzed by Jones~\cite{jones2020weighted}, establishing that
\begin{equation}\label{basicq0}
Q_1(\sigma) = \frac{\sum\limits_{\sigma\text{-winnable }\pi \in S_N} \theta^{c(\pi)}}{\sum\limits_{\sigma\text{-prefixed }\pi \in S_N} \theta^{c(\pi)}} = \frac{Win(\sigma)}{SD(\sigma)}.
\end{equation}

\begin{defn}
The $\xor$ operation for two fractions $\frac{a}{b}$ and $\frac{c}{d}$ is defined as $\frac{a}{b} \xor \frac{c}{d} = \frac{a+c}{b+d}$. 
\end{defn}
Roughly speaking, the $\xor$ operation will be used to compute the probability of the union of two disjoint events from two disjoint sample spaces over a new sample space equal to the union of the sample spaces. It is important to point out that we do not cancel common divisors in the defining fractions for the probabilities $Q,Q^o, \bar{Q}$ until the final step. 

Each entry of pre-selected permutation $\pi$ is presented one-by-one from the left during the interview process, and the relative ordering of the already observed positions forms a prefix of $\pi$. This relative ordering changes with more candidates being interviewed and we need a means to describe this process.

\begin{defn}\label{children}
For each $\sigma$ of length $\ell-1$, where $\ell \le N$, we define $\lambda_j(\sigma)$, $1 \le j \le \ell$, to be the $\sigma$-prefixed permutation of length $\ell$ such that its last position has value $j$; the permutation is obtained by relabelling the $\ell$ positions of $\sigma$. For example, for $\sigma = [123],$ a permutation of length $3$, we have $\lambda_1(\sigma) = [2341], \lambda_2(\sigma) = [1342], \lambda_3(\sigma) = [1243]$ and $\lambda_4(\sigma) = [1234]$. 
\end{defn}

Let $\sigma$ be a permutation of length $1 \le k \le N$ with $Q_i(\sigma), Q^o_i(\sigma), \bar{Q}_i(\sigma)$ defined as above, and $1 \leq i \leq s$ selections available right before processing the $k^{\text{th}}$ candidate of a $\sigma$-prefixed permutation. For each $1 \le i \le s$, if the $k^{\text{th}}$ position of $\sigma$ is selected, then the number of selections available decrease by one; if the $k^{\text{th}}$ candidate is rejected, then the number of selections available does not change. When the number of available selections becomes zero or all candidates are examined, the process terminates.

After making a decision on the $|\sigma|^{\text{th}}$ candidate, the interviewer examines the next applicant while the relative order of the interviewed candidates changes to one of $\lambda_1(\sigma), \ldots, \lambda_{k+1}(\sigma)$. An optimal strategy involves making a decision with the largest probability of winning when encountering each of the $\lambda_1(\sigma), \ldots, \lambda_{k+1}(\sigma)$:
\begin{equation}\label{basicqio}
Q^o_i(\sigma) = \bar{Q}_i(\lambda_1(\sigma)) \xor \cdots \xor \bar{Q}_i(\lambda_{k+1}(\sigma)).
\end{equation}

Proposition~\ref{keyidea} provides a way to write $Q_i(\sigma)$ using $Q$ probabilities with smaller subscripts (i.e., $Q_1(\sigma)$ and $Q^o_{i-1}(\sigma)$). This simple result is heavily used in the proofs to come.
\begin{proposition}\label{keyidea}
For $i \in \{2, \ldots, s\}$,
\begin{equation}\label{qqoqbar}
Q_i(\sigma) = Q_{1}(\sigma) + Q_{i-1}^o(\sigma).
\end{equation}
\end{proposition}

\begin{proof}
Equation~\eqref{qqoqbar} holds since there are two (disjoint) events that ensure winning after examining the current candidate, i.e., (a) the current candidate is the best and (b) the current candidate is not the best but we identify the best candidate at a later time with a best strategy after rejecting the current candidate. In the first case, the probability of successfully identifying the best candidate is $Q_1(\sigma)$; in the second case, the number of available selections decreases by one and the corresponding probability is $Q_{i-1}^o(\sigma)$.
\end{proof}

Following an approach suggested by Jones~\cite{jones2020weighted}, we make use \textit{prefix trees} which naturally capture the inclusion relationships between prefixes of permutations. The concept is best described by an illustrative example, shown in Figure~\ref{tree-1} for $S_4$. The correspondence between sub-trees/sub-forests is crucial for the proof of Lemma~\ref{qi} and~\ref{only length}. 

\begin{defn}
Let $T$ be the tree capturing the inclusion relationships between prefixes of permutations of length at most $N$. In other words, for $V$ being the collection of all permutations of length at most $N$, we let $T = (V, E)$ be such that if $\sigma, \tau \in V$ and $\sigma$ is a prefix of $\tau$ with $|\sigma| = |\tau| - 1$, then we have an edge $\sigma \tau \in E$. We define $\bar{T}(\sigma)$ to be the subtree in $T$ comprising $\sigma$ and its descendants and let $T^o(\sigma) = \bar{T}(\sigma) - \sigma$ be the forest obtained by deleting $\sigma$ from $\bar{T}(\sigma)$. 
\end{defn}

In Figure~\ref{tree-1}, if $\sigma = [12]$, then $\bar{T}(\sigma)$ is the subtree induced by the vertices $$\{[12], [123], [132], [231], [1234], [1243], [1342], [2341], [1324], [1423], [1432], [2431], [2314], [2413], [3412], [3421]\}$$  and  $T^o(\sigma)$ is the forest induced by the vertices $$\{[123], [132], [231], [1234], [1243], [1342], [2341], [1324], [1423], [1432], [2431], [2314], [2413], [3412], [3421]\}.$$

Let $F_1$ be the sub-forest obtained by deleting the vertex $[12]$ in the tree induced by $[12]$ and its descendants, let $F_2$ be the sub-forest obtained by deleting the vertex $[21]$ in the tree induced by $[21]$ and its children. Then, there is a bijection between $F_1$ and $F_2$ which preserves all the probabilities used for evaluating the selection strategies.

\begin{defn}
We say that a prefix $\sigma$ is {\em type $i$-positive} if $Q_i(\sigma) \ge Q^o_i(\sigma)$ and {\em type $i$-negative} otherwise, for $i \in \{1, \ldots, s\}$.
\end{defn}

We show next that the probabilities $Q_i^o, Q_i, \bar{Q}_i$ for each $i \in \{1, \ldots, s\}$ can be calculated (pre-calculated) using a sequential procedure (backward induction). Based on this result, we will find the winning probability in Section~\ref{mallows} by solving a few well-defined recurrence relation.

\begin{proposition}\label{probs-qiqio}
Let $\tau$ be any permutation of length $k$, where $1 \le k \le N$. The probabilities $Q_i^o(\tau), Q_i(\tau), \bar{Q}_i(\tau)$ for each $i \in \{1, \ldots, s\}$ can be computed recursively.
\end{proposition}

\begin{proof}
In order to compute the probabilities $Q_i^o(\tau), Q_i(\tau)$ for each $i \in \{1, \ldots, s\}$, we use a double-induction on the subscript $i$ and the length of the prefix. 

\textbf{Base case for outer induction on $i$:} We first establish the base case for $i = 1$. By Remark~\ref{large}, the permutations of length $N$ are type $1$-positive, which establishes the base case for the induction on the length of a prefix. More precisely, for a permutation $\tau$ of length $N$, if $\tau(N) = N$ then $Q_1(\tau) = \bar{Q}_1(\tau) = 1$ and $Q^o_1(\tau) = 0$; if $\tau(N) < N$ then $Q_1(\tau) = Q^o_1(\tau) = \bar{Q}_1(\tau) = 0$. 

Assume that the probabilities $Q_1, Q^o_1, \bar{Q}_1$ for permutations of length longer than $k$, $1 \le k \le N-1$, are already known. We show that $Q_1(\tau), Q_1^o(\tau)$, and $\bar{Q}_1(\tau)$ can then also be determined for a length-$k$ permutation $\tau$. By Equation~\eqref{basicq0}, the value of $Q_1(\tau)$ can be obtained by finding a fraction with denominator equal to the sum of $\theta^{c(\pi)}$ over all $\pi \in S_N$ that are $\tau$-prefixed (i.e., $SD(\tau)$) and the numerator equal to the sum of $\theta^{c(\pi)}$ over all $\pi \in S_N$ that are $\tau$-winnable; those values are available since $\theta$ and the statistic $c$ (Kendall distance in our model) are known. By Equation~\eqref{basicqio}, the probability $Q^o_1(\tau)$ can be obtained from $Q^o_1(\tau) = \xor_{j = 1}^{k+1} \bar{Q}_1(\lambda_{j}(\tau))$. Since each $\lambda_{j}(\tau)$ has length larger than that of $\tau$, each of the $ \bar{Q}_1(\lambda_{j}(\tau))$ is already available according to the inductive hypothesis. The $\bar{Q}_1(\tau)$ probabilities can be determined from $\bar{Q}_1(\tau) = \max\{Q_1(\tau), Q_1^o(\tau)\}$, where $Q_1(\tau)$ and $Q_1^o(\tau)$ are known.


\textbf{Main proof following the base case:} Assume now that we know the $Q_q, Q^o_q, \bar{Q}_q$ probabilities for each $1 \le q \le i-1$, $i \in \{2, \ldots, s\}$, and for every permutation in $\bigcup\limits_{j = 1}^{N} S_j$. We prove the claimed result for $i$. The probabilities $Q_i, \bar{Q}_i$ of prefixes of length $N$ take either the value $1$ or $0$ depending on whether the last position has value $N$, while the probabilities $Q^o_i$ are all $0$. This serves as the base case for the inner induction argument on the length of the prefixes. 

Let $\tau$ be a permutation of length $k$, where $1 \le k \le N-1$. Given the probabilities $Q_i, Q_i^o, \bar{Q}_i$ for prefixes of lengths greater than $k$, the $Q_i^o(\tau)$ probabilities for prefixes $\tau$ of length $k$ can be obtained via  $Q_i^o(\tau) = \xor_{j = 1}^{k+1} \bar{Q}_i(\lambda_{j}(\tau))$ from Equation~\eqref{basicqio}; the probabilities $\bar{Q}_i(\lambda_{j}(\tau))$ are known by the inductive hypothesis. Moreover, we can find $Q_i(\tau)$ by Proposition~\ref{keyidea}, i.e., $Q_i(\tau) = Q_1(\tau) + Q_{i-1}^o(\tau)$, where $Q_1(\tau)$ is known from the base case analysis and $Q_{i-1}^o(\tau)$ is available based on the inductive hypothesis. Finally, $\bar{Q}_i(\tau)$ can be found using the definition $\bar{Q}_i(\tau) = \max\{Q_i(\tau), Q^o_i(\tau) \}$.
\end{proof}

Recall that by Equation~\eqref{basicq0}, $Q_1(\sigma)$ can be written as a fraction with denominator $SD(\sigma)$ and numerator equal to the sum of $\theta^{c(\pi)}$ over all $\pi$ that are $\sigma$-winnable. In Propositions~\ref{easyexpansion} below we show that the probabilities $Q_i(\sigma)$, where $2 \le i \le s$, and $Q^o_i(\sigma)$, where $1 \le i \le s$, can also be expressed as fractions with the standard denominator $SD(\sigma)$.

\begin{proposition}\label{easyexpansion}
For each $1 \le i \le s$ and a permutation $\sigma$ of length $\ell$ with $\ell \le N-1-i$, there exists a collection of $\sigma$-prefixed permutations $\Gamma_{\sigma, i}$ such that each $\mu \in \Gamma_{\sigma, i}$ is of length larger than $|\sigma|$ and type $i$-positive. Furthermore, the set $\Gamma_{\sigma, i}$ is $1$-minimal, and $Q^o_i(\sigma) = \xor_{\mu \in \Gamma_{\sigma, i}} Q_i(\mu)$, i.e.,
\begin{equation}\label{qiqo-1}
Q^o_i(\sigma) \cdot SD(\sigma) = \sum\limits_{\mu \in \Gamma_{\sigma, i}} Q_i(\mu) \cdot SD(\mu) \quad \text{ and } \quad SD(\sigma) = \sum\limits_{\mu \in \Gamma_{\sigma,i}} SD(\mu). 
\end{equation}
Furthermore,
\begin{equation}\label{qisd-1}
Q_i(\sigma) \cdot SD(\sigma) = Q_1(\sigma) \cdot SD(\sigma) + \sum\limits_{\mu \in \Gamma_{\sigma, i-1}} Q_{i-1}^o(\mu) \cdot SD(\mu).
\end{equation}
\end{proposition}

\begin{proof}
The case $i = 1$ in Equation~\eqref{qiqo-1} was analyzed in~\cite{jones2020weighted}. Since $Q_2(\sigma) = Q_1(\sigma) + Q^o_1(\sigma)$, there is a set $\Gamma_{\sigma,1}$ such that
$$Q_2(\sigma) \cdot SD(\sigma) = Q_1(\sigma) \cdot SD(\sigma) + Q^o_1(\sigma) \cdot SD(\sigma) =  Q_1(\sigma) \cdot SD(\sigma) + \sum\limits_{\mu \in \Gamma_{\sigma,1}} Q_1(\mu) \cdot SD(\mu),$$
where $\Gamma_{\sigma,1}$ is $1$-minimal and consists of type $1$-positive permutations of length $> |\sigma|$. 

After making a decision on the $|\sigma|^{\text{th}}$ candidate, an optimal strategy will examine the children of $\sigma$  in the prefix tree, i.e. $\lambda_{1}(\sigma), \ldots, \lambda_{\ell+1}(\sigma)$, and then make a decision that leads to the largest probability of winning. We present the following algorithm that prove the part of the proposition pertaining to $Q^o_i(\sigma)$. 

\begin{itemize}
\item[Initialization step:] Let $\Gamma_i = \varnothing$ and $B = \{\lambda_{1}(\sigma), \ldots, \lambda_{\ell+1}(\sigma)\}$. 

We repeat the Main step until the process terminates.

\item[Main step:] Check if $B = \varnothing$; if yes, stop and return the set $\Gamma_i$; if no, then do the following: Pick an arbitrary permutation $\phi \in B$, say of length $q$, with $|\sigma| < q \le N$; check if $\phi$ is both eligible and type $i$-positive ($Q_i(\phi) \ge Q_i^o(\phi)$); if yes, set $\Gamma_i = \Gamma_i \cup \phi$ and $B = B - \phi$; if no, do not update $\Gamma_i$ and let $B = (B - \phi) \cup \bigcup\limits_{j = 1}^{q+1} \lambda_{j}(\phi)$. Note that the probabilities $Q_i(\phi)$ and $Q_i^o(\phi)$ are known by Proposition~\ref{probs-qiqio}.
\end{itemize}

Since the permutations of length $N$ are type $i$-positive for each $1 \le i \le s$, the algorithm eventually terminates. By the criteria on the main step of the algorithm, it will produce a set $\Gamma_i$ of type $i$-positive eligible permutations that is $1$-minimal and each of the $\gamma \in \Gamma_i$ has length larger than $|\sigma|$. At the end of the process, $B$ is an empty set. This follows from two observations.

\textbf{Observation (i):} There is no pair of elements $\alpha,\beta \in \Gamma_i$ such that $\alpha$ is a prefix of $\beta$, i.e., $\Gamma_i$ contains $1$-minimal prefixes, since otherwise the sub-forest $T^o(\alpha)$ will not be processed by the algorithm and it will be impossible for $\beta$ to be selected for inclusion in $\Gamma_i$.

\textbf{Observation (ii):} Since we choose a permutation only if it is type $i$-positive and eligible, every permutation in $\Gamma_i$ is type $i$-positive and eligible. 

Furthermore, by the main step of the algorithm and the induction hypothesis, 
$$Q^o_i(\sigma) \cdot SD(\sigma) = \sum\limits_{\gamma \in \Gamma_i} Q_i(\gamma) \cdot SD(\gamma) \quad \text{ and } \quad SD(\sigma) = \sum\limits_{\gamma \in \Gamma_i} SD(\gamma).$$
Equivalently, if we divide by $SD(\sigma)$ on both sides, we obtain
$$
Q^o_i(\sigma) = \xor_{\gamma \in \Gamma_i} Q_i(\gamma).
$$

To prove the corresponding formula for $Q_{i+1}(\sigma)$, note that $Q_{i+1}(\sigma) = Q_1(\sigma) + Q^o_i(\sigma)$ and invoke the result of~\eqref{qiqo-1} for $Q_i^o(\sigma)$. 
\end{proof}

Given the probabilities $Q_i^o, Q_i, \bar{Q}_i$ for all permutations and $i \in \{1, \ldots, s\}$, we describe next a procedure for finding an optimal strategy and its corresponding strike set.

\begin{theorem}\label{winprob-2}
There exists an $s$-strike set $A$ which can be partitioned as $A_{s} \cup \cdots \cup A_1,$  where each $A_i$ is a set of type $i$-positive $1$-minimal permutations, $1 \le i \le s$. The maximum probability of winning equals
$\xor_{\sigma \in A_{s}} Q_{s}(\sigma)$. Expressed in terms of the probability $Q_1$, the maximum probability reads as 
\begin{equation}\label{winningprob}
\sum\limits_{\sigma \in A} Q_1(\sigma) \cdot SD(\sigma) \bigg/  \sum\limits_{\pi \in S_N} \theta^{c(\pi)}.
\end{equation}
\end{theorem}

\begin{proof}
The optimal winning probability is $\bar{Q}_{s}([1])$. We start by checking whether $Q_s([1]) \ge Q_s^o([1])$. 

\textbf{Case 1:} $Q^o_s([1]) > Q_s([1])$. Then the strike set $A_s$ corresponds to the set $\Gamma_{[1],s}$ of Proposition~\ref{easyexpansion} and the winning probability equals $Q^o_s([1])$. By Equation~\eqref{qiqo-1} in Proposition~\ref{easyexpansion}, we need to examine each of the permutations in $A_s$ in order to find $Q^o_s([1])$. 

\textbf{Case 2:} $Q_s([1]) \ge Q_s^o([1])$. Then the strike set $A_s = \{[1]\}$ and the winning probability equals $Q_s([1])$. 

For both Case 1 and Case 2, we apply Equation~\eqref{qisd-1} to each $\mu \in A_s$ and then find $Q^o_{s-1}(\mu)$ for each $\mu \in A_s$. We apply Proposition~\ref{easyexpansion} again and obtain a strike set $A_{s-1}$. We can 
use this process to find $A_{s-2}$, then $A_{s-3}$, $\ldots$, and finally $A_1$. Furthermore, it follows that each $A_i$ is type-i-positive and 1-minimal, the set $A$ is an $s$-strike set, and Equation~\eqref{winningprob} holds.
\end{proof}

\begin{figure}
\begin{center}
  \includegraphics[scale=0.65]{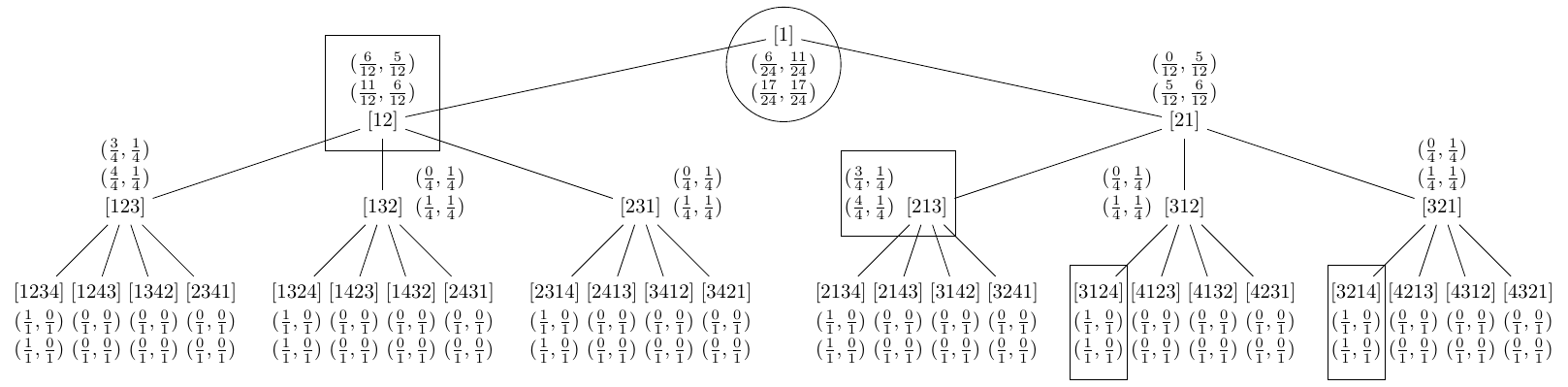}
 \caption{The prefix tree, $Q_1, Q^o_1, Q_2, Q^o_2$ probabilities , and a $2$-strike set when $N=4$ and $\theta = 1$.}\label{tree-1}
\end{center}
\vspace{-10mm}
\end{figure}

\vspace{-5mm}
\subsection{Properties of the $Q,Q^o,\bar{Q}$ probabilities}

\begin{defn}\label{prefixequivalentdef}
A statistic $c$ is said to be {\em prefix-equivariant} if it satisfies $c(\pi) - c(g_\tau \cdot \pi) = c([12 \cdots k]) - c(\tau)$ for all prefixes $\tau$ and all $\pi \in \bar{T}([12 \cdots k])$, where $k$ is the length of $\tau$.
\end{defn}

Intuitively, the condition $c(\pi) - c(g_\tau \cdot \pi) = c([12 \cdots k]) - c(\tau)$ enforces the statistic $c$ to have the property that permuting the first $k$ entries does not create or remove any ``structures'' counted by the statistic $c$ that exist at positions larger than $k$. The condition also ensures many useful properties for the probabilities $Q, Q^o, \bar{Q}$, including invariance under local changes (say, permuting the elements in a prefix). Prefix equivalence will be extensively used in the proofs of the theorems and lemmas to follow. It is straightforward to check that the Kendall statistic is prefix-equivariant.

\begin{defn}
Define $g_\tau$ to be an action on the symmetric group that arranges (permutes) a prefix $\sigma = [12 \cdots k]$ to some other prefix $\tau$ of the same length $k$. This action can be extended to $\bar{T}(\sigma)$ and is denoted by $g_\tau \cdot \pi$: It similarly permutes the first $k$ entries and fixes the remaining entries of $\pi \in \bar{T}(\sigma)$. It is easy to see that $g_\tau$ is a bijection from $\bar{T}(\sigma)$ to $\bar{T}(\tau)$.
\end{defn}

\begin{example}
Let $N = 6$, $\tau = [132]$, and $\pi = [245361]$. Clearly, $\pi$ is $[123]$-prefixed. Then $g_{\tau} \cdot \pi = [254361]$.
\end{example}








We show in Lemma~\ref{qi} that the probabilities $Q_i(\sigma), Q^o_i(\sigma), \bar{Q}_i(\sigma)$ only depend on the length and the value of the last position of $\sigma$, where $1 \le i \le s$.

\begin{lemma}\label{qi}
Suppose $c$ is a prefix-equivariant statistic. For all permutations $\tau$ of length $k \le N-1$, the probabilities $Q_i$ for each $i \in \{1, \ldots, s\}$ are preserved under the restricted bijection $g_{\tau}: T^o(12 \cdots k) \to T^o(\tau)$. Furthermore, if $\tau$ is eligible then $Q_i([12 \cdots k]) = Q_i(\tau)$. Consequently, for $\sigma \in T^o(12 \cdots k)$, 

(a) The probabilities $Q_i^o(\sigma)$ are preserved by $g_{\tau}$;

(b) The probabilities $\bar{Q}_i(\sigma)$ are preserved by $g_{\tau}$;

(c) If $\sigma$ and $\tau$ are eligible, we have that $\sigma$ is type $i$-positive if and only if $g_{\tau} \cdot \sigma$ is type $i$-positive. 

Additionally, for $\sigma = [12 \cdots k]$ and eligible permutation $\tau$ of length $k$, the statements (a),(b),(c) hold.
\end{lemma}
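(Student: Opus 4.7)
The plan is to prove this by a nested induction: outer induction on the index $i \in \{0,1,\ldots,s-1\}$ and, within each $i$, an inner induction on the length of the prefix, decreasing from $N$ down to $k+1$ (with the case $\sigma = [12\cdots k]$ handled at the end under the eligibility hypothesis on $\tau$). The base case $i = 0$ is exactly Lemma~\ref{q0}. For the inductive step, I assume the claim for $0,1,\ldots,i-1$; in particular, $Q^o_{i-1}$ is preserved under $g_\tau$ on all of $T^o([12\cdots k])$ (and at the root if $\tau$ is eligible). The key structural observation underlying everything is that $g_\tau$ permutes only the first $k$ entries, so on any $\sigma \in \bar{T}([12\cdots k])$ of length $\ell-1 \ge k$, the children are mapped bijectively by $g_\tau \cdot f_j(\sigma) = f_j(g_\tau \cdot \sigma)$. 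Combined with prefix-equivariance of $c$, this yields $SD(\rho)/SD(g_\tau \cdot \rho) = \theta^{c([12\cdots k]) - c(\tau)}$ for every descendant $\rho$, a constant that cancels in every $\bigoplus$-expression.

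For the inner induction, at length $N$ one has $Q^o_i(\sigma) = 0$, $Q_i(\sigma) = Q_0(\sigma)$, and $\bar{Q}_i(\sigma) = Q_0(\sigma)$, each preserved by Lemma~\ref{q0}. Descending by one in length, I apply the three recurrences from Proposition~\ref{basicexpansion} and Equation~\eqref{qqoqbar} in the following order. First, $Q_i(\sigma) = Q_0(\sigma) + Q^o_{i-1}(\sigma)$ is preserved because $Q_0$ is preserved by Lemma~\ref{q0} and $Q^o_{i-1}$ is preserved by the outer inductive hypothesis \emph{at the same length}. Second, $Q^o_i(\sigma) = \bigoplus_{j=1}^{\ell} \bar{Q}_i(f_j(\sigma))$ is preserved because each $\bar{Q}_i(f_j(\sigma))$ is preserved by the inner inductive hypothesis (the children are strictly longer) and the $SD$-weights transform by the uniform constant factor above, so the $\bigoplus$-combination is preserved. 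Third, $\bar{Q}_i(\sigma) = \max\{Q_i(\sigma), Q^o_i(\sigma)\}$ is preserved as the maximum of two preserved quantities.

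Consequences (a), (b), (c) then fall out immediately: (a) is precisely step two, (b) is step three, and (c) follows since the inequality $Q_i(\sigma) \ge Q^o_i(\sigma)$ defining type $i$-positivity is preserved whenever both sides are. To lift the result from $T^o([12\cdots k])$ to the root $[12\cdots k]$ under the assumption that $\tau$ is eligible, I invoke the eligibility clause of Lemma~\ref{q0} to get $Q_0([12\cdots k]) = Q_0(\tau)$, and then re-apply the same three recurrences at the root: the children $f_j([12\cdots k])$ lie in $T^o([12\cdots k])$ and have their probabilities preserved by the previous step, so $Q^o_i$, $Q_i$, and $\bar{Q}_i$ at the root match on the two sides.

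The main technical subtlety is that the three recurrences must be applied in the correct order within each length, and that the $\bigoplus$-combination step genuinely needs the uniform scaling of the standard denominators $SD$ under $g_\tau$; without prefix-equivariance of $c$, the weights in the $\bigoplus$-average would change by $j$-dependent factors and preservation would fail. Beyond this, the argument is a careful but essentially mechanical bookkeeping of the nested induction, modeled on the proof of Lemma~\ref{q0} but now carrying the extra parameter $i$.
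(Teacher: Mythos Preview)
Your proof is correct and follows essentially the same approach as the paper: nested induction on $i$ (with Lemma~\ref{q0} as the base case) and on the length of the prefix, using the recurrences $Q_i = Q_0 + Q^o_{i-1}$, the child expansion for $Q^o_i$, and $\bar{Q}_i = \max\{Q_i, Q^o_i\}$. The only cosmetic difference is that for the $Q^o_i$ step the paper routes through the algorithmic $\Gamma_i$-expansion of Proposition~\ref{expansion} (arguing that the algorithm makes the same choices on both sides), whereas you use the one-step recurrence $Q^o_i(\sigma) = \bigoplus_j \bar{Q}_i(f_j(\sigma))$ from Proposition~\ref{basicexpansion} directly; your version is slightly cleaner and makes the role of the uniform $SD$-scaling under $g_\tau$ more explicit.
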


\begin{proof}
The proof proceeds by induction on the subscript $i$ of the probabilities $Q_i, Q^o_i, \bar{Q}_i$. The case $i = 1$ was analyzed in Theorem 3.5 of~\cite{jones2020weighted}. Assume that the result holds for the probabilities $Q_m, Q^o_m, \bar{Q}_m$ with $m \le i-1$. We next prove the claimed result for $Q_i, Q^o_i, \bar{Q}_i$, where $1 \le i \le s-1$. 

Let $\sigma \in T^o(12 \cdots k)$. We have 
$$Q_i(g_{\tau} \cdot \sigma) = Q_1(g_{\tau} \cdot \sigma) + Q_{i-1}^o(g_{\tau} \cdot \sigma) = Q_1(\sigma) + Q_{i-1}^o(\sigma) = Q_i(\sigma).$$ 
If $\tau$ is eligible then we apply the argument to the restricted bijection $T^o([12 \cdots (k-1)]) \to T^o(\tau|_{k-1})$. 

\begin{claim}\label{Q^o_i}
For $\sigma$ of length $k < \ell \le N$, $Q^o_i(g_{\tau} \cdot \sigma) = Q_i^o(\sigma)$. 
\end{claim}
\begin{proof}
We use induction on the length of $\sigma$. When $\sigma$ has length $N$ it holds that $Q^o_i(g_{\tau}\cdot \sigma) = Q^o_i(\sigma) = 0 $.  Assume now that statement (a) holds for prefixes of length at least $\ell+1$. We next present an argument for the case when $\sigma$ is of length $\ell$, where $k < \ell \le N-1$.

By the already proved result for the probability $Q_i$ and the induction hypothesis, the probabilities $Q_i$ and $Q^o_i$ for a permutation $\mu$ of length larger than $\ell$ only depend on the length of $\mu$ and the value of the last position of $\mu$. By the Main Step of the algorithm described in the proof of Proposition~\ref{easyexpansion}, if we process $\sigma$ and end up obtaining a set $\Gamma_i = \{\gamma_1, \ldots, \gamma_{r}\}$, then when we process $g_{\tau} \cdot \sigma$ we end up obtaining the set $\Gamma_i' = \{g_{\tau} \cdot \gamma_1, \ldots, g_{\tau} \cdot \gamma_{r}\}$. Therefore, by Proposition~\ref{easyexpansion}, 
$$
Q_i^o(g_{\tau} \cdot \sigma) = Q_i(g_{\tau} \cdot \gamma_1) \xor Q_i(g_{\tau} \cdot \gamma_2) \xor \cdots \xor Q_i(g_{\tau} \cdot \gamma_r)
$$
$$
= Q_i(\gamma_1) \xor Q_i(\gamma_2) \xor \cdots \xor \cdot Q_i(\gamma_r) = Q^o_i(\sigma). \qedhere
$$
\end{proof}

By Claim~\ref{Q^o_i}, statement (a) is true. Claims (b) and (c) can be established from the previous results and the fact $\bar{Q}_i(\sigma) = \max\{Q_i(\sigma), Q^o_i(\sigma)\}$. This completes the main part of the proof.

The statements (a), (b), and (c) hold for $\sigma = [12 \cdots k]$ and a permutation $\tau$ which is eligible and of length $k$, since we can apply the above argument to $T^o(12 \cdots (k-1)) \to T^o(\tau|_{k-1})$.
\end{proof}

We prove next that $Q^o_i(\sigma)$ depends on the length of $\sigma$ but not on the value of the last position of $\sigma$. 

\begin{lemma}\label{only length}
For all $1 \le i \le s$, the probability $Q^o_i(\sigma)$ only depends on the length of $\sigma$.
\end{lemma}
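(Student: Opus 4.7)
The plan is to argue by downward induction on the length $k = |\sigma|$. The base case $k = N$ is immediate, since $Q^o_i(\sigma) = 0$ for all $\sigma \in S_N$. For the inductive step, assuming the claim for length $k+1$, I would invoke Proposition~\ref{basicexpansion} to write
$$Q^o_i(\sigma) \cdot SD(\sigma) \;=\; \sum_{j=1}^{k+1} \bar{Q}_i(f_j(\sigma))\, SD(f_j(\sigma)),$$
and then show that both $\bar{Q}_i(f_j(\sigma))$ and the weighting ratio $SD(f_j(\sigma))/SD(\sigma)$ depend only on $k$, $j$ (and $i$). This would imply that $Q^o_i(\sigma)$ itself depends only on $k$ and $i$.

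For the numerator, I would apply Lemma~\ref{qi} to $f_j([12\cdots k]) \in T^o([12\cdots k])$ under the bijection $g_\sigma$, which yields $Q_i(f_j(\sigma)) = Q_i(f_j([12\cdots k]))$, a quantity depending only on $j$ and $k$. The inductive hypothesis then gives that $Q^o_i(f_j(\sigma))$ depends only on $k+1$, and combining the two shows that $\bar{Q}_i(f_j(\sigma)) = \max\{Q_i(f_j(\sigma)), Q^o_i(f_j(\sigma))\}$ depends only on $j$ and $k$.

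For the ratio $SD(f_j(\sigma))/SD(\sigma)$, I would use prefix-equivariance of the Kendall statistic. The bijection $g_\sigma:\bar{T}([12\cdots k]) \to \bar{T}(\sigma)$ together with the identity $c(g_\sigma\cdot\pi) = c(\pi) + c(\sigma)$ (which follows from Definition~\ref{prefixequivalentdef} and $c([12\cdots k]) = 0$) would yield, after a change of summation variable, $SD(\sigma) = \theta^{c(\sigma)}\,SD([12\cdots k])$. I would then verify directly that $f_j(\sigma) = g_\sigma\cdot f_j([12\cdots k])$: both sides consist of the values $\{1,\ldots,k+1\}\setminus\{j\}$ arranged in the relative order of $\sigma$, followed by the value $j$. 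Applying the same argument at length $k+1$ would yield $SD(f_j(\sigma)) = \theta^{c(\sigma) + c(f_j([12\cdots k]))}\,SD([12\cdots (k+1)])$, so the factor $\theta^{c(\sigma)}$ cancels in the ratio, which depends only on $j$ and $k$.

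The main obstacle will be establishing the identity $f_j(\sigma) = g_\sigma\cdot f_j([12\cdots k])$ and cleanly extracting the $\sigma$-dependence of $SD$ from prefix-equivariance; once this is in hand, the inductive step follows by assembling the three independence statements above into a single expression $Q^o_i(\sigma) = \sum_{j=1}^{k+1} V_{k,j,i}\cdot\rho_{k,j}$, whose summands depend only on $k$, $j$, $i$.
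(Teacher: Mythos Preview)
Your proof is correct and follows essentially the same route as the paper: expand $Q^o_i(\sigma)$ via Proposition~\ref{basicexpansion} and transport each child term back to the identity prefix using Lemma~\ref{qi}. Two minor remarks: the downward induction on $|\sigma|$ is unnecessary, since Lemma~\ref{qi}(b) already gives $\bar{Q}_i(f_j(\sigma)) = \bar{Q}_i(f_j([12\cdots k]))$ directly without splitting into $Q_i$ and $Q^o_i$; on the other hand, your explicit verification that $SD(f_j(\sigma))/SD(\sigma)$ is independent of $\sigma$ is a point the paper leaves implicit in its $\bigoplus$ chain, so your write-up is in that respect more complete.
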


\begin{proof}
We know $Q_i^o(\sigma) = 0$ for every $\sigma$ of length $N$. Let $\sigma'$ and $\sigma''$ be two permutations of length $k-1$, where $k \le N$. For each permutation $\sigma$ of length $k-1$, we define $\sigma_j$, $1 \le j \le k$, as in Definition~\ref{children}. Let $\phi = [12 \cdots (k-1)]$. By Lemma~\ref{qi}, and using the bijections $g_{\sigma'}: T^o(12 \cdots (k-1)) \to T^o(\sigma')$ and $g_{\sigma''}: T^o(12 \cdots (k-1)) \to T^o(\sigma'')$, we have 
$$Q^o_i(\sigma') = \bar{Q}_i(\lambda_{1}(\sigma')) \xor \cdots \xor \bar{Q}_i(\lambda_{k}(\sigma')) = \bar{Q}_i(\lambda_{1}(\phi')) \xor \cdots \xor \bar{Q}_i(\lambda_{k}(\phi'))$$
$$ = \bar{Q}_i(\lambda_{1}(\sigma'')) \xor \cdots \xor \bar{Q}_i(\lambda_{k}(\sigma'')) = Q^o_i(\sigma''). \qedhere$$
\end{proof}

In order to simplify our exposition, in Lemma~\ref{relation-2} and Corollary~\ref{cor-2}, we change the notation and let $Q_i(\sigma)$, $Q_i^o(\sigma)$, $\bar{Q}_i(\sigma)$ stand \emph{only} for the \emph{numerators} in the definition of the underlying probabilities, each with respect to the standard denominator $SD(\sigma)$. All equalities involving the changed probability notations hold when the original denominators agree.



\begin{lemma}\label{relation-2}
Let $\sigma = [12 \cdots (k-1)]$ and define $\bar{Q}_0 = 0$ for any permutation. For $1 \le i \le s$, one has 
$$Q_i^o(\sigma) = \bar{Q}_i(\lambda_{k}(\sigma)) + Q_i^o(\lambda_{k}(\sigma)) \cdot \sum\limits_{j=1}^{k-1} \theta^{c(\lambda_{j}(\sigma)) - c(\lambda_{k}(\sigma))} \quad \text{and} \quad Q_i(\sigma) = \bar{Q}_{i-1}(\lambda_{k}(\sigma)) + Q_i(\lambda_{k}(\sigma)) \cdot \sum\limits_{j=1}^{k-1} \theta^{c(\lambda_{j}(\sigma)) - c(\lambda_{k}(\sigma))}.$$
\end{lemma}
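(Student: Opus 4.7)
The plan is to treat the two identities in sequence, with the first serving as the key input for the second. I would start from Proposition~\ref{basicexpansion}, which in the new numerator convention gives $Q_i^o(\sigma) = \sum_{j=1}^{k} \bar{Q}_i(f_j(\sigma))$. Here each summand is the numerator with respect to its own denominator $SD(f_j(\sigma))$, and the left-hand side is the numerator with respect to $SD(\sigma) = \sum_j SD(f_j(\sigma))$, so the $\bigoplus$ expression simply sums numerators. I would then isolate the eligible child $f_k(\sigma)$ and focus on simplifying the $k-1$ non-eligible summands.

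For $j < k$, the permutation $f_j(\sigma)$ is not eligible because its last entry $j$ is not a left-to-right maximum; in particular the $k^{\text{th}}$ candidate cannot be the overall best, so $Q_0(f_j(\sigma)) = 0$ and $Q_i(f_j(\sigma)) = Q_0(f_j(\sigma)) + Q_{i-1}^o(f_j(\sigma)) = Q_{i-1}^o(f_j(\sigma))$. A brief monotonicity observation---any strategy valid with $i-1$ Genies after rejection is also valid with $i$ Genies (simply ignore the extra one), equivalently every $i$-strike set is also an $(i+1)$-strike set---yields $Q_{i-1}^o(\tau) \le Q_i^o(\tau)$ for every $\tau$, and therefore $\bar{Q}_i(f_j(\sigma)) = \max\{Q_i(f_j(\sigma)), Q_i^o(f_j(\sigma))\} = Q_i^o(f_j(\sigma))$.

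Next, Lemma~\ref{only length} states that the \emph{probability} $Q_i^o$ depends only on the length of its argument, so at the fixed length $k$ the numerators satisfy $Q_i^o(f_j(\sigma))/Q_i^o(f_k(\sigma)) = SD(f_j(\sigma))/SD(f_k(\sigma))$. Prefix-equivariance of $c$ (Definition~\ref{prefixequivalentdef}) makes $SD(\tau)$ proportional to $\theta^{c(\tau)}$ at fixed length, so this ratio equals $\theta^{c(f_j(\sigma))-c(f_k(\sigma))}$. Substituting and summing over $j<k$ produces the first identity
\[
Q_i^o(\sigma) = \bar{Q}_i(f_k(\sigma)) + Q_i^o(f_k(\sigma)) \sum_{j=1}^{k-1} \theta^{c(f_j(\sigma))-c(f_k(\sigma))}.
\]

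The second identity follows by elementary rearrangement. Using $Q_i(\sigma) = Q_0(\sigma) + Q_{i-1}^o(\sigma)$, I would expand $Q_0(\sigma) = Q_0(f_k(\sigma)) \cdot S$ via Lemma~\ref{relation-1}, where $S = \sum_{j=1}^{k-1}\theta^{c(f_j(\sigma))-c(f_k(\sigma))}$, and expand $Q_{i-1}^o(\sigma) = \bar{Q}_{i-1}(f_k(\sigma)) + Q_{i-1}^o(f_k(\sigma))\cdot S$ via Lemma~\ref{relation-1} when $i=1$ or via the first identity with $i$ replaced by $i-1$ when $i\ge 2$. Adding the two expansions and recognising that $Q_0(f_k(\sigma)) + Q_{i-1}^o(f_k(\sigma)) = Q_i(f_k(\sigma))$ delivers the claim. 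The main (modest) obstacle is cleanly justifying the monotonicity $Q_{i-1}^o \le Q_i^o$ and carefully distinguishing ``probability'' (which depends only on length) from ``numerator'' (which picks up a $\theta^{c}$ factor through $SD$); the remainder is routine algebra once the non-eligible terms have been collapsed onto $f_k(\sigma)$ via prefix-equivariance.
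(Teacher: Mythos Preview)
Your proposal is correct and follows essentially the same route as the paper: decompose $Q_i^o(\sigma)$ over the children, use non-eligibility of $f_j(\sigma)$ for $j<k$ together with the monotonicity $Q_{i-1}^o \le Q_i^o$ (which the paper records just before the lemma) to reduce $\bar{Q}_i(f_j(\sigma))$ to $Q_i^o(f_j(\sigma))$, and then invoke prefix-equivariance to pull out the $\theta^{c(f_j(\sigma))-c(f_k(\sigma))}$ factor; the second identity is obtained exactly as you say by expanding $Q_i(\sigma)=Q_0(\sigma)+Q_{i-1}^o(\sigma)$ via Lemma~\ref{relation-1} and the first identity. Your write-up is in fact slightly more careful than the paper's in distinguishing the cases $i=1$ versus $i\ge 2$ when expanding $Q_{i-1}^o(\sigma)$, and in routing the numerator ratio through Lemma~\ref{only length} rather than directly through the bijection $g_{f_j(\sigma)}$.
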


\begin{proof}
The case $i=1$ was proved in Theorem 3.6 of~\cite{jones2020weighted}. We first consider $Q_i^o(\sigma)$. Note that $\sigma$ has $k$ children ($\lambda_{1}(\sigma), \ldots, \lambda_{k}(\sigma)$) in the prefix tree. The permutation $\lambda_{k}(\sigma)$ itself is an eligible child thus $\bar{Q}_i(\lambda_{k}(\sigma))$ is the optimal probability for the subtree rooted at $\lambda_{k}(\sigma)$. The subtrees beneath each of the permutations $\lambda_{i}(\sigma)$, $i \in \{1, \ldots, k-1\},$ are isomorphic to $T^o(\lambda_{k}(\sigma))$ via the bijections $g_{\lambda_{i}(\sigma)}$. For each $\lambda_{i}(\sigma)$-prefixed $\pi' \in S_N$, we have to account for a factor of $\theta^{c(\lambda_{i}(\sigma))-c(\lambda_{k}(\sigma))}$. This is due to the fact that $\pi'$ corresponds to a $\lambda_{k}(\sigma)$-prefixed permutation of $\pi$ via $g_{\lambda_{i}(\sigma)},$ such that $\pi' = g_{\lambda_{i}(\sigma)} \cdot \pi$ and $i \in \{1, \ldots, k-1\}$. 

Since $c$ is prefix-equivariant, 
$$\theta^{c(\pi')} = \theta^{c(\pi) - (c(\lambda_{i}(\sigma)) - c(\lambda_{k}(\sigma)))}.$$ 
Therefore, 
$$Q_i^o(\sigma) = \bar{Q}_i(\lambda_{1}(\sigma)) + \ldots + \bar{Q}_i(\lambda_{k-1}(\sigma)) + \bar{Q}_i(\lambda_{k}(\sigma)) $$ $$= Q^o_i(\lambda_{1}(\sigma)) + \ldots + Q^o_i(\lambda_{k-1}(\sigma)) + \bar{Q}_i(\lambda_{k}(\sigma)) = Q_i^o(\lambda_{k}(\sigma)) \cdot \sum\limits_{j=1}^{k-1} \theta^{c(\lambda_{j}(\sigma)) - c(\lambda_{k}(\sigma))} + \bar{Q}_i(\lambda_{k}(\sigma)).$$
By Equation~\eqref{qqoqbar}, we have 
$$Q_i(\sigma) = Q_1(\sigma) + Q_{i-1}^o(\sigma) = Q_1(\lambda_{k}(\sigma))  \cdot \sum\limits_{j = 1}^{k-1} \theta^{c(\lambda_{j}(\sigma)) - c(\lambda_{k}(\sigma))} + \bar{Q}_{i-1}(\lambda_{k}(\sigma)) + Q_{i-1}^o(\lambda_{k}(\sigma)) \cdot \sum\limits_{j = 1}^{k-1} \theta^{c(\lambda_{j}(\sigma)) - c(\lambda_{k}(\sigma))}$$
$$ = Q_{i}(\lambda_{k}(\sigma)) \cdot \sum\limits_{j = 1}^{k-1} \theta^{c(\lambda_{j}(\sigma)) - c(\lambda_{k}(\sigma))}+ \bar{Q}_{i-1}(\lambda_{k}(\sigma)). \qedhere$$
\end{proof}

We observe that $Q^o_i(\sigma) \ge Q_{i-1}^o(\sigma)$, $Q_i(\sigma) \ge Q_{i-1}(\sigma)$, $\bar{Q}_i(\sigma) \ge \bar{Q}_{i-1}(\sigma)$, and $Q_i(\sigma) \ge Q_{i-1}^o(\sigma)$ hold true for every $\sigma \in \bigcup\limits_{k=1}^{N}S_k$ and $1 \le i \le s$. By Lemma~\ref{relation-2}, we show in Corollary~\ref{cor-2} that if an eligible permutation is negative then all eligible permutations of shorter length are negative as well.

\begin{corollary}\label{cor-2}
For increasing prefixes $\sigma = [12 \cdots (k-1)]$ and $\lambda_{k}(\sigma) = [12 \cdots k]$, we have that if $\lambda_{k}(\sigma)$ is type $i$-negative then $\sigma$ is type $i$-negative, where $1 \le i \le s$.
\end{corollary}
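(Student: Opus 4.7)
The plan is to mimic the proof of Corollary~\ref{cor-1}, now using Lemma~\ref{relation-2} instead of Lemma~\ref{relation-1}. Let $\tau := f_{k}(\sigma) = [12\cdots k]$ and let $S := \sum_{j=1}^{k-1}\theta^{c(f_{j}(\sigma))-c(f_{k}(\sigma))}$, which is strictly positive because $\theta>0$ and $k\ge 2$. Assume $\tau$ is type $i$-negative, so $Q_i(\tau) < Q_i^o(\tau)$ and therefore $\bar{Q}_i(\tau)=Q_i^o(\tau)$. Applying Lemma~\ref{relation-2} I would obtain
$$Q_i^o(\sigma)=\bar{Q}_i(\tau)+Q_i^o(\tau)\cdot S=Q_i^o(\tau)\,(1+S),$$
$$Q_i(\sigma)=\bar{Q}_{i-1}(\tau)+Q_i(\tau)\cdot S < \bar{Q}_{i-1}(\tau)+Q_i^o(\tau)\cdot S,$$
where the strict inequality uses $S>0$ together with the hypothesis $Q_i(\tau)<Q_i^o(\tau)$. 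Hence it would suffice to show that $\bar{Q}_{i-1}(\tau)\le Q_i^o(\tau)$, for then
$$Q_i(\sigma) < \bar{Q}_{i-1}(\tau)+Q_i^o(\tau)\cdot S \le Q_i^o(\tau)+Q_i^o(\tau)\cdot S = Q_i^o(\sigma),$$
which is exactly the statement that $\sigma$ is type $i$-negative.

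The remaining point is therefore the monotonicity $\bar{Q}_{i-1}(\tau)\le \bar{Q}_i(\tau)$, since combined with $\bar{Q}_i(\tau)=Q_i^o(\tau)$ this yields the required inequality. Conceptually, this monotonicity is immediate from the probabilistic interpretation of $\bar{Q}_j$: a decision maker holding $i$ Genie queries can always emulate any strategy available with $i-1$ queries by simply discarding one Genie, hence cannot do worse. Because the quantities $\bar{Q}_{i-1}(\tau)$ and $\bar{Q}_i(\tau)$ (in their current numerator form) share the common standard denominator $SD(\tau)$, the inequality on probabilities transfers verbatim to the numerators.

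If a purely combinatorial, non-probabilistic justification is preferred, the plan is to prove $\bar{Q}_{i-1}(\mu)\le \bar{Q}_i(\mu)$ for every prefix $\mu$ by induction on $N-|\mu|$. The base case $|\mu|=N$ is trivial since both sides coincide with $Q_0(\mu)\in\{0,1\}$. For the inductive step, the recursion $Q_i(\mu)=Q_0(\mu)+Q_{i-1}^o(\mu)$ from~\eqref{qqoqbar} together with Proposition~\ref{basicexpansion}, which expresses $Q_i^o(\mu)$ as a $\bigoplus$ of the $\bar{Q}_i(f_j(\mu))$'s, lets one propagate the inequality from the children of $\mu$ to $\mu$ itself (using that $\max$ preserves the componentwise inequality and that $\bigoplus$ is monotone in each numerator for fixed denominators).

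I expect the only genuine obstacle to be making the monotonicity $\bar{Q}_{i-1}\le \bar{Q}_i$ rigorous within the combinatorial framework of the paper; once this is granted, the rest is a single-line algebraic manipulation of Lemma~\ref{relation-2}, exactly parallel to the derivation that established Corollary~\ref{cor-1}.
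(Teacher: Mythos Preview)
Your proposal is correct and follows essentially the same route as the paper: apply the two identities of Lemma~\ref{relation-2} and compare, using the hypothesis $Q_i(\tau)<Q_i^o(\tau)$ together with the monotonicity $\bar{Q}_{i-1}(\tau)\le \bar{Q}_i(\tau)$. The paper's proof is the single displayed inequality you describe; it simply asserts $\bar{Q}_i(f_k(\sigma))\ge \bar{Q}_{i-1}(f_k(\sigma))$ as an evident fact (the companion inequalities $Q_i\ge Q_{i-1}$, $Q_i^o\ge Q_{i-1}^o$, $Q_i\ge Q_{i-1}^o$ are recorded as an observation immediately afterward), so your extra inductive/probabilistic justification of this monotonicity is more than the paper itself provides, not a missing step.
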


\begin{proof}
The case $i=1$ was established in Corollary 3.7 of~\cite{jones2020weighted}. Suppose that $\sigma$ is type $i$-negative, i.e., such that $Q^o_i(\lambda_{k}(\sigma))>Q_i(\lambda_{k}(\sigma))$. By Lemma~\ref{relation-2} and $\bar{Q}_{i}(\lambda_{k}(\sigma)) \ge \bar{Q}_{i-1}(\lambda_{k}(\sigma))$, 
$$Q_i^o(\sigma) = Q_i^o(\lambda_{k}(\sigma)) \cdot \sum\limits_{j=1}^{k-1} \theta^{c(\lambda_{j}(\sigma)) - c(\lambda_{k}(\sigma))} + \bar{Q}_i(\lambda_{k}(\sigma)) >  Q_i(\lambda_{k}(\sigma)) \cdot \sum\limits_{j=1}^{k-1} \theta^{c(\lambda_{j}(\sigma)) - c(\lambda_{k}(\sigma))} + \bar{Q}_{i-1}(\lambda_{k}(\sigma)) = Q_i(\sigma). $$
\end{proof}
The probabilities $Q,Q^o,\bar{Q}$ henceforth refer to their original definition (with the denominators included). In words, Corollary~\ref{cor-2} asserts that each $Q_i(k) - Q_i^o(k)$ is a non-decreasing function of $k$. By Lemma~\ref{qi} and~\ref{only length}, if $\sigma$ is eligible, then the probabilities $Q_i(\sigma), Q_i^o(\sigma), \bar{Q}_i(\sigma)$ only depend on its length. Let $Q_i(k)$ denote the probability $Q_i$ of eligible permutations of length $k$, where $1 \le i \le s$. The probabilities $Q^o_i(k)$ and $\bar{Q}_i(k)$, where $1 \le i \le s$, are defined similarly.


\section{The optimal strategy}

By the proof of Theorem~\ref{winprob-2}, we start with checking if $Q_s([1]) \ge Q^o_s([1])$ while there are $s$ selections left. For the $i^{\text{th}}$ choice ($j = s+1-i$ selections left), $1 \le i \le s$, by the algorithm described in Proposition~\ref{easyexpansion}, we check if $\sigma$ is eligible and type $j$-positive, i.e., if $Q_j(\sigma) \ge Q^o_j(\sigma)$; if yes, we accept the current candidate and continue to the next selection (if there is one is left); if no, we reject the current candidate and continue our search; if there are no selections left, we terminate the process. By Corollary~\ref{cor-2}, we know each $Q_j(k) - Q_j^o(k)$ is a non-decreasing function of $k$, which allows us to formulate the optimal strategy.

\begin{theorem}\label{strategy}
Suppose that the probability distribution on $S_N$ is governed by a prefix-equivariant statistic (which includes the Kendall statistic) $c$.  For each fixed $\theta > 0$, an optimal strategy for our problem with $s$ selections is a positional $s$-thresholds strategy, i.e., there are $s$ numbers $0 \le k_1(\theta) \le k_2(\theta) \le \ldots \le k_s(\theta) \le N$ such that when considering the $i^{\text{th}}$ selection, where $1 \le i \le s$, we reject the first $k_{i}$ candidates, wait for the $(i-1)^{\text{th}}$ selection,  and then accept the next left-to-right maxima. 
\end{theorem}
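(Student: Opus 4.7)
The plan is to reduce the theorem to length-thresholds on increasing prefixes and then to read off the positional strategy from Theorem~\ref{winprob-2}. Combining Lemmas~\ref{q0}, \ref{qi} and~\ref{only length}, for every eligible prefix $\sigma$ the probabilities $Q_j(\sigma), Q_j^o(\sigma), \bar{Q}_j(\sigma)$ depend only on $|\sigma|$, so being type $j$-positive is a length-only property on eligible prefixes. The contrapositive of Corollary~\ref{cor-2} gives that once the increasing prefix $[12\cdots k]$ is type $j$-positive so is $[12\cdots(k+1)]$, and hence all longer increasing prefixes; defining $\kappa_j$ as the smallest such $k$ then makes every eligible prefix of length $\ge \kappa_j$ type $j$-positive.

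The main obstacle is to prove the monotonicity $\kappa_0 \ge \kappa_1 \ge \cdots \ge \kappa_{s-1}$, which expresses that a larger query budget lowers the acceptance threshold. I would show ``type $j$-positive at length $k$ implies type $(j+1)$-positive at length $k$'' by using $Q_i(k) = Q_0(k) + Q_{i-1}^o(k)$ to rewrite the implication as the diminishing-returns inequality $Q_{j+1}^o(k) - Q_j^o(k) \le Q_j^o(k) - Q_{j-1}^o(k)$, with the convention $Q_{-1}^o \equiv 0$ handling the boundary case $j = 0$. This inequality I would prove by downward induction on $k$ from $N$ to $1$ using Lemma~\ref{relation-2}, which expresses $Q_j^o(k-1)$ as $\bar{Q}_j(k) + Q_j^o(k) \cdot C_k$ for a non-negative constant $C_k$ independent of $j$; the inductive step then reduces to the analogous concavity-type inequality for $\bar{Q}_j(k) = \max(Q_j(k), Q_j^o(k))$, which I would verify by a case analysis on the crossover index $i^*(k)$ at which the $\max$ transitions from $Q_j^o$ to $Q_j$: in each regime of constant sign the inequality follows from the inductive hypothesis, while the critical crossover estimate collapses exactly to the defining inequality of type-$i^*$-positivity.

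Given the ordered thresholds, the theorem follows from Theorem~\ref{winprob-2}. The optimal strike set $A = A_{s-1}\cup\cdots\cup A_0$ executes operationally as follows: starting with $s-1$ queries, the first selection is made at the earliest eligible prefix in $A_{s-1}$, which by the length characterization is the first left-to-right maximum at position $\ge \kappa_{s-1}$. Setting $k_i := \kappa_{s-i} - 1$, after each unsuccessful Genie query the budget drops by one and the same analysis identifies the next selection with the first eligible prefix in $A_{s-i}$, namely the next left-to-right maximum past position $k_i$. The required inequality $k_1 \le \cdots \le k_s$ is precisely the threshold monotonicity of the previous paragraph.
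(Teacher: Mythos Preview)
Your proof is correct but takes a considerably more elaborate route than the paper's on the monotonicity step. Both arguments agree on reducing to length-only properties (via Lemmas~\ref{q0}, \ref{qi}, \ref{only length}) and on extracting a single threshold $\kappa_j$ for each query-level $j$ from Corollaries~\ref{cor-1} and~\ref{cor-2}. The divergence is in establishing $k_1\le k_2\le\cdots\le k_s$.

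The paper dispatches this in one line: since the $i^{\text{th}}$ selection can only be entertained after the $(i-1)^{\text{th}}$ has been made, the effective threshold for the $i^{\text{th}}$ selection is automatically at least the position of the $(i-1)^{\text{th}}$ selection, so the thresholds may without loss of generality be taken nondecreasing. This suffices for the theorem as stated, which only asserts the \emph{existence} of ordered thresholds describing an optimal strategy; if the raw values $\kappa_{s-i}-1$ happened not to be monotone, replacing $k_i$ by $\max_{\ell\le i}k_\ell$ yields the same operational strategy.

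Your approach instead proves the stronger statement that the raw thresholds $\kappa_j$ themselves are monotone, via the diminishing-returns inequality $Q_{j+1}^o(k)-Q_j^o(k)\le Q_j^o(k)-Q_{j-1}^o(k)$. Your downward induction on $k$ using Lemma~\ref{relation-2} is sound: once concavity of $j\mapsto Q_j^o(k)$ is in hand at length $k$, it forces a single crossover index $i^*(k)$, and the four-case check for concavity of $j\mapsto\bar Q_j(k)$ goes through exactly as you indicate (the two crossover cases reduce precisely to $Q_0\lessgtr Q_j^o-Q_{j-1}^o$, i.e., to type $j$-negativity or positivity at $k$). This is genuinely more than the theorem requires, but it is exactly what is needed to justify Corollary~\ref{fixed}, where the thresholds must be the $\kappa_{j}$ themselves (independent of $s$) rather than a massaged monotone version. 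So your argument buys a cleaner path to the subsequent corollary at the cost of a nontrivial concavity induction, whereas the paper's argument is shorter but leans on the operational constraint in a way that establishes the theorem without directly proving $\kappa_j$-monotonicity.
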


\begin{proof}
The algorithms described in Theorem~\ref{winprob-2} (Proposition~\ref{easyexpansion}) produces a strike set which guarantees the optimal winning probability. By Corollary~\ref{cor-2} and $Q_j(N) \ge Q_j^o(N)$, there exists some $0 \le k_i(\theta) \le N-1$ such that $Q_{j}(k) \ge Q^o_{j}(k)$ for $k \ge k_i(\theta)+1$ and $Q_{j}(k) < Q^o_{j}(k)$ for all $k \le k_i(\theta)$, where $1 \le i \le s$. Therefore, an optimal strategy is to reject the first $k_i(\theta)$ candidates and then accept the next left-to-right maxima thereafter. It is also clear that every optimal strategy needs to proceed until the $(i-1)^{\text{th}}$ selection is made before considering the $i^{\text{th}}$ selection. Thus, $k_{i-1}(\theta) \le k_i(\theta)$ for each $i \in \{2, \ldots, s\}$.
\end{proof}

By the definition of the probabilities $Q_j(k), Q^o_j(k), \bar{Q}_j(k)$, we know that they only depend on $\theta, k, N$, and the number of selections left before interviewing the current candidate, i.e., the subscript $j$. Thus, for two different models with $s_1$ and $s_2$ selections respectively (say $s_1 < s_2$), and the same values of $\theta$ and $N$, we have that the thresholds $k'_{s_1 + 1 - j}(\theta)$ for the model with $s_1$ selections and $k''_{s_2 + 1 - j}(\theta)$ for the model with $s_2$ selections are the same whenever $1 \le j \le s_1$. In other words, for each fixed $\theta > 0$, our optimal strategy is right-hand based; and, Corollary~\ref{fixed} holds.

\begin{corollary}\label{fixed}
Let $N$ be a fixed positive integer. For each $\theta>0$, there is a sequence of numbers $a_1(\theta), a_2(\theta), \ldots,$ such that when the number of selections $s \ge 1$ is fixed, then an optimal strategy is the $(a_s(\theta), a_{s-1}(\theta), \ldots, a_1(\theta))$-strategy. In other words, the $(s+1-i)^{\text{th}}$ threshold $k_{s+1-i}(\theta)$ (the $i^{\text{th}}$ from the right) does not depend on the total number of selections allowed (i.e., the value of $s$) and always equals $a_i(\theta)$, for $1\le i \le s$.
\end{corollary}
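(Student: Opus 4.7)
The plan is to make precise the observation stated in the paragraph immediately preceding the corollary. First I would note that, by Lemmas~\ref{q0}, \ref{qi}, and~\ref{only length}, the quantities $Q_j(k)$ and $Q^o_j(k)$ are well-defined functions depending only on $\theta$, $N$, the length $k$ of an eligible prefix, and the number $j$ of Genies remaining just before interviewing the $k^{\text{th}}$ candidate. Crucially, the recurrences in Lemmas~\ref{relation-1} and~\ref{relation-2}, together with the boundary values at prefixes of length $N$, compute these quantities without any reference to the total number of selections $s$.

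Next I would recall from the proof of Theorem~\ref{strategy} that when making the $i^{\text{th}}$ selection in the $s$-selection model we have $j = s-i$ Genies available, and the associated threshold $k_i(\theta)$ is determined as the largest integer $k$ with $Q_{s-i}(k) < Q^o_{s-i}(k)$ (or $0$ if no such $k$ exists). Since both sides of this inequality are functions purely of $(\theta, N, k, j)$, the threshold itself is a function purely of $j = s-i$ and cannot detect $s$ and $i$ individually.

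This motivates the definition: for each $m \ge 1$, let $a_m(\theta)$ be this threshold in the case $j = m-1$, i.e.\ when exactly $m$ selections (including the current one) still remain. Then in the $s$-selection problem the threshold for the $i^{\text{th}}$ selection equals $a_{s+1-i}(\theta)$, so the optimal sequence $(k_1(\theta), \ldots, k_s(\theta))$ is precisely $(a_s(\theta), a_{s-1}(\theta), \ldots, a_1(\theta))$. Because $a_m(\theta)$ was defined with no mention of $s$, incrementing $s$ by one merely prepends a new leftmost threshold $a_{s+1}(\theta)$ and leaves every other threshold untouched, which is exactly what the corollary asserts.

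The only point requiring care, which I expect to be bookkeeping rather than a genuine obstacle, is to verify that the function $(j,k)\mapsto (Q_j(k), Q^o_j(k))$ is honestly the same across different values of $s$. This is immediate once one observes that the base cases at $k = N$ depend only on whether the final entry equals $N$, and that the recurrences of Lemmas~\ref{relation-1} and~\ref{relation-2} reduce $(j,k)$ only to $(j,k+1)$ and $(j-1,k+1)$, never invoking $s$. A straightforward double induction on $(j, N-k)$ then closes the argument, and the monotonicity $a_1(\theta) \ge a_2(\theta) \ge \cdots$ mentioned earlier follows from the inequality chain $k_1(\theta) \le k_2(\theta) \le \cdots \le k_s(\theta)$ established at the end of Theorem~\ref{strategy}.
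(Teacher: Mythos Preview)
Your proposal is correct and follows essentially the same approach as the paper: both argue that the probabilities $Q_j(k)$ and $Q^o_j(k)$ depend only on $\theta$, $N$, $k$, and the number $j$ of Genies remaining (not on $s$), so the threshold determined by the sign of $Q_j(k)-Q^o_j(k)$ is a function of $j$ alone. The paper dispatches this in the short paragraph immediately preceding the corollary, whereas you spell out the supporting bookkeeping (the $s$-independence of the recurrences and boundary data) more explicitly; the content is the same.
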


\begin{example}\label{prefixtree}
To clarify the above observations and concepts, we present an example for the case $\theta = 1$, $s = 2$, and $N = 4$. An optimal strategy is the $(0,1)$-strategy where we accept the first candidate, ask the expert whether this candidate is the best, and then accept the next left-to-right maxima. The optimal winning probability is $17/24$, an improvement of $6/24$ when compared with the optimal winning probability which equals $11/24$ for the case when only one selection is allowed (see Figure~\ref{tree-1}). Note that for each prefix $\sigma \in S_4$, we list the probabilities $Q_1, Q_1^o$ in the first line and the probabilities $Q_2, Q_2^o$ in the second line underneath each prefix shown in Figure~\ref{tree-1}. 
\end{example}

\section{Results for the Mallows Distribution}\label{mallows}
\begin{defn}
Let $P_N (\theta)$ (henceforth $P_N$ to avoid notational clutter) be equal to $1 + \theta + \ldots + \theta^{N-1}$; by convention, we set $P_0 (\theta)= 0$. Furthermore, let $(P_N )!$ be a polynomial in $\theta$ equal to $(P_N)! = P_N P_{N-1} \cdot \ldots \cdot P_1$.
\end{defn}

The following result is well-known and also proved in~\cite{jones2020weighted}.

\begin{lemma} [Lemma 6.2 in~\cite{jones2020weighted},~\cite{mallows1957non}]
We have $$(P_{N})! =  \sum\limits_{\pi \in S_N} \theta^{\#\text{inversions in }\pi}.$$
\end{lemma}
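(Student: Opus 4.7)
The plan is to proceed by a short induction on $N$, using the classical insertion argument. The base case $N=1$ gives $(P_1)! = P_1 = 1$, matching the single permutation $[1] \in S_1$ with $0$ inversions.

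For the inductive step, assume $(P_{N-1})! = \sum_{\pi' \in S_{N-1}} \theta^{\mathrm{inv}(\pi')}$. I would build each $\pi \in S_N$ by taking some $\pi' \in S_{N-1}$ and inserting the element $N$ into one of the $N$ possible positions: say, slot $j \in \{0, 1, \ldots, N-1\}$ measured from the right. Inserting $N$ at slot $j$ creates exactly $j$ new inversions, since $N$ is larger than every element and exactly $j$ elements lie to its right; crucially, insertion does not change the relative order of the elements of $\pi'$, so the remaining inversions of the new permutation are exactly the inversions of $\pi'$. Therefore
\begin{equation*}
\mathrm{inv}\bigl(\text{insert } N \text{ at slot } j \text{ in } \pi'\bigr) = \mathrm{inv}(\pi') + j.
\end{equation*}

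Summing over all $\pi' \in S_{N-1}$ and all choices of slot $j$ gives a bijection between $S_N$ and $S_{N-1} \times \{0,1,\ldots,N-1\}$, which yields
\begin{equation*}
\sum_{\pi \in S_N} \theta^{\mathrm{inv}(\pi)} = \Bigl(\sum_{j=0}^{N-1} \theta^j\Bigr) \cdot \Bigl(\sum_{\pi' \in S_{N-1}} \theta^{\mathrm{inv}(\pi')}\Bigr) = P_N \cdot (P_{N-1})! = (P_N)!,
\end{equation*}
using the inductive hypothesis in the last step. There is no real obstacle here; the only point deserving care is verifying that the insertion map is a bijection and that it cleanly decomposes the inversion count into the ``old'' contribution $\mathrm{inv}(\pi')$ and the ``new'' contribution $j$. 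One could equivalently phrase this as the Lehmer-code bijection $S_N \leftrightarrow \prod_{i=1}^N \{0,1,\ldots,N-i\}$, under which the inversion statistic becomes the sum of coordinates and the generating function factors directly into $\prod_{i=1}^N P_i = (P_N)!$.
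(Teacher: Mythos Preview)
Your proof is correct and is precisely the standard insertion/Lehmer-code argument for this classical identity. The paper does not actually supply its own proof of this lemma; it merely states the result as well-known and cites Jones~\cite{jones2020weighted} and Mallows~\cite{mallows1957non}, so there is nothing to compare against beyond noting that your argument is the canonical one.
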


For the set $[1, n+m],$ an ordered $2$-partition of the values into two parts $\Pi_1$ and $\Pi_2$ with $|\Pi_1| = n$ and $|\Pi_2| = m$ is a partition where all values in $\Pi_1$ are positioned before all values in $\Pi_2$, while the internal order within $\Pi_1$ and $\Pi_2$ is irrelevant. 

We define $$B(n,m):= \sum\limits_{\text{All }\Pi_1,\Pi_2 \text{ ordered partitions of }[n+m]} \theta^{\# \text{crossing inversions of }(\Pi_1, \Pi_2)},$$ where a crossing inversion with respect to $(\Pi_1, \Pi_2)$ is an inversions of the form $(a, b)$ where $a \in \Pi_1$, $b \in \Pi_2$, and $a>b$. A straightforward induction argument can be used to prove that if $\theta = 1$ then
$$B(n, m) = {n+m \choose n}.$$

For $n,m \ge 1$, define $${P_{n+m} \choose P_{n}} := \frac{(P_{n+m})!}{(P_{m})! \cdot (P_{n})!}.$$

The following result was established in a paper by the authors of this work~\cite{postdocpaper}.

\begin{lemma}[\cite{postdocpaper}]
For $\theta \neq 1$, $n, m \ge 1$,
$$B(n, m) = \frac{(1-\theta^{n+m}) \cdot (1-\theta^{n+m-1}) \cdot \ldots \cdot (1-\theta^{n+1})}{(1-\theta^m) \cdot (1-\theta^{m-1}) \cdot \ldots \cdot (1-\theta)}, \text{ with } B(n, 0) = B(0, m) = 1.$$
\end{lemma}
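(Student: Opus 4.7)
The plan is to identify $B(n,m)$ with the ``$\theta$-binomial'' coefficient $\binom{P_{n+m}}{P_n}$ already introduced in the paper, and then simplify the resulting expression using $P_k = (1-\theta^k)/(1-\theta)$. The key idea is a three-fold decomposition of permutations: any $\pi \in S_{n+m}$ is uniquely determined by (a) the ordered set partition $(\Pi_1,\Pi_2)$ with $\Pi_1=\{\pi(1),\ldots,\pi(n)\}$ and $\Pi_2=\{\pi(n+1),\ldots,\pi(n+m)\}$, (b) an internal ordering of $\Pi_1$, and (c) an internal ordering of $\Pi_2$. First I would check that under this decomposition the inversion count of $\pi$ splits additively as
$$\text{inv}(\pi) = \text{inv}(\text{ordering of }\Pi_1) + \text{inv}(\text{ordering of }\Pi_2) + \text{crossing inversions of }(\Pi_1,\Pi_2),$$
because every inverted pair is either contained in the first $n$ positions, contained in the last $m$ positions, or crosses between the two blocks; and the crossing count depends only on the sets, not on how they are ordered.

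Next, applying the Mallows Lemma to the relabelled internal orderings (via the order-preserving bijections $\Pi_1\to[n]$ and $\Pi_2\to[m]$, which preserve the number of inversions), one obtains
$$\sum_{\text{orderings of }\Pi_1} \theta^{\text{inv}} = (P_n)!, \qquad \sum_{\text{orderings of }\Pi_2} \theta^{\text{inv}} = (P_m)!,$$
independently of which subsets $\Pi_1,\Pi_2$ were chosen. Summing the inversion decomposition over all $\pi\in S_{n+m}$ and invoking the factorisation above gives
$$(P_{n+m})! = \sum_{(\Pi_1,\Pi_2)} \theta^{\text{crossing inv}(\Pi_1,\Pi_2)} \cdot (P_n)! \cdot (P_m)! = B(n,m)\cdot (P_n)!\cdot (P_m)!,$$
so $B(n,m) = (P_{n+m})!/((P_n)!(P_m)!) = \binom{P_{n+m}}{P_n}$.

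Finally, a short algebraic manipulation finishes the proof: rewriting $\binom{P_{n+m}}{P_n} = \prod_{k=n+1}^{n+m} P_k / \prod_{k=1}^{m} P_k$ and substituting $P_k = (1-\theta^k)/(1-\theta)$ causes the $m$ factors of $1-\theta$ in numerator and denominator to cancel, yielding exactly the claimed formula. The stated boundary cases $B(n,0)=B(0,m)=1$ are immediate because the only ordered $2$-partition has no crossing inversions. The main (mild) obstacle is making the bookkeeping of step one precise, namely showing that a crossing inversion is well-defined at the level of sets and that internal inversion counts are invariant under the order-preserving relabelling $\Pi_i\to[|\Pi_i|]$; once that is settled, both the combinatorial identification and the algebraic simplification are routine, and no separate induction is needed.
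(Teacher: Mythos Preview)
Your proof is correct. The three-fold decomposition of $\pi\in S_{n+m}$, the additive splitting of the inversion count, and the algebraic simplification via $P_k=(1-\theta^k)/(1-\theta)$ all go through exactly as you describe; this is the standard derivation of the $q$-binomial (here $\theta$-binomial) coefficient as a generating function for crossing inversions.

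As for comparison with the paper: there is nothing to compare. The paper does not prove this lemma here; it merely cites it from~\cite{postdocpaper} and then records the observation $B(n,m)=\binom{P_{n+m}}{P_n}$ as a consequence of the stated formula. Your argument runs in the opposite direction---you establish $B(n,m)=\binom{P_{n+m}}{P_n}$ combinatorially first and then read off the closed form---which is in fact the more natural order and yields a self-contained proof the paper does not supply.
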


Note that 
$$B(n, m) = \frac{P_{n+m} \cdot \ldots \cdot P_{n+2} \cdot P_{n+1}}{P_{m} \cdot \ldots \cdot P_2 \cdot P_1} = \frac{(P_{n+m})!}{(P_{m})! \cdot (P_{n})!} = {P_{n+m} \choose P_{n}}. $$

In order to classify the permutations and compute the winning probabilities according to the selection strategy, we define the following concepts.

\begin{defn}\label{weights-1}
Let $\pi \in S_N$. We say that $\pi$ is {\em $(k_1, \ldots, k_s)$-winnable} if the value $N$ is picked using the $s$-thresholds $(k_1, \ldots, k_s)$-strategy, where $s \ge 1$ is an integer. Furthermore, we define $W(N,k_1, \ldots, k_s)$ to be the sum of all weights of the $(k_1, \ldots, k_s)$-winnable permutations. In other words, $$W(N, k_1, \ldots, k_s):= \sum\limits_{\pi \in S_N \text{ is }(k_1, \ldots, k_s)-\text{winnable}} \theta^{ \text{\# of inversions in }\pi}.$$
\end{defn}

In order to find $W(N, k_1, \ldots, k_s)$ for a given $(k_1, \ldots, k_s)$-strategy, we make use of the following definition.

\begin{defn}\label{weights-2}
We call a permutation $\pi \in S_N$ a $(k_1, \ldots, k_s)$-$\le r$-pickable permutation, for $0 \le r \le s-1$, if the process of applying the $(k_1, \ldots, k_s)$-strategy to $\pi$ uses at most $r$ selections. In addition, we define $T_{\le r}(N, k_1, \ldots, k_{s})$ according to 
$$T_{\le r}(N, k_1, \ldots, k_{s}) = \sum\limits_{(k_1, \ldots, k_s)-\le r\text{-pickable permutations }\pi \in S_{N}} \theta^{\text{\# inversions in }\pi} .$$
\end{defn}

\begin{remark}
For consistency of notation, we allow $N < k_r$ for $1 \le r \le s$. Note that if $N \le k_{r+1}$ then every permutation in $S_N$ uses at most $r$ selections. Moreover, if $N \ge k_{r+1}+1$ then $(k_1, \ldots, k_{r+1})\text{-}\le r$-pickable permutations are equivalent to $(k_1, \ldots, k_{r+2})\text{-}\le r$-pickable permutations, $\ldots$, $(k_1, \ldots, k_{s})\text{-}\le r$-pickable permutations. Thus, we collectively refer to all these permutations as $(k_1, \ldots, k_{r+1})\text{-}\le r$-pickable permutations.
\end{remark}

Another result of interest establishes a formula for $T_0(m,k)$ (i.e., $T_{\le 0}(m,k)$).

\begin{lemma}[\cite{postdocpaper}]\label{relation-t0}
One has $T_0(m,0) = 0$, while for $k \ge 1$, 
$$T_0(m,k)=(\theta^{m-1}+ \ldots+\theta^{m-k}) \cdot (P_{m-1})! = \theta^{m-k} \cdot P_k \cdot (P_{m-1})!.$$
\end{lemma}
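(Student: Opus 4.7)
The plan is to unpack the definition of $T_0(m,k) = T_{\le 0}(m,k)$ and directly enumerate the contributing permutations, weighted by $\theta^{\#\text{inversions}}$. Recall that a permutation $\pi \in S_m$ is $(k)$-$\le 0$-pickable iff applying the single-threshold $(k)$-strategy makes zero selections, i.e., no position $j > k$ is a left-to-right maximum of $\pi$. I would first observe that this condition is equivalent to the global maximum $m$ being located in one of the first $k$ positions of $\pi$: if $\pi^{-1}(m) \le k$, then for every $j > k$ there is an index $\pi^{-1}(m) < j$ with a larger value, so no $j > k$ is a left-to-right maximum; conversely, if $\pi^{-1}(m) > k$, then position $\pi^{-1}(m)$ itself is a left-to-right maximum beyond $k$. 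This reduces the problem to enumerating permutations with a specified location for $m$.

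Next, for each fixed $j \in \{1, \ldots, k\}$, I would split the weighted sum according to $\pi^{-1}(m) = j$. The value $m$ sitting at position $j$ contributes exactly $m - j$ inversions (it is larger than each of the $m - j$ entries to its right) and no inversions with entries to its left. The remaining $m-1$ positions are filled by an arbitrary permutation of $\{1, \ldots, m-1\}$, and by the Mallows/Kendall sum identity $(P_{m-1})! = \sum_{\tau \in S_{m-1}} \theta^{\#\text{inversions}(\tau)}$ already recalled in the excerpt, the weighted contribution from those positions equals $(P_{m-1})!$. Hence
\[
T_0(m,k) \;=\; \sum_{j=1}^{k} \theta^{m-j} \cdot (P_{m-1})!.
\]

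Finally, I would simplify the geometric sum: factoring out $\theta^{m-k}$ yields
\[
\sum_{j=1}^{k} \theta^{m-j} \;=\; \theta^{m-k}\bigl(1 + \theta + \cdots + \theta^{k-1}\bigr) \;=\; \theta^{m-k} \, P_k,
\]
which gives both stated forms of the identity. The boundary case $T_0(m,0) = 0$ is immediate since the indexing set $\{1, \ldots, 0\}$ for the position of $m$ is empty. The only place that requires any care, rather than routine bookkeeping, is the equivalence between ``no left-to-right maximum past position $k$'' and ``the global maximum lies in the first $k$ positions''; everything else follows from the multiplicative independence of the inversion count contributed by $m$ and the inversion count of the remaining entries, which is what allows the clean factorization by $(P_{m-1})!$.
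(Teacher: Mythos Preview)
Your proof is correct. The paper does not actually prove this lemma---it is quoted from~\cite{postdocpaper}---so there is no in-paper argument to compare against; however, your decomposition according to the position of the value $m$ is exactly the approach the authors use in Case~1 of the proof of the more general Lemma~\ref{relation-t}, so the method is the intended one.
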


We present next the recurrence relation which can be used to find $T_{\le r-1}(m, k_1, \ldots, k_r), \text{ where } 1 \le r \le s$.

\begin{lemma}\label{relation-t}
For each $1 \le r \le s$ and $m \ge k_r+1$, $$T_{\le r-1}(m, k_1, \ldots, k_r) = (P_{m-1})! \cdot \theta^{m-k_r} \cdot P_{k_r} + (P_{m-1})! \cdot \sum\limits_{i = k_r}^{m-1} \frac{T_{\le r-2}(i, k_1, \ldots, k_{r-1})}{(P_{i})!} \cdot \theta^{m-1-i}.$$
\end{lemma}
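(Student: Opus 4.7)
The plan is to condition on the position $j$ of the global maximum $N=m$ in $\pi\in S_m$, using that $N$ is always a left-to-right maximum while no later position can be. Writing $i=j-1$ and $A=\{\pi(1),\ldots,\pi(i)\}$, the Kendall statistic decomposes as
$$\#\mathrm{inv}(\pi)=\#\mathrm{inv}(\pi|_{[1,i]})+\#\mathrm{inv}(\pi|_{[i+2,m]})+(m-1-i)+c(A,[1,m-1]\setminus A),$$
where $(m-1-i)$ counts inversions of $N$ with later positions and $c(A,\cdot)$ counts crossing inversions between the two value-sets on either side of $N$.

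I would then split based on whether $j\le k_r$ (case (i)) or $j\ge k_r+1$ (case (ii)). In case (i), every position in $\pi|_{[1,j-1]}$ lies below $k_r$, so the strategy cannot invoke its $r$-th selection on the prefix; a short sub-case analysis on the number $t$ of selections made there shows the total is always $\le r-1$, so every such $\pi$ is automatically $\le(r-1)$-pickable. Summing $(P_{m-1})!\cdot\theta^{m-j}$ over $j=1,\ldots,k_r$ yields the first term $(P_{m-1})!\cdot\theta^{m-k_r}\cdot P_{k_r}$. In case (ii), $j>k_r\ge k_{t+1}$ for any pending selection index $t+1\le r$, so $N$ is selected whenever a selection remains, making the total equal to $t+1$; hence $\pi$ is $\le(r-1)$-pickable iff the $(k_1,\ldots,k_r)$-strategy makes $t\le r-2$ selections on $\pi|_{[1,j-1]}$. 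Since this bound means the $r$-th threshold is never consulted on the prefix, the condition is equivalent to the relabeled prefix $\tau\in S_{j-1}$ being $(k_1,\ldots,k_{r-1})$-$\le(r-2)$-pickable. Relabeling preserves this property because left-to-right maxima, and hence the strategy's output, depend only on relative orders.

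For case (ii), for each $i\in\{k_r,\ldots,m-1\}$ I would count independently: a subset $A\subseteq[1,m-1]$ of size $i$ contributes $\theta^{c(A,[1,m-1]\setminus A)}$, a $(k_1,\ldots,k_{r-1})$-$\le(r-2)$-pickable $\tau\in S_i$ contributes $T_{\le r-2}(i,k_1,\ldots,k_{r-1})$, a free $\tau'\in S_{m-1-i}$ contributes $(P_{m-1-i})!$, and $N$ contributes $\theta^{m-1-i}$. Summing $\theta^{c(A,\cdot)}$ over $A$ yields $B(i,m-1-i)=(P_{m-1})!/[(P_i)!\,(P_{m-1-i})!]$, so after the $(P_{m-1-i})!$ cancellation case (ii) contributes
$$(P_{m-1})!\cdot\sum_{i=k_r}^{m-1}\theta^{m-1-i}\cdot\frac{T_{\le r-2}(i,k_1,\ldots,k_{r-1})}{(P_i)!},$$
matching the second term. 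The main subtlety is the reduction in case (ii): one must carefully verify that ``the $(k_1,\ldots,k_r)$-strategy uses at most $r-2$ selections on the length-$(j-1)$ prefix'' is equivalent to ``$(k_1,\ldots,k_{r-1})$-$\le(r-2)$-pickability of the relabeled prefix of length $j-1$''. The remaining algebra is a routine Mallows-style computation using the ordered $2$-partition identity $B(n,m)={P_{n+m}\choose P_n}$ already established in the paper.
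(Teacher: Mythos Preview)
Your proposal is correct and follows essentially the same approach as the paper: both condition on the position of the maximum value $m$, split into the two cases $j\le k_r$ and $j\ge k_r+1$, and in the second case use the factorization via $B(i,m-1-i)$ to obtain the sum. Your write-up is somewhat more explicit than the paper's about the Kendall-statistic decomposition and about why the prefix condition reduces from $(k_1,\ldots,k_r)$-$\le(r-2)$-pickability to $(k_1,\ldots,k_{r-1})$-$\le(r-2)$-pickability, but these are exactly the points the paper handles (the latter via the remark preceding the lemma that $\le r$-pickability does not depend on thresholds beyond $k_{r+1}$).
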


\begin{proof}
We have to consider two separate cases depending on the position of the value $m$.

\textbf{Case 1:} The value $m$ is at a position $i \in [1, k_r]$. If the value $m$ is at a position $i$ in $[1, k_{r-1}],$ we make at most $r-1$ selections; if the value $m$ is at a position $i \in [k_{r-1}+1, k_r]$, then we either have made at most $r-2$ selections before position $i$, and ended up without any further selections after position $i$; or, we made the $(r-1)^{\text{th}}$ selection at some position $j \in [k_{r-1}+1, i-1]$ (note that using our strategy we cannot make the $r^{\text{th}}$ selection until after position $k_r$), and once again ended up without any further selections after position $i$. In the latter case, we do not select the candidate at position $i$. The other positions can be represented by an arbitrary permutation in $S_{m-1}$. This argument accounts for the term 
$$(\theta^{m-1}+\theta^{m-2}+ \ldots + \theta^{m-k_r}) \cdot (P_{m-1})! = (P_{m-1})! \cdot \theta^{m-k_r} \cdot P_{k_r}.$$

\textbf{Case 2:} The value $m$ is at a position $i \in [k_r+1, m]$. Then the entries at positions $[1, i-1]$ must form a $(k_1, \ldots, k_r)$-$\le (r-2)$ -pickable permutation, since if $r-1$ selections were made before the position $i$, then the $r^{\text{th}}$ selection will occur either before position $i$ or at position $i$. There are no restrictions for entries at positions $[i+1, m]$. The value $m$ itself contributes $\theta^{m-i}$ to the claimed expression. Therefore, for Case 2, we have the following contributing term 
 $$\sum\limits_{i=k_r+1}^{m} \theta^{m-i} \cdot T_{\le r-2}(i-1, k_1, \ldots, k_{r-1})\cdot B(i-1, m-i) \cdot (P_{m-i})! =$$ 
 $$(P_{m-1})! \cdot \sum\limits_{i = k_r+1}^m \frac{T_{\le r-2}(i-1, k_1, \ldots, k_{r-1})}{(P_{i-1})!} \cdot \theta^{m-i}. \qedhere$$ 
\end{proof}

We first address the following special case for which $m = k_r$ and $T_{\le r-1}(m, k_1, \ldots, k_r)$, and use it later to obtain an explicit formula for $W(N, k_1, \ldots, k_s)$.

\begin{lemma}\label{formula-t}
For each $1 \le r \le s$ we have 
$$T_{\le r-1}(k_r, k_1, \ldots, k_r) = (P_{k_r})!,$$ 
since $k_r$ equals the threshold for the $r^{\text{th}}$ selection. Thus, in this case we are not allowed to make the $r^{th}$ selection. For $m \ge k_r+1$,
$$
\frac{T_{\le r-1}(m, k_1, \ldots, k_r)}{(P_{m})!} = \frac{1}{P_m} \cdot \Bigg( \theta^{m-k_{r}} \cdot P_{k_r} + \theta^{m-k_{r-1}-1} \cdot P_{k_{r-1}} \cdot \sum\limits_{i = k_r}^{m-1} \frac{1}{P_i} + \theta^{m-k_{r-2}-2} \cdot P_{k_{r-2}} \cdot \sum\limits_{i_1 = k_r}^{m-1} \frac{1}{P_{i_1}} \sum\limits_{i_2 = k_{r-1}}^{i_1-1} \frac{1}{P_{i_2}} $$
$$ + \ldots + \theta^{m-k_1-r+1} \cdot P_{k_1} \cdot \sum\limits_{i_1 = k_r}^{m-1} \frac{1}{P_{i_1}} \cdot \sum\limits_{i_2 = k_{r-1}}^{i_1-1} \frac{1}{P_{i_2}} \sum \cdots \sum\limits_{i_{r-1} = k_2}^{i_{r-2}-1} \frac{1}{P_{i_{r-1}}} \Bigg).$$
\end{lemma}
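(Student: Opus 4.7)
The plan is a straightforward induction on $r$, with the main computational work being to unfold the recurrence from Lemma~\ref{relation-t} and verify that the exponents of $\theta$ and the nested sum structure line up exactly as claimed.

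First, I would dispose of the identity $T_{\le r-1}(k_r, k_1, \ldots, k_r) = (P_{k_r})!$ by a direct combinatorial argument: since the $r$-th threshold is $k_r$ and there are only $k_r$ candidates in total, applying the $(k_1,\ldots,k_r)$-strategy to any $\pi \in S_{k_r}$ can never reach the $r$-th selection. Hence every permutation in $S_{k_r}$ is trivially $\le (r-1)$-pickable, and the generating function of inversions over $S_{k_r}$ equals $(P_{k_r})!$ by the Mallows identity already recorded in the excerpt.

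For $m \ge k_r + 1$, I would proceed by induction on $r$. The base case $r = 1$ is immediate from Lemma~\ref{relation-t0}: dividing $T_{\le 0}(m, k_1) = \theta^{m-k_1} \cdot P_{k_1} \cdot (P_{m-1})!$ by $(P_m)!$ yields $\frac{1}{P_m}\,\theta^{m-k_1}\,P_{k_1}$, which is exactly the $r=1$ instance of the claimed formula. For the inductive step, I would divide the recurrence in Lemma~\ref{relation-t} through by $(P_m)!$ to obtain
$$\frac{T_{\le r-1}(m, k_1, \ldots, k_r)}{(P_m)!} = \frac{\theta^{m-k_r} P_{k_r}}{P_m} + \frac{1}{P_m} \sum_{i = k_r}^{m-1} \frac{T_{\le r-2}(i, k_1, \ldots, k_{r-1})}{(P_i)!} \cdot \theta^{m-1-i},$$
and substitute the inductive hypothesis for $T_{\le r-2}(i, k_1, \ldots, k_{r-1})/(P_i)!$. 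After swapping the order of summation, the generic term of the inductive formula carries a factor $\theta^{i - k_{r-q} - (q-1)}$ which combines with $\theta^{m-1-i}$, the $i$'s cancel, and one is left with $\theta^{m - k_{r-q} - q}$. Re-indexing $q \mapsto q' = q+1$, this matches the exponent required for the $(q'+1)$-th term of the target formula, while the inner nested sums acquire a new outermost layer $\sum_{i = k_r}^{m-1} \frac{1}{P_i}$, reproducing exactly the nested-sum pattern stated in the lemma.

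The main obstacle, though only bookkeeping, is keeping track of the index shifts while substituting the inductive hypothesis — in particular verifying that each inner summation range $[k_{r-q}, j-1]$ from the inductive formula interacts correctly with the outer index $i \in [k_r, m-1]$ to yield the claimed ranges $[k_{r-q'+1}, i_{q'-2}-1]$ after re-indexing. Once the exponents and summation bounds are aligned term by term, adding the leading contribution $\theta^{m-k_r} P_{k_r}/P_m$ recovers all $r$ terms of the stated expression, completing the induction.
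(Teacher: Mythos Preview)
Your proposal is correct and follows essentially the same approach as the paper: induction on $r$, with the base case supplied by Lemma~\ref{relation-t0} and the inductive step obtained by substituting the inductive hypothesis into the recurrence of Lemma~\ref{relation-t}. The paper in fact omits the algebraic derivation entirely, so your sketch of how the $\theta$-exponents combine and how the extra outer sum $\sum_{i=k_r}^{m-1}\frac{1}{P_i}$ appears is more detailed than what the paper provides.
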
 

\begin{proof}
The proof is by induction. The case $r = 1$ holds by Lemma~\ref{relation-t0}. Assume the result is true for all number of queries less than $r$, where $1 \le r \le s$. We prove that it holds for $r$ as well. By Lemma~\ref{relation-t}, $T_{\le r-1}(m, k_1, \ldots, k_r)$ can be expressed in terms of $T_{\le r-2}(i, k_1, \ldots, k_{r-1})$, $k_r \le r \le m-1$. Thus, by using the formula for $T_{\le r-2}(i, k_1, \ldots, k_{r-1})$, guaranteed by the inductive hypothesis, we obtain the claimed formula for $\frac{T_{\le r-1}(m, k_1, \ldots, k_r)}{(P_m)!}$. The actual derivations are omitted.
\end{proof}


To find $W(N,k_1, \ldots, k_s)$, we use a special-case result of Jones~\cite{jones2020weighted} for $s=1$.

\begin{theorem}[Jones~\cite{jones2020weighted}]\label{jonesmain}
For $k_1 \ge 1$,
$$W(N,k_1) = \theta^{N-k_1-1} \cdot (P_{N-1})! \cdot P_{k_1} \cdot \sum\limits_{i = k_1}^{N-1} \frac{1}{P_i}.$$
\end{theorem}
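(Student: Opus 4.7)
The plan is to condition on the position $i$ of the value $N$ in $\pi$ and reduce the resulting weighted sum to already-computed quantities. For the $(k_1)$-strategy to select $N$, the value $N$ must survive the rejection phase, so $i \ge k_1 + 1$, and no left-to-right maximum can appear at positions $k_1 + 1, \ldots, i - 1$. The latter requirement is equivalent to saying that the maximum among positions $1, \ldots, i-1$ is attained at some position in $[1, k_1]$.

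Fixing $N$ at position $i$, I would first peel off the inversions involving $N$: since $N$ exceeds every other value and there are $N-i$ positions to its right, they contribute exactly $\theta^{N-i}$. The remaining $N-1$ values populate the positions $\{1, \ldots, N\} \setminus \{i\}$, and their inversions split cleanly into three disjoint groups --- those within the left block $[1, i-1]$, those within the right block $[i+1, N]$, and the crossing inversions between the two blocks.

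Next, I would condition on the ordered $2$-partition $(\Pi_1, \Pi_2)$ of $\{1, \ldots, N-1\}$ specifying which values fall in each block, with $|\Pi_1| = i-1$ and $|\Pi_2| = N-i$. Fixing the partition, the three inversion groups become independent. Since inversion counts depend only on relative orderings, the sum of $\theta^{\text{inv within}}$ over arrangements of $\Pi_1$ with maximum at a position in $[1, k_1]$ depends only on $|\Pi_1|$ and equals $T_0(i-1, k_1)$ by Lemma~\ref{relation-t0}; the sum over unrestricted arrangements of $\Pi_2$ is $(P_{N-i})!$; and the sum of $\theta^{\text{crossings}}$ over ordered $2$-partitions is by definition $B(i-1, N-i) = (P_{N-1})!/((P_{i-1})! \cdot (P_{N-i})!)$. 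Assembling these pieces gives
\begin{equation*}
W(N, k_1) = \sum_{i=k_1+1}^{N} \theta^{N-i} \cdot B(i-1, N-i) \cdot T_0(i-1, k_1) \cdot (P_{N-i})!.
\end{equation*}

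Substituting the closed-form expressions for $B$ and for $T_0$ collapses the summand to $\theta^{N-1-k_1} \cdot (P_{N-1})! \cdot P_{k_1} / P_{i-1}$, and the reindexing $j := i-1$ then yields the claimed identity. The main technical subtlety is justifying the factorization: one must verify that the three inversion groups exhaust all inversions of the induced permutation of $\{1, \ldots, N-1\}$ on the positions $\{1, \ldots, N\} \setminus \{i\}$, and that the restricted sum over $\Pi_1$-arrangements (with maximum at a position $\le k_1$) is genuinely independent of the specific values in $\Pi_1$ and agrees with $T_0(i-1, k_1)$. Both facts are standard for Mallows-type distributions, but they are precisely the steps that need to be stated explicitly when the proof is written out in full.
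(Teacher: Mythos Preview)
Your proof is correct. Note, however, that the paper does not actually prove this theorem: it is stated as a cited result from Jones~\cite{jones2020weighted}, so there is no ``paper's own proof'' to compare against directly.

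That said, your decomposition---conditioning on the position $i$ of the value $N$ and factoring the contributions into $\theta^{N-i}$, the restricted left block counted by $T_0(i-1,k_1)$, the crossing inversions $B(i-1,N-i)$, and the free right block $(P_{N-i})!$---is exactly the style of argument the paper uses for its own results; see the proof of Lemma~\ref{relation-t}, which performs the same position-of-the-maximum decomposition for $T_{\le r-1}$. The alternative route implicit in the paper's machinery would be to specialize the recurrence of Lemma~\ref{relations-s} to $r=1$, namely $W(N,k_1) = \theta\,P_{N-1}\,W(N-1,k_1) + T_0(N-1,k_1)$, and unroll it down to $W(k_1,k_1)=0$. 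Your direct approach is cleaner for the single-threshold case and avoids the induction; the recurrence approach is what generalizes to the $s$-threshold formula in Lemma~\ref{formula-s}, where Theorem~\ref{jonesmain} serves as the base case.
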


\begin{lemma}\label{relations-s}
For $1 \le r \le s$ and $N \ge k_r+1$, 
$$W(N, k_1, \ldots, k_r) = \theta \cdot P_{N-1} \cdot W(N-1, k_1, \ldots, k_r) + T_{\le r-1}(N-1, k_1, \ldots, k_r).$$
\end{lemma}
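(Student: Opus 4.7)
The plan is to condition on the value $v = \pi(N)$ at the last position of a $(k_1,\ldots,k_r)$-winnable permutation $\pi \in S_N$, and to split into the cases $v = N$ and $v < N$. In each case, deletion of position $N$ (with rank-relabeling when $v<N$) gives a bijection with a family of $S_{N-1}$-permutations that corresponds to one of the two terms on the right-hand side.

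For the case $v = N$, let $\pi' = \pi|_{[1,N-1]} \in S_{N-1}$. Since the global maximum at position $N$ creates no inversions with earlier positions, $c(\pi) = c(\pi')$. Value $N$ at position $N$ is always a left-to-right maximum, and because $N \ge k_r + 1$ every threshold $k_1 \le \cdots \le k_r$ is strictly less than $N$, so whatever number $r' \le r-1$ of selections the strategy has used on $\pi'$, the $(r'+1)$-st selection is activated and it picks value $N$ upon reaching position $N$. Conversely, if the strategy has already used $r$ selections on $\pi'$, value $N$ is never picked. Hence $\pi$ is winnable iff $\pi'$ is $(k_1,\ldots,k_r)$-$\le(r-1)$-pickable, contributing exactly $T_{\le r-1}(N-1, k_1, \ldots, k_r)$.

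For the case $v < N$, the entries at positions $[1,N-1]$ form the set $\{1,\ldots,N\}\setminus\{v\}$; let $\pi' \in S_{N-1}$ be obtained by rank-relabeling these entries. Rank-relabeling preserves relative order, so the left-to-right maxima among positions $[1, N-1]$, and therefore the entire trajectory of the $(k_1,\ldots,k_r)$-strategy, agree on $\pi$ and $\pi'$; moreover, value $N$ in $\pi$ corresponds to the maximum value $N-1$ in $\pi'$. Consequently $\pi$ is $(k_1,\ldots,k_r)$-winnable iff $\pi'$ is. The inversion count decomposes as $c(\pi) = c(\pi') + (N-v)$, the second term counting the entries in $\{1,\ldots,N\}\setminus\{v\}$ that exceed $v$. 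Summing the weight $\theta^{N-v}$ over $v \in \{1,\ldots,N-1\}$ gives $\theta + \theta^2 + \cdots + \theta^{N-1} = \theta \cdot P_{N-1}$, so this case contributes $\theta \cdot P_{N-1} \cdot W(N-1, k_1, \ldots, k_r)$.

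Adding the two contributions yields the claimed recurrence. The step that demands the most care is the first case, where one must use the hypothesis $N \ge k_r + 1$ to argue that ``$\pi'$ uses at most $r-1$ selections'' is precisely equivalent to ``the strategy on $\pi$ still has a selection available at position $N$ and hence picks $N$ there,'' independently of whether that selection turns out to be the $r$-th one or an earlier one. The bijectivity of the relabeling in the second case, and the identification $c(\pi)-c(\pi') = N-v$, are routine but should be spelled out explicitly.
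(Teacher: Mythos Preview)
Your proof is correct and follows essentially the same approach as the paper: both condition on the value at position $N$, splitting into the case $\pi(N)=N$ (yielding the $T_{\le r-1}$ term) and $\pi(N)<N$ (yielding the $\theta P_{N-1}\cdot W$ term). Your write-up is in fact more explicit than the paper's about the rank-relabeling bijection, the inversion-count decomposition $c(\pi)=c(\pi')+(N-v)$, and the role of the hypothesis $N\ge k_r+1$ in guaranteeing that an available selection actually fires at position $N$.
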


\begin{proof}
We consider the value at position $N$. There are two possible cases to consider.

\textbf{Case 1:} The value is $N$. Then, in order to select the value $N$ at the last position, we require that there are at most $r-1$ selections made for the first $N-1$ positions. This gives rise to the term $T_{\le r-1}(N-1, k_1, \ldots, k_r)$. 

\textbf{Case 2:} The value is $i \in \{1, \ldots, N-1\}$. Then, the first $N-1$ positions form a $(k_1, \ldots, k_r)$-winnable permutation. The value $i$ at the last position contributes $\theta^{N-i}$ to the overall expression. Therefore, the total contribution equals 
$$(\theta^{N-1} + \ldots + \theta) \cdot W(N-1, k_1, \ldots, k_r) = \theta \cdot P_{N-1} \cdot W(N-1, k_1, \ldots, k_r). \qedhere$$ 
\end{proof}

Lemma~\ref{formula-s} describes the winning probability for a given $(k_1, \ldots, k_s)$-strategy, with $0 \le k_1 \le \ldots \le k_s \le N$.

\begin{lemma}\label{formula-s}
For each $1 \le s$ and $N \ge k_s+1$ we have
$$\frac{W(N, k_1, \ldots, k_s)}{(P_N)!} = \frac{1}{P_N} \cdot \Bigg( \Big(\theta^{N-k_1-1} \cdot P_{k_1} \cdot \sum\limits_{i = k_1}^{k_2-1} \frac{1}{P_i}  \Big)  +  \Big(\theta^{N-k_2-1} \cdot P_{k_2} \cdot \sum\limits_{i = k_2}^{k_3-1} \frac{1}{P_i} $$
$$+ \theta^{N-k_1-2} \cdot P_{k_1} \cdot \delta(k_2, k_3) \cdot \sum\limits_{i_1 = k_2+1}^{k_3-1} \frac{1}{P_{i_1}} \sum\limits_{i_2=k_2}^{i_1-1} \frac{1}{P_{i_2}}  \Big) +  \Big( \theta^{N-k_3-1} \cdot P_{k_3} \cdot \sum\limits_{i=k_3}^{k_4-1} \frac{1}{P_i} $$
$$+ \delta(k_3, k_4) \cdot (\theta^{N-k_2-2} \cdot P_{k_2} \cdot \sum\limits_{i_1 = k_3+1}^{k_4-1} \frac{1}{P_{i_1}} \sum\limits_{i_2 = k_3}^{i_1-1} \frac{1}{P_{i_2}} + \theta^{N-k_1-3} \cdot P_{k_1} \cdot \sum\limits_{i_1 = k_3+1}^{k_4-1} \frac{1}{P_{i_1}} \sum\limits_{i_2 = k_3}^{i_1-1} \frac{1}{P_{i_2}} \sum\limits_{i_3 = k_2}^{i_2-1} \frac{1}{P_{i_3}})  \Big)
$$
$$ + \ldots +  \Big( \theta^{N-k_s-1} \cdot P_{k_s} \cdot \sum\limits_{i = k_s}^{N-1} \frac{1}{P_i} + $$
$$
\delta(k_s, N) \cdot (\theta^{N-k_{s-1}-2} \cdot P_{k_{s-1}} \cdot \sum\limits_{i_1 = k_s+1}^{N-1} \frac{1}{P_{i_1}} \sum\limits_{i_2 = k_s}^{i_1-1} \frac{1}{P_{i_2}} + \ldots + \theta^{N-k_1-s} \cdot P_{k_1} \cdot \sum\limits_{i_1 = k_s+1}^{N-1} \frac{1}{P_{i_1}} \sum\limits_{i_2 = k_s}^{i_1-1} \frac{1}{P_{i_2}} \cdots \sum\limits_{i_s = k_2}^{i_{s-1}-1} \frac{1}{P_{i_s}})  \Big) \Bigg),$$
where $\delta(k_i, k_{i+1})$ (by default, $k_{s+1} = N$) equals $1$ if $k_{i+1} \ge k_{i} + 2$ and $0$ otherwise, for $2 \le i \le s$.
\end{lemma}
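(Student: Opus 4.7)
The plan is to establish Lemma~\ref{formula-s} by induction on $s$, combining the $N$-recurrence from Lemma~\ref{relations-s} with the explicit formula for $T_{\le s-1}$ from Lemma~\ref{formula-t}. The base case $s=1$ is precisely Theorem~\ref{jonesmain}, whose right-hand side is exactly the single-block instance of the claim (with $k_2$ playing the role of $N$).

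For the inductive step, iterating Lemma~\ref{relations-s} from $N$ down to $k_s$ yields
\begin{equation*}
W(N, k_1, \ldots, k_s) = \theta^{N-k_s}(P_{N-1} \cdots P_{k_s})\,W(k_s, k_1, \ldots, k_s) + \sum_{j=k_s}^{N-1} \theta^{N-1-j}(P_{N-1} \cdots P_{j+1})\,T_{\le s-1}(j, k_1, \ldots, k_s).
\end{equation*}
The crucial boundary identity is $W(k_s, k_1, \ldots, k_s) = W(k_s, k_1, \ldots, k_{s-1})$: with only $k_s$ positions the $s$-th selection (threshold $k_s$) is unreachable, so the $s$-strategy degenerates to the $(s-1)$-strategy. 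Dividing by $(P_N)!$ and using $\tfrac{P_{N-1}\cdots P_{j+1}}{(P_N)!} = \tfrac{1}{P_N (P_j)!}$, we split $\tfrac{W(N, k_1, \ldots, k_s)}{(P_N)!}$ into (i) a \emph{base-case piece} $\tfrac{\theta^{N-k_s}}{P_N} \cdot \tfrac{W(k_s, k_1, \ldots, k_{s-1})}{(P_{k_s})!}$, and (ii) a \emph{$T$-piece} $\tfrac{1}{P_N}\sum_{j=k_s}^{N-1} \theta^{N-1-j} \cdot \tfrac{T_{\le s-1}(j, k_1, \ldots, k_s)}{(P_j)!}$.

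Piece (i) supplies blocks $1, \ldots, s-1$ of the target formula. By the induction hypothesis, the $r$-th block of $\tfrac{W(k_s, k_1, \ldots, k_{s-1})}{(P_{k_s})!}$ has leading contribution $\tfrac{1}{P_{k_s}}\theta^{k_s-k_r-1}P_{k_r}\sum_{i=k_r}^{k_{r+1}-1}\tfrac{1}{P_i}$; multiplying by $\tfrac{P_{k_s}\theta^{N-k_s}}{P_N}$ shifts the $\theta$-exponent uniformly to $N-k_r-1$, giving exactly block $r$ of the statement, and the internal sum bounds (which involve only $k_1, \ldots, k_{r+1}$) are unaffected by this shift. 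Piece (ii) supplies block $s$. Substituting Lemma~\ref{formula-t} expands each summand into $s$ terms, and the $q$-th such term (with coefficient $P_{k_{s-q+1}}$ and $q-1$ nested internal sums) contributes a $j$-independent factor $\theta^{N-1-j}\cdot\theta^{j-k_{s-q+1}-(q-1)} = \theta^{N-k_{s-q+1}-q}$ that pulls outside the $j$-summation; relabelling $j$ as the outermost index $i_1$ and inheriting the inner sums from the $T$-formula yields exactly the $q$-th subterm of block $s$.

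The main obstacle is the bookkeeping of nested summation bounds. After the $j \to i_1$ relabelling, we must verify that the starting points $k_s, k_{s-1}, \ldots, k_2$ of the indices $i_2, \ldots, i_s$ inherited from Lemma~\ref{formula-t} agree with the pattern in the statement, and that the degenerate cases (empty inner ranges, and equal-threshold boundaries $k_r = k_{r+1}$) are consistently handled by the $\delta$-factors, which automatically vanish when required. Once this alignment is confirmed, summing pieces (i) and (ii) reconstructs the $s$-block formula, closing the induction.
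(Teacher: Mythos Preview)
Your proposal is correct and follows essentially the same route as the paper: induction on $s$ with base case Theorem~\ref{jonesmain}, iterate Lemma~\ref{relations-s} down to $k_s$, invoke the boundary identity $W(k_s,k_1,\ldots,k_s)=W(k_s,k_1,\ldots,k_{s-1})$, then feed in the inductive hypothesis for piece (i) and Lemma~\ref{formula-t} for piece (ii). The paper in fact writes ``the derivations are omitted'' at exactly the point where you supply the block-by-block bookkeeping, so your outline is strictly more detailed than the original. One minor slip: your displayed piece (i) should read $\tfrac{\theta^{N-k_s}P_{k_s}}{P_N}\cdot \tfrac{W(k_s,k_1,\ldots,k_{s-1})}{(P_{k_s})!}$ (the factor $P_{k_s}$ is missing there, though you correctly include it two sentences later when describing the multiplier).
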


\begin{proof}
We prove the claim by induction. The case $s = 1$ holds by Theorem~\ref{jonesmain}. Assume the claim holds for $W(N, k_1, \ldots, k_r)/(P_N)!,$ where $r < s$. We prove the result holds for $r = s$. By Lemma~\ref{relations-s},

$$W(N, k_1, \ldots, k_s) = \theta \cdot P_{N-1} \cdot W(N-1, k_1, \ldots, k_s) + T_{\le s-1}(N-1, k_1, \ldots, k_s)$$
$$= \theta \cdot P_{N-1} \cdot (\theta \cdot P_{N-2} \cdot W(N-2, k_1, \ldots, k_s) + T_{\le s-1}(N-2, k_1, \ldots, k_s)) + T_{\le s-1}(N-1, k_1, \ldots, k_s) = \ldots = $$
$$\theta^{N-k_s} \cdot \frac{(P_{N-1})!}{(P_{k_s-1})!}\cdot W(k_s, k_1, \ldots, k_{s-1}, k_s) + \sum\limits_{i = 1}^{N-k_s} \theta^{i-1} \cdot \frac{(P_{N-1})!}{(P_{N-i})!} \cdot T_{\le s-1}(N-i, k_1, \ldots, k_s)=$$
$$\theta^{N-k_s} \cdot \frac{(P_{N-1})!}{(P_{k_s-1})!}\cdot W(k_s, k_1, \ldots, k_{s-1}) + \sum\limits_{i = 1}^{N-k_s} \theta^{i-1} \cdot \frac{(P_{N-1})!}{(P_{N-i})!} \cdot T_{\le s-1}(N-i, k_1, \ldots, k_s).$$

By the inductive hypothesis, we can use the formula for $W(k_s, k_1, \ldots, k_{s-1})$, and the formula of \\$T_{\le s-1}(m, k_1, \ldots, k_s)$ in Lemma~\ref{formula-t} to establish the claim. The derivations are omitted.
\end{proof}

\begin{remark} A $(k_1, \ldots, k_s)$-strategy with $k_{i} = k_{i+1}$ is equivalent to a $(k_1', \ldots, k_s')$-strategy with $k_j' = k_j$ for $j \neq i+1$ and $k'_{i+1} = k_i + 1$, where $1 \le i \le s$. Thus, we may assume that when $k_1 = 0$ we have $k_2 \ge 1$. Furthermore, for $k_1 = 0$,
$$W(N, 0, k_2, \ldots, k_s) = \theta^{N-1} \cdot (P_{N-1})! + W(N, k_2, \ldots, k_{s}) \text{ and } \frac{W(N, k_1, \ldots, k_s)}{(P_N)!} = \frac{\theta^{N-1}}{P_N} + \frac{W(N, k_2, \ldots, k_s)}{(P_N)!}.$$
\end{remark}

We henceforth focus on the optimal strategy when $N \to \infty$. Let $0< \theta < 1$ be fixed. The optimal strategy is right-hand based, i.e., we wait until the very end to make even the first selection (the difference between $N$ and $k_1$ is a fixed number for each given $\theta,$ with $0<\theta<1$).

\begin{theorem}\label{asm-s}
The asymptotically optimal $(k_1, \ldots, k_s)$-strategy for $\theta < 1$ has to satisfy $N-k_1 \not \to \infty$.
\end{theorem}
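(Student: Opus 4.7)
My plan is to argue by induction on $s$, with the base case $s = 1$ supplied by Jones~\cite{jones2020weighted}, whose result (recalled in the introduction) says that for $\theta < 1$ the optimal threshold satisfies $N - k_1 = \max(-1/\ln\theta, 1)$, a constant depending only on $\theta$. For the inductive step, I would invoke Corollary~\ref{fixed} to write the optimal $s$-strategy as $(a_s, a_{s-1}, \ldots, a_1)$, observe that $(a_{s-1}, \ldots, a_1)$ is the optimal $(s-1)$-strategy, and apply the inductive hypothesis to obtain $N - a_i = O(1)$ for every $i \le s - 1$. Assuming for contradiction that $N_j - a_s \to \infty$ along some subsequence $N_j \to \infty$, the goal becomes to derive a contradiction in two steps.

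The first step asymptotically upper-bounds the optimal winning probability $p_s^*(N)$. I would decompose the summands of Lemma~\ref{formula-s} for $W(N, a_s, a_{s-1}, \ldots, a_1)/(P_N)!$ into those carrying the factor $P_{a_s}$ and those that do not. Every $P_{a_s}$-summand also carries a factor $\theta^{N - a_s - r}$ for some $1 \le r \le s$; because the inductive hypothesis makes every gap $a_i - a_{i+1}$ bounded for $i \le s - 1$, each nested multi-sum inside such a summand contributes at most $O(N - a_s)$. Consequently the total of all $P_{a_s}$-summands is bounded by $C(N - a_s)\theta^{N - a_s}$, which vanishes by the elementary estimate $m\theta^m \to 0$ for fixed $\theta \in (0,1)$. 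A careful matching of the $\delta$-indicators and nested summation ranges (through the re-indexing $K_i = k_{i+1}$) then identifies the summands that do not involve $P_{a_s}$ with the full Lemma~\ref{formula-s} formula for the $(s-1)$-strategy $(a_{s-1}, \ldots, a_1)$, giving
\[
p_s^*(N) = p_{s-1}^*(N) + o(1)
\]
along the subsequence.

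The second step produces the opposite inequality via the feasible $s$-strategy $\widetilde{S} = (a_{s-1} - 1,\, a_{s-1},\, a_{s-2}, \ldots, a_1)$. By Lemma~\ref{formula-s} its winning probability exceeds that of the $(s-1)$-strategy $(a_{s-1}, \ldots, a_1)$ by the non-negative sum of all summands carrying $P_{a_{s-1}-1}$, whose Group~1 contribution alone (where the inner sum $\sum_{i=a_{s-1}-1}^{a_{s-1}-1} 1/P_i$ collapses to $1/P_{a_{s-1}-1}$ and cancels the factor $P_{a_{s-1}-1}$) specializes to $\theta^{N - a_{s-1}}/P_N = P(\pi(a_{s-1}) = N)$. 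Since $N - a_{s-1} = O(1)$ by the inductive hypothesis and $P_N \to 1/(1-\theta)$, this quantity is bounded below by a positive constant $c$, so $p_s^*(N) \ge P(\widetilde{S} \text{ wins}) \ge p_{s-1}^*(N) + c$. This contradicts the conclusion of the first step for large $N_j$ and completes the induction.

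The hard part will be the bookkeeping in the first step: verifying term-by-term, through the re-indexing $K_i = k_{i+1}$, that deleting every summand carrying $P_{a_s}$ from the Lemma~\ref{formula-s} formula for the $s$-strategy leaves exactly the formula for the $(s-1)$-strategy $(a_{s-1}, \ldots, a_1)$. The $\delta$-indicators, the $\theta$-exponents, and the nested summation ranges all need to line up correctly under the index shift. Once this combinatorial identification is accomplished, the rest is routine: the estimate $m\theta^m \to 0$ controls the $P_{a_s}$-summands, and the non-negativity of the remaining Lemma~\ref{formula-s} summands makes the lower bound in the second step immediate.
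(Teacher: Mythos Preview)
Your proof is correct and takes a genuinely different route from the paper's.

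Both arguments proceed by induction on $s$ with the base case supplied by Jones. From there they diverge. The paper works with the recursion of Lemma~\ref{relations-s}, which decomposes $W(N,k_1,\ldots,k_s)/(P_N)!$ into the term $\theta^{N-k_s}P_{k_s}\,W(k_s,k_1,\ldots,k_{s-1})/(P_{k_s})!$ plus a string of $T_{\le s-1}$-terms. Its Step~1 shows that if $N-k_s\to\infty$ then every piece tends to $0$; its Step~2 then argues that, with $N-k_s$ bounded, both the $W(k_s,\ldots)$-factor (by the inductive hypothesis applied on $[1,k_s]$) and each $T$-term are individually \emph{not} maximized when $N-k_1\to\infty$, so the sum is not either. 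You instead exploit the explicit closed form of Lemma~\ref{formula-s} together with Corollary~\ref{fixed}: the key combinatorial identity is that deleting every summand carrying $P_{k_1}$ from the $s$-formula leaves precisely the $(s{-}1)$-formula for $(k_2,\ldots,k_s)$, and Corollary~\ref{fixed} guarantees that $(k_2,\ldots,k_s)=(a_{s-1},\ldots,a_1)$ is itself optimal so that the residue equals $p_{s-1}^*(N)$ exactly. Your Step~2 then produces a strictly positive gain via the concrete competitor $(a_{s-1}-1,a_{s-1},\ldots,a_1)$.

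What each buys: the paper's approach never needs Corollary~\ref{fixed} and treats an arbitrary threshold vector $(k_1,\ldots,k_s)$ directly, at the cost of a somewhat terse ``each piece is maximized'' argument in Step~2. Your approach is cleaner once the combinatorial identification is in hand, and the competing-strategy lower bound in your Step~2 is entirely explicit. Two small remarks: your phrase ``every gap $a_i-a_{i+1}$ bounded for $i\le s-1$'' is not literally true for $i=s-1$ (indeed $a_{s-1}-a_s\to\infty$ under the contradiction hypothesis), but this gap only appears in the Group~1 summand, where your $O(N-a_s)$ bound is exactly right; for the remaining groups the nested sums live entirely in $[k_2,N-1]=[a_{s-1},N-1]$ and are $O(1)$. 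Also, as written you rule out $N-a_s\to\infty$ only for the \emph{particular} optimal strategy furnished by Corollary~\ref{fixed}; to match the paper's statement about \emph{any} optimal strategy, note that for arbitrary $(k_1,\ldots,k_s)$ with $N-k_1\to\infty$ the same decomposition gives $W(N,k_1,\ldots,k_s)/(P_N)! \le W(N,k_2,\ldots,k_s)/(P_N)! + o(1) \le p_{s-1}^*(N)+o(1)$, which your Step~2 bound then contradicts.
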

\begin{proof}
The proof is postponed to the Appendix (Section~\ref{appendix}).
\end{proof}

Let $\theta > 1$ be fixed. The optimal strategy is left-hand based, i.e., the threshold for the $s^{\text{th}}$ selection is a constant which only depends on $\theta$.

\begin{theorem}\label{asm-l}
The asymptotically optimal $(k_1, \ldots, k_s)$-strategy for $\theta > 1$ has to satisfy $k_s \not \to \infty$.
\end{theorem}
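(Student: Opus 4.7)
The plan is to argue by contradiction, paralleling the intended proof of Theorem~\ref{asm-s}. Assume the asymptotically optimal thresholds satisfy $k_s = k_s(N) \to \infty$ as $N \to \infty$. I will show that the winning probability of such a strategy is asymptotically bounded by the optimal $(s-1)$-selection winning probability, while extending by any finite $s$-th threshold strictly improves on that bound, yielding the desired contradiction.

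For the upper bound, the key observation is that under the Mallows model with $\theta > 1$ the best candidate concentrates near the beginning: since $P(\pi^{-1}(N) = j) = \theta^{N-j}/P_N$ with $P_N = (\theta^N - 1)/(\theta - 1)$, we obtain $P(\pi^{-1}(N) > k) \to \theta^{-k}$ as $N \to \infty$. Because the $s$-th selection of any $(k_1, \ldots, k_s)$-strategy can capture the best candidate only if that candidate sits beyond position $k_s$, its contribution to the winning probability is $O(\theta^{-k_s}) = o(1)$ whenever $k_s \to \infty$. Inspecting the closed form in Lemma~\ref{formula-s}, the terms corresponding to the first $s-1$ selections (i.e.\ $A_1, \ldots, A_{s-1}$) depend on $k_s$ only through sums of the form $\sum_{i = k_{s-1}}^{k_s - 1} 1/P_i$ and nested variants; sending $k_s \to \infty$ replaces these with $\sum_{i \ge k_{s-1}} 1/P_i$, which exactly matches the $A_{s-1}$ term in the $(s-1)$-selection formula. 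Consequently one obtains $\limsup_N W(N, k_1(N), \ldots, k_s(N))/(P_N)! \le \lim_N W(N, k_{s-1}^{\star}, \ldots, k_1^{\star})/(P_N)!$, where $(k_{s-1}^{\star}, \ldots, k_1^{\star})$ are the optimal $(s-1)$-selection thresholds; by induction on $s$ with base case $s = 1$ supplied by Jones~\cite{jones2020weighted}, these are bounded.

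The contradiction comes from a strict-improvement claim: extending the optimal $(s-1)$-threshold strategy $(a_{s-1}, \ldots, a_1)$ by any finite $k_s^{\dagger} \ge a_{s-1}$ strictly increases the asymptotic winning probability. Indeed, by Lemma~\ref{formula-s} the additional contribution is exactly $A_s$, whose leading component $\theta^{N - k_s^{\dagger} - 1} P_{k_s^{\dagger}} \sum_{i = k_s^{\dagger}}^{N-1} (1/P_i)/P_N$ converges to $(1 - \theta^{-k_s^{\dagger}}) \theta^{-k_s^{\dagger}} > 0$ as $N \to \infty$, and the remaining nested terms of $A_s$ are nonnegative. Hence the asymptotically optimal $s$-threshold strategy must strictly beat the $(s-1)$-selection optimum, contradicting the upper bound above. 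The main technical obstacle will be rigorously verifying that every nested sub-term of $A_s$ does vanish as $k_s \to \infty$: this reduces to the geometric tail estimate $\sum_{i \ge k_s} 1/P_i = O(\theta^{-k_s})$ together with the uniform bound $\theta^{N - k_s - j} P_{k_s}/P_N = O(\theta^{-(j-1)})$, both of which are straightforward for $\theta > 1$ but must be applied carefully across the several layers of nested sums in the formula.
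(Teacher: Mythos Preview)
Your approach mirrors the paper's: both proceed by induction on $s$, reduce the case $k_s\to\infty$ to the $(s-1)$-selection optimum, and finish with a strict-improvement step. The paper packages the reduction as a general convergence claim (for any index $i$ with $k_{i-1}$ bounded and $k_i\to\infty$, the winning probability collapses to $\lim_{k_i\to\infty} W(k_i,k_1,\ldots,k_{i-1})/(P_{k_i})!$), whereas you argue directly that the $s$-th selection contributes $O(\theta^{-k_s})$ and the remainder is bounded by the $(s-1)$-optimum. Your route is if anything slightly cleaner, since it does not require knowing which smaller thresholds stay bounded.

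There is, however, a real slip in your strict-improvement justification. When you append a finite $k_s^{\dagger}$ to the optimal $(s-1)$-threshold strategy, the change in the formula of Lemma~\ref{formula-s} is \emph{not} simply the addition of the last block $A_s$: the $(s-1)$-st block also shrinks, because its outer sums now terminate at $k_s^{\dagger}-1$ rather than at $N-1$. Hence the net ``additional contribution'' is $A_s$ minus a strictly positive quantity, and positivity cannot be read off from $A_s>0$ alone. The fix is the monotonicity argument the paper actually uses: the $s$-threshold strategy dominates the $(s-1)$-threshold strategy pathwise (an extra selection never hurts), and the gap---the probability that the best candidate is captured precisely at the $s$-th selection---has a strictly positive limit, since it contains, e.g., the event that $N$ sits at position $k_s^{\dagger}+1$ with a suitable left-to-right-maxima pattern in $[1,k_s^{\dagger}]$, whose Mallows probability is bounded away from zero for fixed thresholds. (Also a notational point: the added threshold must satisfy $k_s^{\dagger}\ge a_1$, the largest existing threshold, not $k_s^{\dagger}\ge a_{s-1}$.)
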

\begin{proof}
The proof is postponed to the Appendix (Section~\ref{appendix}).
\end{proof}

By Lemma~\ref{asm-s}, we know that for $N \to \infty$, $0 < \theta < 1$, the optimal strategy is a $(k_1, \ldots, k_s)$-strategy for some $k_1 \le \ldots \le k_s $ with $N-k_1 \not \to \infty$. By Lemma~\ref{formula-s} and the fact that $\frac{1}{P_i} = \frac{1-\theta}{1 - \theta^i} \to 1 - \theta$ when $i \to \infty$ and $0 < \theta < 1$, the optimal probability is a function of $N-k_1, \ldots, N-k_s$ and does not depend on $N$. By Corollary~\ref{fixed}, we also know that for a fixed $0< \theta < 1$, there is an optimal strategy satisfying $k_i = a_{s+1-i}(\theta)$, where $1 \le i \le s$. To simplify notation, we henceforth use $a_j$ to denote $a_j(\theta)$,  $1 \le j \le s$.

First, note that $\frac{1}{1 - \theta^i} \to 1$ when $0 < \theta < 1$ and $i \to \infty$. Let 
$$H'_1 =  \theta^{N-a_1} \cdot (\frac{1}{\theta}-1) \cdot \sum\limits_{i = a_1}^{N-1} 1,$$
$$H'_2 = \theta^{N-a_2} \cdot  \Big((\frac{1}{\theta}-1) \cdot \sum\limits_{i = a_2}^{a_1-1} 1+ (\frac{1}{\theta}-1)^2 \cdot \delta(a_1, N) \cdot \sum\limits_{i_1 = a_1+1}^{N-1} \sum\limits_{i_2=a_1}^{i_1-1} 1  \Big),$$
$$\hspace{-8mm}H'_3 = \theta^{N-a_3} \cdot  \Big( (\frac{1}{\theta}-1) \cdot \sum\limits_{i = a_2}^{a_3-1} 1 + (\frac{1}{\theta}-1)^2 \cdot \delta(a_2, a_1) \cdot \sum\limits_{i_1 = a_2+1}^{a_1-1} \sum\limits_{i_2=a_2}^{i_1-1} 1 + (\frac{1}{\theta}-1)^3 \cdot \delta(a_1, N) \cdot \sum\limits_{i_1 = a_1+1}^{N-1} \sum\limits_{i_2 = a_1}^{i_1-1} \sum\limits_{i_3 = a_2}^{i_2-1} 1  \Big), $$
$$\ldots$$
$$\hspace{-8mm} H'_s =\theta^{N-a_s} \cdot  \Big( (\frac{1}{\theta}-1) \cdot \sum\limits_{i = a_s}^{a_{s-1}-1} 1 +  (\frac{1}{\theta}-1)^2 \cdot \delta(a_{s-1}, a_{s-2}) \cdot \sum\limits_{i_1 = a_{s-1}+1}^{a_{s-2}-1}  \sum\limits_{i_2 = a_{s-1}}^{i_1-1} 1 $$
$$+ (\frac{1}{\theta}-1)^3 \cdot \delta(a_{s-2}, a_{s-3}) \cdot \sum\limits_{i_1 = a_{s-2}+1}^{a_{s-3}-1} \sum\limits_{i_2 = a_{s-2}}^{i_1-1} \sum\limits_{i_3 = a_{s-1}}^{i_2-1} 1 + \ldots + (\frac{1}{\theta}-1)^{s} \cdot \delta(a_1, N)  \cdot \sum\limits_{i_1 = a_1+1}^{N-1} \sum\limits_{i_2 = a_1}^{i_1-1} \sum\limits_{i_3 = a_2}^{i_2-1}  \cdots \sum\limits_{i_s = a_{s-1}}^{i_{s-1}-1} 1  \Big),$$ 
where $\delta(x,y) = 1$ if $y \ge x+2$, and equals $0$ otherwise.

Hence, after reorganizing and simplifying the formula presented in Lemma~\ref{formula-s}, we have 
$$\lim\limits_{N \to \infty} \frac{W(N, a_s, \ldots, a_1)}{(P_N)!} = \lim\limits_{N \to \infty} \sum\limits_{i = 1}^{s} H'_i =: P'.$$

By Lemma~\ref{asm-l}, we know that for $N \to \infty$, $\theta > 1$, the optimal strategy is a $(k_1, \ldots, k_s)$-strategy for some $0 \le k_1 \le \ldots \le k_s \not \to \infty$. By Lemma~\ref{formula-s}, the fact that $\frac{1}{P_i} = \frac{\theta - 1}{\theta^i - 1}$ when $ \theta > 1$, and the observation that each of the sums $$\sum\limits_{i_1 = k_s+1}^{N-1} \frac{1}{\theta^{i_1}-1} \sum\limits_{i_2 = k_s}^{i_1-1}\frac{1}{\theta^{i_2}-1} \cdots \sum\limits_{i_r = k_{s+2-r}}^{i_{r-1} - 1} \frac{1}{\theta^{i_r}-1}, \quad \text{ where } 2 \le r \le s,$$ converges when $N \to \infty$, it follows that the optimal probability is a function of $k_1, \ldots, k_s$ and does not depend on $N$. 

By Corollary~\ref{fixed}, we similarly know that for a fixed $\theta > 1$, there is an optimal strategy that satisfies $k_i = a_{s+1-i}(\theta)$, where $1 \le i \le s$. 
Next, let 
$$H''_1 =  (1 - \frac{1}{\theta^{a_1}}) \cdot (1-\frac{1}{\theta}) \cdot \sum\limits_{i = a_1}^{N-1} \frac{1}{\theta^i - 1},$$
$$H''_2 = (1 - \frac{1}{\theta^{a_2}}) \cdot  \Big( (1-\frac{1}{\theta}) \cdot \sum\limits_{i = a_2}^{a_1-1} \frac{1}{\theta^i - 1} + (1 - \frac{1}{\theta})^2 \cdot \delta(a_1, N) \cdot \sum\limits_{i_1 = a_1+1}^{N-1} \frac{1}{\theta^{i_1} - 1} \sum\limits_{i_2 = a_1}^{i_1-1} \frac{1}{\theta^{i_2} - 1}  \Big),$$
$$H''_3 = (1 - \frac{1}{\theta^{a_3}}) \cdot  \Big( (1-\frac{1}{\theta}) \cdot \sum\limits_{i = a_3}^{a_2-1} \frac{1}{\theta^i - 1} + (1-\frac{1}{\theta})^2 \cdot \delta(a_2, a_1) \cdot \sum\limits_{i_1 = a_2+1}^{a_1-1} \frac{1}{\theta^{i_1} - 1} \sum\limits_{i_2= a_2}^{i_1-1} \frac{1}{\theta^{i_2}-1} + $$
$$(1-\frac{1}{\theta})^3 \cdot \delta(a_1, N) \cdot \sum\limits_{i_1 = a_1+1}^{N-1} \frac{1}{\theta^{i_1} - 1} \sum\limits_{i_2 = a_1}^{i_1-1} \frac{1}{\theta^{i_2}-1} \sum\limits_{i_3 = a_2}^{i_2-1} \frac{1}{\theta^{i_3}-1}  \Big), $$
$$\ldots$$
$$H''_s =(1 - \frac{1}{\theta^{a_s}}) \cdot  \Big( (1 - \frac{1}{\theta}) \cdot \sum\limits_{i = a_s}^{a_{s-1}-1} \frac{1}{\theta^i - 1} +  (1 - \frac{1}{\theta})^2 \cdot \delta(a_{s-1}, a_{s-2}) \cdot \sum\limits_{i_1 = a_{s-1}+1}^{a_{s-2}-1} \frac{1}{\theta^{i_1}-1} \sum\limits_{i_2 = a_{s-1}}^{i_1-1} \frac{1}{\theta^{i_2}-1} +$$
$$(1 - \frac{1}{\theta})^3 \cdot \delta(a_{s-2}, a_{s-3}) \cdot \sum\limits_{i_1 = a_{s-2}+1}^{a_{s-3}-1} \frac{1}{\theta^{i_3}-1} \sum\limits_{i_2 = a_{s-2}}^{i_1-1} \frac{1}{\theta^{i_2}-1} \sum\limits_{i_3 = a_{s-1}}^{i_2-1} \frac{1}{\theta^{i_3}-1} + \ldots + $$
$$ (1 - \frac{1}{\theta})^{s} \cdot \delta(a_{1}, a_{N}) \cdot \sum\limits_{i_1 = a_1+1}^{N-1} \frac{1}{\theta^{i_1}-1} \sum\limits_{i_2 = a_1}^{i_1-1} \frac{1}{\theta^{i_2}-1} \sum\limits_{i_3 = a_2}^{i_2-1} \frac{1}{\theta^{i_3}-1} \cdots \sum\limits_{i_s = a_{s-1}}^{i_{s-1}-1} \frac{1}{\theta^{i_s}-1}  \Big),$$ where $\delta(x,y) = 1$ if $y \ge x+2$, and equals $0$ otherwise.

After reorganizing terms and simplifying the formula presented in Lemma~\ref{formula-s} we obtain 
$$\lim\limits_{N \to \infty} \frac{W(N, a_s, \ldots, a_1)}{(P_N)!} = \lim\limits_{N \to \infty} \sum\limits_{i = 1}^{s} H''_i =: P''.$$

By Theorems~\ref{asm-s} and~\ref{asm-l}, and the formulas above, when $0< \theta < 1$, $P'$ is a function of $b_1 := N-a_1, \ldots, b_s:=N-a_s$ with $1 \le b_1 < b_2 < \ldots < b_s,$ and is maximized at some $b'_1, \ldots b'_s$ with $b'_s \not \to \infty$. When $\theta > 1$, $P''$ is a function of $a_1, \ldots, a_s$ with $0 \le a_s \le a_{s-1} \le \ldots \le a_1,$ and is maximized at some $a''_1, \ldots a''_s$ with $a''_1 \not \to \infty$. By Corollary~\ref{fixed}, for each fixed $0<\theta<1$ we can pick some large enough constant, say $1000$ ($1000$ is a large enough value to serve as a ``proxy'' for $\infty$, as can be seen from our subsequent numerical computations for values of $\theta$ as small as $0.01$ and $s$ as large as $5$), and  perform a computer search for $a''_1$ (and $b'_1$). Recall that this value characterizes our optimal strategy for $s = 1$. We then proceed to search for $a''_2 \le a''_1$ ($b'_2 \ge b'_1$ ), which jointly with $a''_1$ ($b'_1$) characterize our optimal strategy for $s = 2$. We then search for $a''_3,\ldots$ ($b'_3,\ldots$) following the same procedure.

\section{Numerical Results}\label{numerical}

All results presented herein hold for $N \to \infty$. By Corollary~\ref{fixed}, an optimal strategy for $s' = s-1 \ge 0$ queries and a fixed $\theta > 1$ is an $(a_s(\theta), \ldots, a_1(\theta))$-strategy, where $a_i(\theta) \not \to \infty$ for $1 \le i \le s$. For the case $0 < \theta < 1$, we let $b_i(\theta) = N - a_i(\theta)$, where $i \ge 1$. An optimal strategy for $s' = s - 1 \ge 0$ queries and a fixed $0 < \theta < 1$ is an $(N-b_s(\theta), \ldots, N-b_1(\theta))$-strategy, where $b_i(\theta) \not \to \infty$ for each $1 \le i \le s$. Since the maximum probability of success increases very slowly as $s$ increases above $5$, and since the corresponding computational times increase as well, we only numerically computed the maximum probability of success and an optimal strategy for $s \leq 5$. The results are presented in Table~\ref{numericalresult-1}. 

As expected, for each fixed $s \ge 1,$ the smallest probability of success arises for $\theta = 1$. Moreover, if $s$ is a fixed positive integer, then the optimal probability of success tends to $1$ when $\theta \to 0$ as well as when $\theta \to \infty$. It is also intuitively clear that when $0 < \theta < 1$, as $\theta$ decreases, the Mallows distribution concentrates around the identity permutation $[12 \cdots N]$; in this setting, a $(N-b_s, \ldots, N-b_1)$-strategy with ``small'' values of $b_1, \ldots, b_s$ has a high probability to identify the best candidate. When $\theta > 1$, the Mallows distribution concentrates around the permutation $[N (N-1) \cdots 21]$ and as $\theta$ increases, a $(a_s, \ldots, a_1)$-strategy with ``small'' values of $a_1, \ldots, a_s$ has a high probability to identify the best candidate. Note that for $\theta > 1$ and a fixed $(a_s, \ldots, a_1)$-strategy, the value of $\theta$ for which the probability of success is maximized does not occur when $\theta \to 1+$ or $\theta \to \infty$ but for some other fixed value. This is the reason for the observable small decreases in the optimal probability of success for $\theta>1$, depicted in Figure~\ref{best-1}. 


For each fixed $i \ge 1$, $a_i \to \infty$ as $\theta \to 1+$ and the number $b_i \to \infty$ as $\theta \to 1-$. 
 Furthermore, for each fixed $\theta > 0$, the optimal probability of winning increases as $s$ increases and tends to $1$ as $s \to \infty$. This is also intuitively clear, since for fixed $\theta > 0$ there is a better chance of succeeding when more queries are allowed, and we are guaranteed to succeed if we have infinitely many selections. Note that this probability increases dramatically when $s$ decreases. In particular, for $s > 5$ the smallest probability of success exceeds $0.9$; this is the main reason why we focus our attention on results for $s \leq 5$.

\vspace{3mm}

\begin{table}
\hspace{-8mm}
\begin{center}
\begin{tabular}{|c|c|c|c|c|c|c|c|c|c|c|}
\hline
$\theta$ & $b_1$ & $p$ & $b_2$  & $p$ & $b_3$  & $p$ & $b_4$  & $p$ & $b_5$  & $p$ \\ \hline

0.01& 1 & 0.99 & 2 & 0.9999 & 3 & 0.999999 & 4  & 0.99999999 & 5  & 0.9999999999 \\ \hline
0.1& 1 & 0.9 & 2 & 0.99 & 3 & 0.999 & 4  & 0.9999 & 5 & 0.99999          \\ \hline
0.2& 1 & 0.8 & 2 & 0.96 & 3 & 0.992& 4   & 0.9984 & 5   & 0.99968  \\ \hline
0.3& 1 & 0.7 & 2 & 0.91& 3 & 0.973  & 4  & 0.9919  & 5   & 0.99757            \\ \hline
0.4& 1 & 0.6 & 2 & 0.84 & 3 & 0.936  & 4   & 0.9744  & 5    & 0.98976      \\ \hline
0.5& 1 & 0.5 & 2 & 0.75 & 3 & 0.875 & 4   & 0.9375 & 5     & 0.96875    \\ \hline
0.6& 2 & 0.48 & 3 & 0.72 & 5 & 0.84672 & 6  & 0.916992 &7   & 0.955008      \\ \hline
0.7& 3 & 0.441 & 5 & 0.67767 & 6 & 0.814527  & 8  & 0.89181519  &9    & 0.9367475     \\ \hline
0.8& 4 & 0.4096 & 7 & 0.6455296 & 9 & 0.78394163  & 12 & 0.86742506 & 14 & 0.91836337   \\ \hline
0.9& 9 & 0.38742049 & 14 & 0.61618841 & 19 & 0.75683265  & 24 & 0.84462315 & 28 & 0.90023365   \\ \hline
0.91& 11 & 0.38552196 & 16 & 0.61384283 & 21 & 0.75431993 & 26  & 0.84249939   & 31    & 0.8984815 \\ \hline
0.92& 12 & 0.38365188  & 18 & 0.61122396 & 24 & 0.75183545 & 30  & 0.84029209  & 35 &  0.89669505 \\ \hline
0.93& 14 & 0.38150867 & 21 & 0.60859444 & 28 & 0.74920064 &  34  & 0.83810534 & 40   & 0.89490443 \\ \hline
0.94& 16 & 0.37948013 & 25 & 0.60588389 & 32 & 0.74670942 &  40  & 0.83589721   & 47   & 0.89310511 \\ \hline
0.95& 19 & 0.3773536  & 29 & 0.60328914 & 39 & 0.74418598 &  48  & 0.8337195   & 56 & 0.89132711\\ \hline
0.96& 24 & 0.37541325 & 37 & 0.60083222 & 49 & 0.74172818 &  60  & 0.83157908 & 70 &  0.88956127  \\ \hline
0.97& 33 & 0.37353448 & 50 & 0.59832096 & 65 & 0.73930464 &  80  & 0.82945187 & 94 & 0.88780406\\ \hline
0.98& 49 & 0.37160171 & 74 & 0.59585024 & 97 & 0.73687357 &  119 & 0.82732738  & 141 & 0.88604612\\ \hline
0.99& 99 & 0.36972964 & 149 & 0.59341831 & 195 & 0.73448001 & 239 & 0.82521971 & 282 & 0.8842961 \\ \hline
$\theta$ & $a_1$ & $p$  & $a_2$  & $p$ & $a_3$ & $p$  & $a_4$ & $p$  & $a_5$  & $p$ \\ \hline
1.01&46 & 0.36918367 & 25   & 0.59372585   & 15 &  0.73609875 &  9 &  0.82818603& 6       & 0.8884655 \\ \hline
1.02&23 & 0.37052858  & 12  &  0.59643023 & 7 &  0.74010572  &   4 & 0.83314574  &3  & 0.89440668\\ \hline
1.03&15 &   0.37184338  & 8 &    0.59927181 & 5  &  0.74430003  &  3& 0.83883113 &2 &0.90046661 \\ \hline
1.04&11 &    0.37307045   & 6 &  0.60209564 &  4 &   0.74764429 & 2 &  0.84437558 &1&0.90873876 \\ \hline
1.05&9  &     0.37453849  & 5 &  0.60494232 & 3 &  0.75260449  & 2 &   0.84770309&1&0.9146162 \\ \hline
1.06&8  &     0.37555657   & 4 & 0.60780968 & 2 & 0.75709372 &  1 &  0.85599814&1&0.91585315 \\ \hline
1.07&6  &     0.376652     & 3 & 0.60956208& 2 &  0.76137673 & 1& 0.86299515 &0& 0.92841571\\ \hline
1.08&6  &   0.3782214   &3&  0.61365158& 2 & 0.76357932 &1 &  0.86737476 &0&0.94144883  \\ \hline
1.09&5  &  0.37998224 &3 &  0.61493159 & 1 & 0.7678056  & 1&  0.86916554&0& 0.95173435\\ \hline
1.1& 5  &  0.38013275  &2 &   0.61811891 & 1  & 0.77490222  &1& 0.86897463&0& 0.95988372\\ \hline
1.2& 2  &    0.3946616   & 1 &  0.65166097 & 0 &  0.81832763 &0& 0.95363085 &0& 0.99176211\\ \hline
1.3& 1  &     0.40196949 & 1 &  0.66305426&  0 & 0.89382349 & 0 & 0.98072423&0&0.99771086\\ \hline
1.4& 1 &    0.42452167  & 0 &  0.71023596 & 0  &  0.93366818 &0&0.99101943&0& 0.9992478\\ \hline
1.5& 1 &   0.43301723   &0& 0.76635056& 0 &0.95655232 &0& 0.99547373&0& 0.99972306 \\ \hline
1.6& 1 &   0.43330022 & 0&   0.80830022 &  0& 0.97048698 & 0 & 0.99757918&0&0.99988899\\ \hline
1.7& 1 &   0.42868095 & 0&    0.84044565& 0 & 0.97935572 & 0& 0.99864254&0&0.9999524  \\ \hline
1.8& 0 &   0.44444444  & 0&  0.86557747& 0 &  0.98520273  &  0& 0.9992085&0& 0.99997843 \\ \hline
1.9& 0 &    0.47368421 & 0 &  0.88555846 & 0 &  0.98917129  & 0 & 0.999523&0&0.99998975 \\ \hline
2& 0  & 0.5   & 0 & 0.90167379 & 0 & 0.99193195 & 0 &0.99970422 & 0 & 0.99999493              \\ \hline
3& 0  &  0.66666667  &0&0.969846&0 & 0.99918728 &0& 0.99999306&0& 0.99999998          \\ \hline
4& 0  &  0.75&0&   0.98686745 &0& 0.99983997&0& 0.99999953 &0& 0.9999999999           \\ \hline
5& 0 & 0.8 &0& 0.99310967 &0& 0.99995498&0& 0.99999994&0& 0.9999999999      \\ \hline    
\end{tabular}
\caption{Maximum probability of success $p$ and an optimal strategy for $0<\theta<1$ and $\theta>1$ under the Mallows model. }\label{numericalresult-1}
\end{center}
\end{table}

\begin{figure}
\begin{center}
  \includegraphics[scale=0.6]{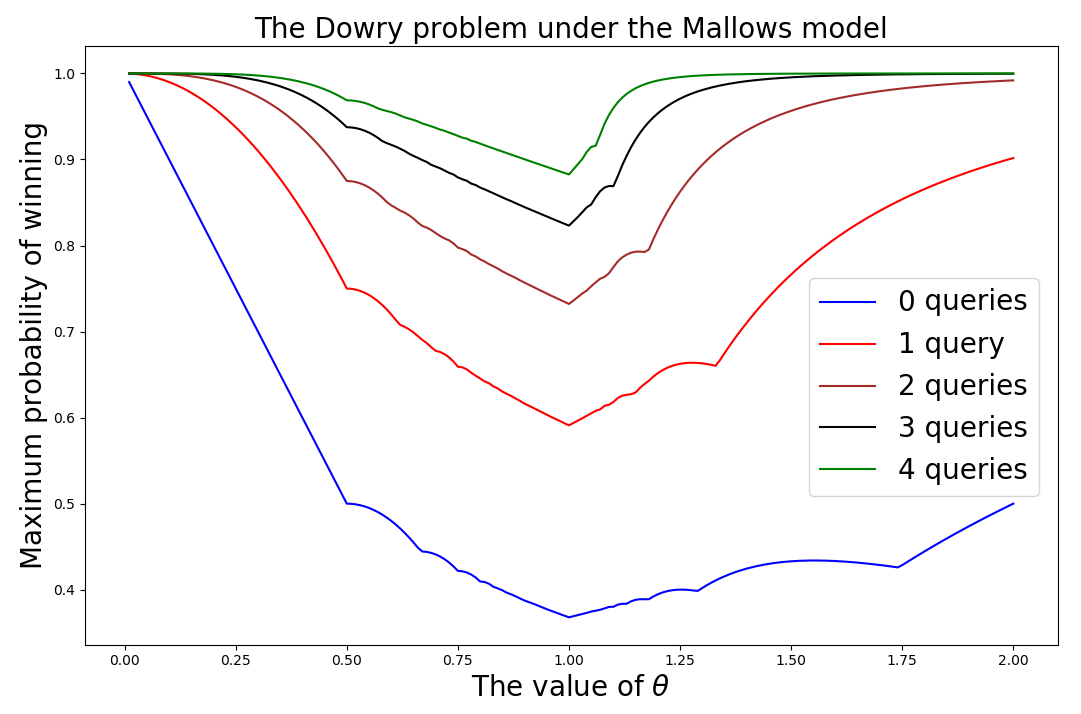}
  \caption{The maximum probability of success using our optimal strategy under the Mallows model. }\label{best-1}
\end{center}
\vspace{-8mm}
\end{figure}

\vspace{-5mm}
\section{Expected Number of Queries and Interviewed Candidates}\label{expect}

The maximum probability of winning for the Dowry model with $s$ selections and the query-based model with $s-1$ queries are the same, as both models have a budget of $s$ selections and the goal is to choose the best candidate (note that this claim holds for all Mallows parameters, but that the Dowry problem has -- until this work -- only been studied for a uniform distribution of candidate orders for which $\theta=1$). However, the expected stopping times are very different. Under the query-based model, the process immediately terminates after obtaining a positive answer from the expert. On the other hand, the decision making entity continues to interview the remaining candidates after a selection is made (provided there is a selection left) under the Dowry setting, as it has no information about weather the current candidate is the best. 

Furthermore, since a query to an expert is costly in practice, it is also of interest to examine the expected number of queries or interviews made during the process.

\subsection{Expected Number of Queries}

In this setting, we are interested in two expectations: \textbf{Unconditional expectations:} The expected number of selections made using the optimal strategy (described in Table~\ref{numericalresult-1}); \textbf{Conditional expectations:} The expected number of selections made conditioned on the event that the best candidate is selected using the optimal strategy (described in Table~\ref{numericalresult-1}).

\begin{claim}
The conditional and unconditional expected number of selections for the query-based model is the same as for the Dowry model.
\end{claim}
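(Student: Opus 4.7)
The plan is to show something stronger than the claim: for every interview order $\pi \in S_N$, the two models perform exactly the same number of selections when both use the common optimal $(k_1, \ldots, k_s)$-threshold strategy guaranteed by Theorem~\ref{strategy}. Once we have this pointwise equality of the ``number of selections'' random variable, equality of the unconditional expectations is immediate, and equality of the conditional expectations follows from the fact that the two models also have identical ``best-candidate selected'' events (they select exactly the same candidates on every $\pi$).

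First I would fix any $\pi$ and run the $(k_1,\ldots,k_s)$-threshold strategy on $\pi$ in both models, tracking the sequence of accepted positions. By definition, up to the first selection that happens to pick the globally best candidate (value $N$), both models behave identically: they reject until position $k_i$ and then accept the next left-to-right maximum. The only place the two models can possibly diverge is at that special selection, because the Genie model terminates immediately after the Genie confirms ``Best'', while the Dowry model saves the candidate and continues until either the budget $s$ is exhausted or the interview list ends.

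The key observation is that this apparent divergence produces no difference in the number of selections. Suppose the best candidate $N$ is accepted as the $j$-th selection (with $j \le s-1$) at some position $p$. In the Dowry model the strategy would, for each remaining index $j+1, \ldots, s$, wait until past the corresponding threshold and then try to accept the next left-to-right maximum. But no position to the right of $p$ can be a left-to-right maximum, since the value $N$ sits at position $p$ and dominates every subsequent value. Hence the Dowry model makes zero further selections after position $p$, so it also ends with exactly $j$ selections. If $N$ is never accepted (in particular if it appears too early, before the relevant threshold, or if it is accepted as the $s$-th selection), the two models are literally identical, so again the number of selections agrees.

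Thus for every $\pi$ the number of selections coincides in the two models, yielding equality of the unconditional expectations. For the conditional expectations, note that the event ``the best candidate is selected'' depends only on the realized sequence of accepted positions, which is identical in the two models for every $\pi$ by the same argument; conditioning on the same event and averaging the same random variable then gives the same value. I do not anticipate a substantial obstacle here: the only subtle point is verifying that ``no left-to-right maxima after position $p$ when $\pi(p) = N$'' rules out all later selections in the Dowry model, and this is essentially immediate from the definition of a left-to-right maximum and the convention (used throughout the paper) that an optimal selection strategy only ever accepts left-to-right maxima.
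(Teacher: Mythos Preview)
Your proposal is correct and follows essentially the same approach as the paper: both argue pointwise over $\pi\in S_N$ that the two models make identical selections, with the key observation that once the value $N$ is accepted there can be no further left-to-right maxima, so the Dowry model cannot make any additional selections beyond what the Genie model makes. The paper organizes the case analysis by the realized number $r$ of Genie selections (treating $r=0$, $1\le r\le s-1$ with sub-cases on whether the $r$-th selection is $N$ itself, and $r=s$) and then notes the reverse direction; your version is a bit more streamlined, splitting only on whether $N$ is ever accepted, which immediately gives that the full sequence of accepted positions coincides and hence handles both the unconditional and conditional statements at once.
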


\begin{proof}
Assume that under the query-based model we made a total $r$ selections, where $0 \le r \le s$, using our optimal $(k_1, \ldots, k_s)$-strategy with $k_i = a_{s+1-i}(\theta),$ $1 \le i \le s$. We show next that the Dowry model makes the same number of selections.

First, assume that $r = 0$. Then the value $N$ is located at a position $j \le k_1$, since otherwise there is at least one left-to-right maxima at a position in $[k_1, N]$ and thus our optimal query strategy will result in at least one selection in both models.

Next, assume that $1 \le r \le s-1$ and that the $r^{\text{th}}$ query is made at a position $j \ge k_r+1$. The value $N$ must be at a position $h$ such that $h \ge j$, since if $h < j$ there would have been no query at position $j$. 

Under the above assumption, we proceed as follows. First, assume that $h = j$. In the Dowry model, the $1^{\text{st}}$, $\ldots$, $(r-1)^{\text{th}}$ selections/queries are the same as the expert keeps giving a negative answer. We then pick up the value $N$ at position $j$ (without knowing this fact in the Dowry model) and examine the list until the end; we do not make the $(r+1)^{\text{th}}$ selection since there is no left-to-right maxima after position $j$. Second, assume that $h > j$. Then we must have $h \le k_{r+1}$ since otherwise the $(r+1)^{\text{th}}$ selection will be made. In the Dowry model, the $1^{\text{st}}$, $\ldots$, $(r-1)^{\text{th}}$, and $r^{\text{th}}$ selections are the same as those in the query-based model as the latter gives negative answers. We do not make another selection until position $k_{r+1}$ since the $(r+1)^{\text{th}}$ selection is not allowed until after position $k_{r+1}$, and we cannot perform the $(r+1)^{\text{th}}$ selection after position $k_{r+1}$ since there is no left-to-right maxima following position $h$. 

The last case to consider is $r = s$. We used all $s-1$ queries to query an expert, received $s-1$ negative answers, and then made a final decision at position $j \ge k_s+1$. Under the Dowry model, we made the same $s-1$ selections, without knowing that they are not the best, then made the final selection at the same position $j$.

Similarly, we can show that if $r$, $1 \le r \le s$, selections are made in the Dowry model then the same number of selections will be made using the query-based model. Similar arguments apply for the case when we condition on the event of identifying the best candidate.   
\end{proof}

\begin{defn}\label{weights-3}
We call a permutation $\pi \in S_N$ {\em exactly $r$-$(k_1, \ldots, k_s)$-winnable} if the best value $N$ is selected as the $r^{\text{th}}$ selection when using the $(k_1, \ldots, k_s)$-strategy, where $1 \le r \le s$. For simplicity, we abbreviate the name to $r$-winnable whenever the strategy is clear. Similarly to Definition~\ref{weights-1}, we define for $1 \le r \le s$ that
$$W_r(N, k_1, \ldots, k_s) = \sum\limits_{r\text{-winnable }\pi \in S_N} \theta^{c(\pi)}.$$
\end{defn}

\begin{defn}\label{weights-4}
We call a permutation $\pi \in S_N$ {\em exactly $r$-$(k_1, \ldots, k_s)$-pickable} if it results in exactly $r$ selections using the $(k_1, \ldots, k_s)$-strategy, where $0 \le r \le s$. We also abbreviate this reference to $r$-pickable whenever the strategy is clear. Similarly to Definition~\ref{weights-2}, we define for $0 \le r \le s$ that
$$T_r(m, k_1, \ldots, k_s) = \sum\limits_{r\text{-pickable }\pi \in S_N} \theta^{c(\pi)}.$$
\end{defn}
 
By Definition~\ref{weights-1} and~\ref{weights-3}, the following result is straightforward.

\begin{proposition}
One has 
$W(N, k_1, \ldots, k_s) = \sum\limits_{i = 1}^{s} W_i(k_1, \ldots, k_s).$
\end{proposition}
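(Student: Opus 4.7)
The proposition asserts that the $\theta$-weighted enumeration of $(k_1,\ldots,k_s)$-winnable permutations decomposes according to which selection identifies the best candidate. The plan is to exhibit this as a disjoint-union identity at the level of permutations and then sum the weights $\theta^{c(\pi)}$ termwise. Concretely, I would show
\[
\{\pi \in S_N : \pi \text{ is }(k_1,\ldots,k_s)\text{-winnable}\} = \bigsqcup_{i=1}^{s} \{\pi \in S_N : \pi \text{ is exactly } i\text{-}(k_1,\ldots,k_s)\text{-winnable}\},
\]
after which the claim is immediate by summing $\theta^{c(\pi)}$ over both sides.

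For the forward inclusion and disjointness, I would observe that applying the deterministic $s$-threshold strategy to a fixed $\pi$ produces a well-defined sequence of at most $s$ selections (the $r^{\text{th}}$ selection, if it occurs, is the first left-to-right maximum of $\pi$ appearing strictly after position $k_r$ and after the $(r{-}1)^{\text{th}}$ selection). If $\pi$ is winnable, then the value $N$ is among the selected candidates, and since $N$ can be selected at only one position, it is selected at a \emph{unique} index $i \in \{1,\ldots,s\}$. Hence $\pi$ lies in exactly one of the exactly-$i$-winnable sets, which establishes both membership in the right-hand side and pairwise disjointness of the pieces.

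For the reverse inclusion, a permutation that is exactly $i$-winnable (for some $1 \le i \le s$) has the value $N$ as its $i^{\text{th}}$ selection, and hence is winnable by the definition of $W(N,k_1,\ldots,k_s)$. Summing $\theta^{c(\pi)}$ over the partition yields $W(N,k_1,\ldots,k_s) = \sum_{i=1}^{s} W_i(N,k_1,\ldots,k_s)$ as stated.

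There is essentially no obstacle here; the result is a bookkeeping identity. The only point worth emphasizing is the determinism of the strategy, which ensures that the selection index $i$ at which $N$ is picked is unambiguous and hence that the exactly-$i$-winnable classes are genuinely disjoint rather than merely overlapping; this is guaranteed by the positional $s$-threshold structure guaranteed in Theorem~\ref{strategy}.
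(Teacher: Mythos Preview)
Your proof is correct and matches the paper's treatment: the paper states this proposition without proof, treating it as immediate from the definitions, and your disjoint-union argument is precisely the tautology underlying that. The only minor point is notational---the paper writes $W_i(k_1,\ldots,k_s)$ omitting the $N$, which you have (rightly) read as $W_i(N,k_1,\ldots,k_s)$.
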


\subsubsection{The Unconditional Expectations}

By Definition~\ref{weights-4}, it is straightforward to see that the expected number of selections made using the $(k_1, \ldots, k_s)$-strategy equals 

$$\sum\limits_{i = 1}^{s} i \cdot \frac{T_i(N, k_1, \ldots, k_s)}{(P_N)!}.$$

The following Lemma is a consequence of Definitions~\ref{weights-2} and~\ref{weights-4}.

\begin{lemma}
For $N \ge k_s+1$, we have $T_0(N, k_1, \ldots, k_s) = T_0(N, k_1)$, and for $1 \le i \le s$, $$T_i(N, k_1, \ldots, k_s) = T_{\le i}(N, k_1, \ldots, k_s) - T_{\le i-1}(N, k_1, \ldots, k_{s}).$$
\end{lemma}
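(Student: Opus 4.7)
The plan is to prove both identities directly from the definitions of the \emph{exactly $r$-pickable} and \emph{at most $r$-pickable} classes of permutations, without any computation.

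First I would handle the identity $T_0(N, k_1, \ldots, k_s) = T_0(N, k_1)$. By definition, a permutation $\pi \in S_N$ contributes to $T_0(N, k_1, \ldots, k_s)$ iff the $(k_1, \ldots, k_s)$-strategy never triggers any selection on $\pi$, which is the same as saying the \emph{first} selection is never made. By Theorem~\ref{strategy}, the first selection occurs at the first left-to-right maximum appearing at a position strictly greater than $k_1$; hence no first selection is made iff no left-to-right maximum lies in $[k_1+1, N]$. Since the overall maximum value $N$ is automatically a left-to-right maximum wherever it sits, and conversely any left-to-right maximum in $[k_1+1, N]$ forces the existence of an element larger than everything to its left in particular placing $N$ after $k_1$ in many cases), the condition simplifies to: value $N$ occupies one of the first $k_1$ positions. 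This characterization depends only on $k_1$ and matches precisely the condition for the $(k_1)$-strategy to make $0$ selections. The weighted sums $\sum_\pi \theta^{c(\pi)}$ over the two identical collections are therefore equal.

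Second, for the recurrence $T_i(N, k_1, \ldots, k_s) = T_{\le i}(N, k_1, \ldots, k_s) - T_{\le i-1}(N, k_1, \ldots, k_s)$, I would use a disjoint-union decomposition. Every $\pi \in S_N$ is deterministically assigned a unique number of selections $r(\pi) \in \{0, 1, \ldots, s\}$ by the $(k_1, \ldots, k_s)$-strategy, so the event ``$r(\pi) \le i$'' partitions as the disjoint union of the events ``$r(\pi) = j$'' for $j = 0, 1, \ldots, i$. Consequently
\begin{equation*}
T_{\le i}(N, k_1, \ldots, k_s) = \sum_{j=0}^{i} T_j(N, k_1, \ldots, k_s),
\end{equation*}
and subtracting the analogous identity for $i-1$ immediately yields the claim.

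There is no substantive obstacle here; the argument is purely combinatorial bookkeeping. The only subtle point is verifying, in Part~1, that ``no first selection'' is truly equivalent to ``value $N$ lies in $[1, k_1]$'', which one justifies by the two-sided observation about left-to-right maxima above. The hypothesis $N \ge k_s + 1$ simply ensures that all $s$ thresholds are genuine positions of the strategy; it is not otherwise used in the argument.
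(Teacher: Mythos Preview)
Your proposal is correct; the paper states this lemma without proof, treating both identities as immediate consequences of the definitions of ``exactly $r$-pickable'' and ``$\le r$-pickable,'' which is precisely the route you take.

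One remark on exposition: your sentence for the converse direction in Part~1 (``conversely any left-to-right maximum in $[k_1+1,N]$ forces \ldots\ in particular placing $N$ after $k_1$ in many cases'') is garbled and the phrase ``in many cases'' reads as a hedge. The clean statement is simply: if the value $N$ sits at some position $j\le k_1$, then every position $i>k_1$ has a strictly larger value ($N$) to its left and hence cannot be a left-to-right maximum; contrapositively, a left-to-right maximum in $[k_1+1,N]$ forces $N$ to lie in $[k_1+1,N]$. Tightening that sentence would remove any appearance of a gap.
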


Every permutation uses at most $s$ selections by the definition of the $(k_1, \ldots, k_s)$-strategy. By Lemma~\ref{formula-t}, when $\theta > 1$ and $1 \le r \le s$, we have

$$\lim\limits_{N \to \infty} \frac{T_{\le r-1}(N, k_1, \ldots, k_s)}{(P_N)!} = \lim\limits_{N \to \infty} \frac{T_{\le r-1}(N, k_1, \ldots, k_r)}{(P_N)!} = (1-\frac{1}{\theta^{k_r}}) + $$
$$(1 - \frac{1}{\theta}) \cdot (1 - \frac{1}{\theta^{k_{r-1}}}) \cdot \sum\limits_{i = k_r}^{\infty} \frac{1}{\theta^i - 1}+(1 - \frac{1}{\theta})^2 \cdot (1 - \frac{1}{\theta^{k_{r-2}}}) \cdot \sum\limits_{i_1 = k_r}^{\infty} \frac{1}{\theta^{i_1} - 1} \sum\limits_{i_2 = k_{r-1}}^{i_1 - 1} \frac{1}{\theta^{i_2} - 1} + \ldots +$$
$$ (1-\frac{1}{\theta})^{r-1} \cdot (1 - \frac{1}{\theta^{k_1}}) \cdot \sum\limits_{i_1 = k_r}^{\infty} \frac{1}{\theta^{i_1} - 1} \sum\limits_{i_2 = k_{r-1}}^{i_1-1} \frac{1}{\theta^{i_2}-1}\sum \cdots \sum\limits_{i_{r-1} = k_2}^{i_{r-2}-1} \frac{1}{\theta^{i_{r-1}} - 1}.$$

When $0 < \theta < 1$ and $1 \le r \le s$, we define $z_r = N-k_r$. Since $z_1 = N-k_1 \not \to \infty$,

$$\lim\limits_{N \to \infty} \frac{T_{\le r-1}(N, k_1, \ldots, k_s)}{(P_N)!} = \lim\limits_{N \to \infty} \frac{T_{\le r-1}(N, k_1, \ldots, k_r)}{(P_N)!} = \theta^{z_r} + $$
$$\theta^{z_{r-1}} \cdot (\frac{1}{\theta} - 1) \cdot \lim\limits_{N \to \infty}  \sum\limits_{i = k_r}^{N-1} 1 + \theta^{z_{r-2}} \cdot (\frac{1}{\theta} - 1)^2 \cdot \lim\limits_{N \to \infty}  \sum\limits_{i_1 = k_r}^{N-1} \sum\limits_{i_2 = k_{r-1}}^{i_1-1} 1 + \ldots +$$
$$\theta^{z_1} \cdot (\frac{1}{\theta} - 1)^{r-1} \cdot \lim\limits_{N \to \infty}  \sum\limits_{i_1 = k_r}^{N-1} \sum\limits_{i_2 = k_{r-1}}^{i_1-1} \sum \ldots \sum\limits_{i_{r-1} = k_2}^{i_{r-2}-1} 1. $$

\subsubsection{The Unconditional Expectations}

The expected number of queries made conditioned on successfully identifying the best candidate $N$ using the $(k_1, \ldots, k_s)$-strategy equals 
$$\frac{\sum\limits_{i=1}^{s} i \cdot W_i(N, k_1, \ldots, k_s)/((P_N)!)}{W(N, k_1, \ldots, k_s)/((P_N)!)} = \frac{\sum\limits_{i=1}^{s} i \cdot W_i(N, k_1, \ldots, k_s)}{W(N, k_1, \ldots, k_s)}.$$

Assume that $N \ge k_s+1$ and observe that $\frac{W(N, k_1, \ldots, k_s)}{(P_N)!}$ represents the probability of selecting the best candidate using the $(k_1, \ldots, k_s)$-strategy. Fix $k_1, \ldots, k_s$. 
Since there can be at most $s$ selections when using the $(k_1, \ldots, k_s)$-strategy, $\frac{W(N, k_1, \ldots, k_s)}{(P_N)!}$ represents the entity of interest when at most $s$ selections are made, $\frac{W(N, k_1, \ldots, k_{s-1})}{(P_N)!}$ represents the entity of interest when at most $s-1$ selections are made, $\ldots$, and $\frac{W(N, k_1)}{(P_N)!}$ represents the entity of interest when at most $1$ selection is made using the $(k_1, \ldots, k_s)$-strategy. This implies the following lemma. 
\begin{lemma}
For $N \ge k_s+1$, we have $W_1(N, k_1, \ldots, k_s) = W(N, k_1)$, and for $2 \le i \le s$, $$W_i(N, k_1, \ldots, k_s) = W(N, k_1, \ldots, k_i) - W(N, k_1, \ldots, k_{i-1}).$$
\end{lemma}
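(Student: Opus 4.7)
The plan is to prove both identities by identifying exactly which permutations each side of the equation counts (with the Kendall weight $\theta^{c(\pi)}$), and showing the counted sets coincide. The core observation is that the $(k_1,\ldots,k_s)$-strategy is \emph{prefix-consistent} in its thresholds: its first $i-1$ selections depend only on $k_1,\ldots,k_{i-1}$, and its $i$-th selection depends only on $k_1,\ldots,k_i$. In particular, the $(k_1,\ldots,k_i)$-strategy and the $(k_1,\ldots,k_s)$-strategy take the same actions up to and including the $i$-th selection on every permutation; they can differ only in what happens after an $i$-th selection has already been made.

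First I would handle the base case $i=1$. A permutation is exactly $1$-$(k_1,\ldots,k_s)$-winnable iff the first selection chosen by the strategy (i.e., the first left-to-right maximum occurring at some position $>k_1$) equals $N$. But that condition is insensitive to the later thresholds $k_2,\ldots,k_s$, and it is precisely the condition that $\pi$ is $(k_1)$-winnable under the single-threshold strategy. Thus $W_1(N,k_1,\ldots,k_s)=W(N,k_1)$.

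Next I would prove, for each $r$ with $1\le r\le i$, the invariance
\[
W_r(N,k_1,\ldots,k_i)=W_r(N,k_1,\ldots,k_s).
\]
Indeed, the event ``the $r$-th selection exists and equals $N$'' depends only on the first $r$ thresholds, since $r\le i\le s$. This reduces to the prefix-consistency observation above: both strategies reject identically until position $k_r+1$ has been passed with $r-1$ prior selections, and then both take the next left-to-right maximum as selection $r$. Summing the weights over the (identical) sets of exactly-$r$-winnable permutations gives the equality. Combined with the partition
\[
W(N,k_1,\ldots,k_i)=\sum_{r=1}^{i} W_r(N,k_1,\ldots,k_i),
\]
which holds because a $(k_1,\ldots,k_i)$-winnable permutation makes between $1$ and $i$ selections before $N$ is picked, I obtain $W(N,k_1,\ldots,k_i)=\sum_{r=1}^{i}W_r(N,k_1,\ldots,k_s)$. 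Subtracting the analogous identity for $i-1$ yields the claimed telescoping formula $W_i(N,k_1,\ldots,k_s)=W(N,k_1,\ldots,k_i)-W(N,k_1,\ldots,k_{i-1})$.

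There is essentially no technical obstacle here beyond making the prefix-consistency statement precise; the hardest (albeit routine) part is to spell out carefully that the $r$-th selection under the strategy is determined solely by the positions of left-to-right maxima and the thresholds $k_1,\ldots,k_r$, independently of whether further thresholds $k_{r+1},\ldots$ are available to govern later selections. The hypothesis $N\ge k_s+1$ is used only to ensure that the $(k_1,\ldots,k_s)$-strategy is a well-defined $s$-threshold strategy on $S_N$ and that every partial strategy $(k_1,\ldots,k_j)$ for $j\le s$ is also well-defined, so all the sums are meaningful.
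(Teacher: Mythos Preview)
Your proposal is correct and follows essentially the same approach as the paper: the key point in both is that the first $r$ selections under the $(k_1,\ldots,k_s)$-strategy depend only on $k_1,\ldots,k_r$, so $W(N,k_1,\ldots,k_i)$ equals the weighted count of permutations that are winnable within the first $i$ selections of the $(k_1,\ldots,k_s)$-strategy, and the claimed identities follow by telescoping. The paper states this reasoning in a single informal paragraph preceding the lemma, whereas you spell out the prefix-consistency and the partition $W(N,k_1,\ldots,k_i)=\sum_{r=1}^{i}W_r(N,k_1,\ldots,k_i)$ explicitly.
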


\subsubsection{Numerical Results for $s = 5$}
Note that the opportunities to query an expert are always precious. Since we also want to maintain consistency with Section~\ref{numerical}, we focus our attention on $s=5$. Both types of expectations based on the formulas from the previous subsections and for the optimal strategy described in Table~\ref{numericalresult-1} are listed in Table~\ref{table-2}. 

\subsection{The Expected Number of Candidates Interviewed}\label{expinterview}

The results for the query-based and the Dowry model differ in this setting. For the query-based model, we are informed 
when the best candidate is selected and thus stop interviewing. However, for the Dowry model, we do not have this information and will continue interviewing until the last candidate, except if we run out of selections. 

The results below pertain to the case that we are performing interviews using a $(k_1, \ldots, k_s)$-strategy, where $k_i = a_{s+1 - i}$ for each $1 \le i \le s$. As before, we examine both the unconditional expectations and the conditional expectation given that best candidate is identified.

\begin{table}
\begin{center}
\begin{tabular}{|c|c|c|c|c|c|c|c|}
\hline
$\theta$ & Unconditional  & Conditional & $\theta$ &  Unconditional  & Conditional   \\ \hline
0.01&4.95&4.95&1.01& 2.62667769 &2.6972733      \\ \hline
0.1&4.5&4.500045&1.02& 2.62685978 &2.68959942     \\ \hline
0.2&4&4.00128041&1.03& 2.59496416 & 2.66220979    \\ \hline
0.3&3.5&3.50852572&1.04& 2.79034804&2.80523622     \\ \hline
0.4&3&3.03103783&1.05& 2.64119948  & 2.68710199  \\ \hline
0.5&2.5&2.58064516&1.06&2.49825217 &2.57955508   \\ \hline
0.6& 2.7629312 &2.81568154&1.07& 3.29738824&3.17161865   \\ \hline
0.7&2.64786481&2.71674096&1.08& 3.20212123 &3.09601664  \\ \hline
0.8&2.6662117&2.73919532&1.09& 3.12062719&3.02727948    \\ \hline
0.9&2.63578947&2.71329095&1.10& 3.04382379& 2.96407481  \\ \hline
0.91&2.61849239&2.69864672&1.2 & 2.54054933  &2.52012037   \\ \hline
0.92&2.63266523&2.70893875&1.3 & 2.26194433 &  2.25566216  \\ \hline
0.93&2.63017179&2.70570293&1.4 & 2.07936215& 2.07716359   \\ \hline
0.94&2.63844464&2.71288008&1.5 & 1.94829005& 1.94744469   \\ \hline
0.95&2.62322511&2.7016097&1.6 & 1.84863362& 1.84828375   \\ \hline
0.96&2.62165305&2.69958325&1.7 &1.76979138 & 1.76963762   \\ \hline
0.97&2.6297567&2.70548721&1.8 & 1.70556686& 1.70549578   \\ \hline
0.98&2.62015186&2.69950777&1.9 & 1.65206303& 1.65202872  \\ \hline
0.99&2.61956681&2.69846554&2 & 1.60669004& 1.60667284   \\ \hline
&&&3 & 1.36430699& 1.36430692   \\ \hline
\textbf{1}&\textbf{2.61986256}&\textbf{2.69822343}&4 &1.26329306 & 1.26329306   \\ \hline
&&&5 & 1.20693541& 1.20693541   \\ \hline
\end{tabular}
\end{center}
\caption{Unconditional and conditional expected number of selections used in the optimal strategies with $\theta > 0$ and $s = 5$.}\label{table-2}
\vspace{-3mm}
\end{table}

\subsubsection{The Query-based Model}

\textbf{Case 1: Unconditional expectations.} 
Define 
$$Y'(m) = \sum\limits_{\substack{\pi \in S_N \text{ s.t. we terminate at } m \\ \text{ using the $(k_1, \ldots, k_s)$-strategy}}} \theta^{c(\pi)}.$$
We are interested in 
$$\sum\limits_{j = k_1+1}^{N} j \cdot \frac{Y'(j)}{(P_N)!}.$$

Let $k_1+1 \le m \le N$. There are three cases to consider for the position $m$ at which we stop interviewing the candidates in $\pi$.

\textbf{Case 1.1:} $k_i+1 \le m \le k_{i+1}$, where $1 \le i \le s-1$. The position $m$ must contain the value $N$ and at most $i-1$ selections can be made before position $m$, since if the former constraint does not hold we will continue interviewing until the $s^{\text{th}}$ selection and if the latter constraint does not hold then we cannot select the best candidate at position $m$ and thus will not stop at position $m$.

Therefore, 
$$Y'(m) = \theta^{N-m} \cdot B(m-1, N-m) \cdot T_{\le i-1}(k_1, \ldots, k_s) \cdot (P_{N-m})! = \frac{(P_{N-1})!}{(P_{m-1})!} \cdot \theta^{N-m} \cdot T_{\le i-1}(m-1,k_1, \ldots, k_s).$$

\textbf{Case 1.2:} $k_s+1 \le m \le N-1$. Then interviews terminate at $m$ if either $m$ is a left-to-right maxima and all $s-1$ experts were queried before position $m$ (with a final selection left for position $m$); or position $m$ has value $N$ and at most $s-2$ experts were queried before position $m$. To see this, if we stop at position $m$ then position $m$ must be a left-to-right maxima and either no selection is left after the final selection at position $m$ (i.e., we cannot continue interviewing) or position $m$ has the value $N$ and there is at least one query left before interviewing position $m$ (thus we query the expert and get the answer that we found the best candidate and stop at position $m$).

In the former case, the first $m$ positions are arbitrary elements from $\{1, \ldots, N\}$; the $m^{\text{th}}$ position has the largest value among the first $m$ positions; and exactly $s-1$ selections were used for positions in $[1, m-1]$. Now, $T_{s-1}(m-1, k_1, \ldots, k_s)$ counts the inversions within the first $m$ positions, $(P_{N-m})!$ counts the inversions within positions $[m+1, N]$, while $B(m, N-m)$ counts the inversions between the two sets. Moreover, $\theta^{N-m} \cdot T_{\le s-2}(m-1, k_1, \ldots, k_s) \cdot B(m-1, N-m) \cdot (P_{N-m})!$ accounts for the case when position $m$ has value $N$ and at most $s-2$ queries are made before position $m$.

Therefore, 
$$Y'(m) = B(m, N-m) \cdot T_{s-1}(m-1, k_1, \ldots, k_s) \cdot (P_{N-m})! + \theta^{N-m} \cdot T_{\le s-2}(m-1, k_1, \ldots, k_s) \cdot B(m-1, N-m) \cdot (P_{N-m})!$$
$$
= T_{s-1}(m-1, k_1, \ldots, k_s) \cdot \frac{(P_N)!}{(P_m)!} + \theta^{N-m} \cdot T_{\le s-2}(m-1, k_1, \ldots, k_s) \cdot \frac{(P_{N-1})!}{(P_{m-1})!}.$$

\textbf{Case 1.3:} $m = N$. Since the interviewing process must stop at $N$ we have 
$$Y'(N) = (P_N)! - \sum\limits_{j = k_1+1}^{N-1} Y'(j).$$

\textbf{Case 2: Conditional Expectations.} Define $$Y''(m) = \sum\limits_{\substack{\pi \in S_N \text{ is $(k_1, \ldots, k_s)$-winnable and} \\ \text{the search terminates at } m}} \theta^{c(\pi)}.$$
We are interested in 
$$\frac{\sum\limits_{j = k_1+1}^{N} j \cdot Y''(j)/(P_N)!}{W(N, k_1, \ldots, k_s)/(P_N)!} = \frac{\sum\limits_{j = k_1+1}^{N} j \cdot Y''(j)}{W(N, k_1, \ldots, k_s)}.$$

There are two cases to consider.

\textbf{Case 2.1:} $k_i+1 \le m \le k_{i+1}$, where $1 \le i \le s-1$. Then the value $N$ had to be at position $m$ and at most $i-1$ selections were made before position $m$, since position $m$ must contain the value $N$ in order for the process to terminate successfully, and since if $i$ selections were made before position $m$, one could not have choose the value $N$ at position $m$. Therefore, for this case 
$$Y''(m) = \theta^{N-m} \cdot T_{\le i-1}(m-1, k_1, \ldots, k_s) \cdot B(m-1, N-m) \cdot (P_{N-m})! = \frac{(P_{N-1})!}{(P_{m-1})!} \cdot \theta^{N-m} \cdot T_{\le i-1}(m-1, k_1, \ldots, k_s).$$

\textbf{Case 2.2:} $m \ge k_s+1$. Again the value $N$ had to be at position $m$ and at most $s-1$ selections had to be made before position $m$. Therefore, for this case 
$$Y''(m) = \theta^{N-m} \cdot T_{\le s-1}(m-1, k_1, \ldots, k_s) \cdot B(m-1, N-m) \cdot (P_{N-m})!  = \frac{(P_{N-1})!}{(P_{m-1})!} \cdot \theta^{N-m} \cdot T_{\le s-1}(m-1, k_1, \ldots, k_s).$$

\subsubsection{The Expectations for the Dowry Model}
In the Dowry model, the aim is to use all $s$ selections, since we do not have the information whether each of our selection is the best or not. Hence, $m \ge k_s + 1$.

\textbf{Case 1: Unconditional Expectations.} Define 
$$Z'(m) = \sum\limits_{\substack{\pi \in S_N \text{ terminate at } m \\ \text{ using the $(k_1, \ldots, k_s)$-strategy}}} \theta^{c(\pi)}.$$
We are interested in
$$\sum\limits_{j = k_s+1}^{N} j \cdot \frac{Z'(j)}{(P_N)!}.$$

\textbf{Case 1.1:} $k_s + 1 \le m \le N-1$. We only care weather the $s^{\text{th}}$ selections is made at position $m$. Thus, for each subset $S$ of $m$ values of $\{1, \ldots, N\}$, since the largest value of $S$ must be placed at position $m$ and exactly $s-1$ selections had to be made before position $m$, we have that $T_{s-1}(m-1, k_1, \ldots, k_s)$ counts the number of inversions within the first $m$ positions for each fixed $S$; $B(m, N-m)$ counts the number of inversions between the two sets in the partition of $N$, $|\Pi_1| = m$ and $|\Pi_2| = N-m$, while $(P_{N-m})!$ counts the number of inversions within positions $[m+1, N]$. Therefore, in this case, 
$$Z'(m) = T_{s-1}(m-1, k_1, \ldots, k_s) \cdot B(m, N-m) \cdot (P_{N-m})! = T_{s-1}(m-1, k_1, \ldots, k_s) \cdot \frac{(P_N)!}{(P_m)!}.$$

\textbf{Case 1.2:} $m = N$. All cases not covered by Case 1.1 include terminating at the last position and thus in this case $$Z'(m) = (P_N)! - \sum\limits_{j = k_s+1}^{N-1} Z'(j).$$

\textbf{Case 2: Conditional Expectations.} Define 
$$Z''(m) = \sum\limits_{\substack{\pi \in S_N \text{ is $(k_1, \ldots, k_s)$-winnable} \\ \text{and terminates at } m}} \theta^{c(\pi)}.$$
The entity of interest is 
$$\frac{\sum\limits_{j = k_s+1}^{N} j \cdot Z''(j)/(P_N)!}{W(N, k_1, \ldots, k_s)/(P_N)!} = \frac{\sum\limits_{j = k_s+1}^{N} j \cdot Z''(j)}{W(N, k_1, \ldots, k_s)}.$$

\textbf{Case 2.1:} Terminating at a position $k_s + 1 \le j \le N-1$ and identifying the optimal candidate. Then, the $s^{\text{th}}$ selection is made at position $m$ and position $m$ has value $N$. Therefore, in this case, 
$$Z''(j) = \theta^{N-m} \cdot T_{s-1}(m-1, k_1, \ldots, k_s) \cdot B(m-1, N-m) \cdot (P_{N-m})! = \theta^{N-m} \cdot T_{s-1}(m-1, k_1, \ldots, k_s) \cdot \frac{(P_{N-1})!}{(P_{m-1})!}.$$

\textbf{Case 2.2:} Terminating at position $N$ and identifying the optimal candidate. It is possible that not all of the $s-1$ selections were used before and the best candidate was selected at position $N$; it is also possible that the best candidate was picked at some position in $[k_{s+1}, N-1]$ but not all selections were used before the position $N$ and thus the search continued until after position $N$ (those two possibilities do not account for all possible cases). In this case,
$$\frac{Z''(N)}{(P_N)!} = \frac{W(N, k_1, \ldots, k_s)}{(P_N)!} - \sum\limits_{j = k_s + 1}^{N-1} \frac{Z''(j)}{(P_N)!}.$$

\subsubsection{Numerical Results for $s = 5$}
We again focus our attention on $s=5$. Both types of expectations based on the formulas from the previous subsections and for the optimal strategy described in Table~\ref{numericalresult-1} are listed in Table~\ref{table-2}.

\vspace{5mm}
\noindent\textbf{Acknowledgment.} The work was supported in part by the NSF grants NSF CCF 15-26875, CIF\,1513373, through Rutgers University, and The Center for Science of Information at Purdue University, under contract number 239 SBC PURDUE 4101-38050. The work was done while X. Liu was with the University of Illinois, Urbana-Champaign.

\section{Appendix}\label{appendix}

We make use of the following two results.

\begin{theorem}[Jones~\cite{jones2020weighted}, Theorem 6.5]\label{theta<1}
For $\theta < 1$, the optimal asymptotic selection strategy for the secretary problem is to reject $N-j(\theta) \not \to \infty$ (not depending on $N$) candidates and then accept the next left-to-right maxima thereafter.
\end{theorem}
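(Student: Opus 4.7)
The plan is to exploit the explicit formula from Theorem~\ref{jonesmain},
$$\frac{W(N,k_1)}{(P_N)!} = \theta^{N-k_1-1} \cdot \frac{P_{k_1}}{P_N} \cdot \sum_{i=k_1}^{N-1} \frac{1}{P_i},$$
and analyze its asymptotic behavior as $N \to \infty$ in two regimes: (i) $N - k_1 \to \infty$, and (ii) $N - k_1$ stays bounded. The underlying intuition is that, for $\theta < 1$, the Mallows measure concentrates near the identity permutation, so the best candidate most likely appears near the end; the decaying factor $\theta^{N-k_1-1}$ formalizes this, while the sum $\sum 1/P_i$ grows only linearly in the length of the acceptance window.

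First I would dispose of regime (i). Since $P_i \ge 1$ and $P_{k_1}/P_N \le 1$, the winning probability is bounded above by $(N-k_1)\theta^{N-k_1-1}$. The function $x \mapsto x\theta^{x-1}$ on $[1,\infty)$ attains a unique global maximum at $x = -1/\ln\theta$ and decays to $0$ thereafter, so the bound tends to $0$ uniformly over any threshold sequence with $N - k_1 \to \infty$. Hence no such strategy is asymptotically optimal.

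Next I would fix $k_1 = N - j$ with $j \ge 1$ constant and take the limit. Using $P_i \to 1/(1-\theta)$ as $i \to \infty$, we get $P_{k_1}/P_N \to 1$ and each of the $j$ summands in $\sum_{i=N-j}^{N-1} 1/P_i$ tends to $1-\theta$, yielding
$$\lim_{N \to \infty} \frac{W(N,N-j)}{(P_N)!} = j(1-\theta)\,\theta^{j-1},$$
which is strictly positive. To pinpoint $j(\theta)$, I would maximize $f(j) = j(1-\theta)\theta^{j-1}$ over the positive integers. Treating $j$ as continuous, $f'(j) = (1-\theta)\theta^{j-1}(1 + j\ln\theta) = 0$ gives the critical point $j^* = -1/\ln\theta$, and a routine second-derivative check confirms it is a maximum. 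When $-1/\ln\theta < 1$ (i.e.\ $\theta$ is small), the integer maximizer collapses to $j = 1$, so $j(\theta) = \max(-1/\ln\theta,\,1)$.

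The main obstacle is the uniformity in regime (i): one must rule out \emph{every} sequence of thresholds with $N - k_1 \to \infty$, not merely those for which $N - k_1$ grows at a prescribed rate. Working with the explicit envelope $x\theta^{x-1}$ delivers exactly this uniformity, since the envelope depends only on $N - k_1$ and vanishes once $N - k_1$ exceeds the maximizer of the envelope. A secondary (minor) point is the integer-versus-continuous optimization of $f(j)$, which is handled by rounding to the nearest feasible integer and appealing to monotonicity of $f$ on either side of $j^*$.
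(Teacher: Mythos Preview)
Your argument is correct and self-contained: the envelope bound $(N-k_1)\theta^{N-k_1-1}$ cleanly eliminates all strategies with $N-k_1\to\infty$, and the limit computation $\lim_{N\to\infty} W(N,N-j)/(P_N)! = j(1-\theta)\theta^{j-1}$ together with the one-variable optimization recovers exactly the threshold $j(\theta)=\max(-1/\ln\theta,1)$ stated in the introduction of the paper.

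Note, however, that the present paper does \emph{not} supply its own proof of this statement; it is quoted verbatim as Theorem~6.5 of Jones~\cite{jones2020weighted} and used only as the base case ($s=1$) for the inductive proof of Theorem~\ref{asm-s}. Your approach is precisely the one Jones takes---starting from the closed form in Theorem~\ref{jonesmain} (itself due to Jones) and reading off the asymptotics---so there is no methodological divergence to report. The only cosmetic remark is that your regime-(i) bound could be phrased slightly more carefully for $k_1=0$ (where $P_{k_1}=P_0=0$ and the formula degenerates), but this case is trivially suboptimal and does not affect the argument.
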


\begin{theorem}[Jones~\cite{jones2020weighted}, Corollary 6.6]\label{theta>1}
For $\theta > 1$, the optimal asymptotic selection strategy for the secretary problem is to reject $j(\theta) \not \to \infty$ (not depending on $N$) candidates and then accept the next left-to-right maxima thereafter.
\end{theorem}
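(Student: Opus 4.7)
The plan is to pass directly to the $N \to \infty$ limit in the closed-form expression from Theorem~\ref{jonesmain} and show that the resulting limiting winning probability is maximized at a finite threshold. For the single-selection problem ($s=1$), Theorem~\ref{strategy} already guarantees that every optimal policy is a threshold $(k_1)$-strategy, so it suffices to track how the optimal $k_1^*(N)$ behaves as $N \to \infty$.

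First I would divide the formula in Theorem~\ref{jonesmain} by $(P_N)!$ to write the success probability of the $(k_1)$-strategy as
\begin{equation*}
p_N(k_1) \;=\; \theta^{N-k_1-1}\,\frac{P_{k_1}}{P_N}\,\sum_{i=k_1}^{N-1}\frac{1}{P_i}.
\end{equation*}
For $\theta > 1$, substituting $P_j = (\theta^j-1)/(\theta-1)$, the prefactor equals $\theta^{N-k_1-1}(\theta^{k_1}-1)/(\theta^N-1)$, which tends to $\theta^{-1}(1-\theta^{-k_1})$ as $N\to\infty$. Meanwhile $1/P_i = (\theta-1)/(\theta^i-1)$ decays geometrically in $i$, so the partial sum converges to the finite tail $S(k_1):=\sum_{i \ge k_1}(\theta-1)/(\theta^i-1)$.

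Hence $p_N(k_1) \to g(k_1):=\theta^{-1}(1-\theta^{-k_1})\,S(k_1)$, a function of $\theta$ and $k_1$ only, independent of $N$. The crucial step is to observe that $g(k_1)\to 0$ as $k_1\to\infty$: the first factor is bounded above by $\theta^{-1}$, while $S(k_1)\le (\theta-1)\cdot\theta^{-k_1}\cdot\sum_{j\ge 0}\theta^{-j} = \theta^{1-k_1}$, which decays geometrically. Since $g(1)>0$ and $g(k_1)\to 0$, the discrete maximizer $j(\theta):=\arg\max_{k_1\ge 1} g(k_1)$ is attained at some finite positive integer depending only on $\theta$.

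Finally I would transfer this to the finite-$N$ optimum. The convergence $p_N(k_1)\to g(k_1)$ is uniform on any bounded range of $k_1$, since all the involved sums are geometric with rates governed by $\theta>1$. Combined with a uniform upper bound $p_N(k_1)\le \theta^{1-k_1}/(\theta-1) + o(1)$ valid for all $k_1$ (obtained by bounding the finite sum by its infinite-series majorant), this forces $k_1^*(N)$ to stay bounded, so $k_1^*(N) \not\to\infty$ and in the limit equals $j(\theta)$. The main obstacle is exactly this uniformity step, since $k_1$ is a free integer parameter that could a priori grow with $N$; but the two-sided geometric control in both $N$ and $k_1$ makes the required estimates routine.
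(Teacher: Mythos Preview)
The paper does not prove this statement at all: it is quoted verbatim as Corollary~6.6 of Jones~\cite{jones2020weighted} and used as a black box in the Appendix (as the base case for the induction in Theorem~\ref{asm-l}). So there is no in-paper argument to compare against.

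Your argument is correct and is essentially the natural way to recover Jones' result from the closed form in Theorem~\ref{jonesmain}. The key steps---passing to the limit $p_N(k_1)\to g(k_1)=\theta^{-1}(1-\theta^{-k_1})S(k_1)$, showing $g(k_1)\to 0$ geometrically in $k_1$, and then using a uniform-in-$N$ tail bound to pin down $k_1^*(N)$---are all sound. One small slip: your inequality $S(k_1)\le (\theta-1)\theta^{-k_1}\sum_{j\ge 0}\theta^{-j}=\theta^{1-k_1}$ implicitly uses $1/(\theta^i-1)\le \theta^{-i}$, which is false; the valid bound is $1/(\theta^i-1)\le \theta^{1-i}/(\theta-1)$ for $i\ge 1$, giving $S(k_1)\le \theta^{2-k_1}/(\theta-1)$. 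This only changes the constant and leaves the geometric decay, hence the conclusion, intact. The uniformity step is fine once you note that $\theta^{N-k_1-1}P_{k_1}/P_N\le 1/(\theta-1)$ for all $N\ge 1$ and $k_1\ge 0$, which together with the corrected bound on $S(k_1)$ gives a uniform-in-$N$ upper bound on $p_N(k_1)$ that decays in $k_1$.
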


\textbf{Proof of Theorem~\ref{asm-s}.} The proof follows by induction.

First, note that $N-k_1 \not \to \infty$ implies $N-k_2, \ldots, N-k_s \not \to \infty$. The base case for one threshold hold by Theorem~\ref{theta<1}. We assume the argument works for $ \le s-1$ thresholds and prove the result for $s$-thresholds with $s \ge 2$. 

By Lemma~\ref{relations-s} and the fact $T_{\le s-1}(k_s, k_1, \ldots, k_{s-1}, k_s) = (P_{k_s})!$, we have 
$$\frac{W(N, k_1, \ldots, k_s)}{(P_N)!} = \frac{1}{P_N} \cdot  \Big( \frac{\theta^{N-k_s} \cdot P_{k_s} \cdot W(k_s, k_1, \ldots, k_{s-1})}{(P_{k_s})!} + \frac{T_{\le s-1}(N-1, k_1, \ldots, k_s)}{(P_{N-1})!}+$$
\begin{equation}
\theta \cdot \frac{T_{\le s-1}(N-2, k_1, \ldots, k_s)}{(P_{N-2})!} + \ldots + \theta^{N-k_s-2} \cdot \frac{T_{\le s-1}(k_s+1, k_1, \ldots, k_s)}{(P_{k_s + 1})!} + \theta^{N-k_s-1} \cdot 1  \Big).
\end{equation}

\textbf{Step 1:} Suppose that $N-k_s \not \to \infty$ does not hold, i.e., that $N-k_s \to \infty$. 
Then, since the probability
$$\frac{W(k_s, k_1, \ldots, k_{s-1})}{(P_{k_s})!} \le 1, \quad \text{ we have }$$ 
$$\frac{\theta^{N-k_s} \cdot P_{k_s} \cdot W(k_s, k_1, \ldots, k_{s-1})}{(P_{k_s})!} \to \frac{\theta^{N-k_s}}{1-\theta} \cdot \frac{W(k_s, k_1, \ldots, k_{s-1})}{(P_{k_s})!} \to 0.$$
Moreover, by Lemma~\ref{formula-t}, for $0 \le j \le N-k_s-2$,
\begin{align}\label{formula-t2}
\theta^j \cdot \frac{T_{\le s-1}(N-j-1, k_1, \ldots, k_s)}{(P_{N-j-1})!} = \frac{1}{P_{N-j-1}} \cdot  \Big( \theta^{N-k_{s}-1} \cdot P_{k_s} + \theta^{N-k_{s-1}-2} \cdot P_{k_{s-1}} \cdot \sum\limits_{i = k_s}^{N-j-2} \frac{1}{P_i}+
\end{align}
$$\theta^{N-k_{s-2}-3} \cdot P_{k_{s-2}} \cdot \sum\limits_{i_1 = k_s}^{N-j-2} \frac{1}{P_{i_1}} \sum\limits_{i_2 = k_{s-1}}^{i_1-1} \frac{1}{P_{i_2}}  + \ldots + \theta^{N-k_1-s} \cdot P_{k_1} \cdot \sum\limits_{i_1 = k_s}^{N-j-2} \frac{1}{P_{i_1}} \cdot \sum\limits_{i_2 = k_{s-1}}^{i_1-1} \frac{1}{P_{i_2}} \sum \cdots \sum\limits_{i_{s-1} = k_2}^{i_{s-2}-1} \frac{1}{P_{i_{s-1}}}  \Big).$$
For each term inside the bracket of~\eqref{formula-t2}, since $N-k_i \to \infty$, $\theta^{N-k_i} \to 0$ exponentially, and the sum part (without the multiplier) of each term approaches infinity as a polynomial function in $N-k_{i+1} \le N-k_{i}$. The latter claim holds since $\frac{1}{P_i} = \frac{1-\theta}{1-\theta^i} \le 1$ when $i \ge 1$ and the smallest value of the subscript equals $k_2 \ge 1$. 

Since $N-j-1 \ge k_s+1$, we have
$$\theta^j \cdot \frac{T_{\le s-1}(N-j-1, k_1, \ldots, k_s)}{(P_{N-j-1})!} \le (\theta^{N-k_s-1} + \theta^{N-k_{s-1}-2} \cdot (N-k_{s-1}) + \theta^{N-k_{s-2}-3} \cdot (N-k_{s-2})^2 + \ldots$$
$$ +  \theta^{N-k_{1}-s} \cdot (N-k_{1})^{s-1}) \le s \cdot \max\limits_{q=1}^{s}\{\theta^{N-k_{q}-(s+1-q)} \cdot (N-k_{q})^{s-q}\} \to 0.$$
Above, the convergence rate $\to 0$ is exponential. Thus, 
$$
\frac{T_{\le s-1}(N-1, k_1, \ldots, k_s)}{(P_{N-1})!}+
\theta \cdot \frac{T_{\le s-1}(N-2, k_1, \ldots, k_s)}{(P_{N-2})!} + \ldots + \theta^{N-k_s-2} \cdot \frac{T_{\le s-1}(k_s+1, k_1, \ldots, k_s)}{(P_{k_s + 1})!} + \theta^{N-k_s-1}$$
$$
\le (N-k_s) \cdot s \cdot \max\limits_{q=1}^{s}\{\theta^{N-k_{q}-(s+1-q)} \cdot (N-k_{q})^{s-q}\} 
\le s \cdot \max\limits_{q=1}^{s}\{\theta^{N-k_{q}-(s+1-q)} \cdot (N-k_{q})^{s+1-q}\} \to 0. 
$$

\textbf{Step 2:} Suppose that $N-k_1 \not \to \infty$ does not hold, i.e., that $N-k_1 \to \infty$.  

From Step 1, we know that for an optimal strategy, $N-k_{s} \not \to \infty$ has to hold. We also know that $k_s-k_1 \to \infty$. By the induction hypothesis, 
\begin{equation}\label{reason-1}
\frac{W(k_s, k_1, \ldots, k_{s-1})}{(P_{k_s})!} \text{ is maximized when } k_s-k_1 \not \to \infty (\text{ and is not maximized when } k_s-k_1 \to \infty).
\end{equation}
Moreover, for each $0 \le j \le N-k_s-1$, by Lemma~\ref{formula-t} and~\eqref{formula-t2},
\begin{equation}\label{reason-2}
\theta^j \cdot \frac{T_{\le s-1}(N-j-1, k_1, \ldots, k_s)}{(P_{N-j-1})!} \text{ is maximized when } N-k_s, \ldots, N-k_1 \not \to \infty.
\end{equation}
To see why this is the case, say $N-k_i \to \infty$ and $N-k_{i+1} \not \to \infty$. Then the term involving $\theta^{N-k_q-(s+1-q)}$ approaches zero for $1 \le q \le i$ (when $k_1 = 0$, the last term is zero by default).

Therefore, by~\eqref{reason-1} and~\eqref{reason-2}, $$\frac{W(N, k_1, \ldots, k_s)}{(P_N)!} \text{ is maximized only when }N-k_1 \not \to \infty,$$
which also implies $N-k_1 \not\to \infty, \ldots, N-k_s \not\to \infty$. \hfill\qed

\textbf{Proof of Theorem~\ref{asm-l}:} The proof proceeds by induction. For consistency, let $k_0 = -1$ and $k_{s+1} = N$.

Note that $k_s \not \to \infty$ implies $k_{s-1}, \ldots, k_1 \not \to \infty$. By Theorem~\ref{theta>1}, the argument works for one threshold. Suppose now that the argument works for at most $s-1$ thresholds. We prove the claimed result for $s$ thresholds. To this end, we show that the choice $k_i \not \to \infty$ and $k_{i+1} \to \infty$ is always at least as good as the choice $k_{i-1} \not \to \infty$ and $k_i \to \infty$, where $1 \le i \le s-1$. 

\begin{claim}\label{converge}
For the case $k_i \to \infty$ and $k_{i-1} \not \to \infty$, where $1 \le i \le s$ we have
$$\frac{W(N, k_1, \ldots, k_s)}{(P_N)!} \to \frac{ W(k_i, k_1, \ldots, k_{i-1})}{(P_{k_i})!}.$$
\end{claim}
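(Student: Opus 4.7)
The plan is to iterate the recursion of Lemma~\ref{relations-s}, peeling off one threshold at a time from $k_s$ down to $k_i$, and to show that at each peeling step the ``residual'' contribution from the $T_{\le j}$ terms is asymptotically negligible when $\theta>1$ and the peeled threshold diverges. The starting point is the unrolled identity already established inside the proof of Lemma~\ref{formula-s}, namely
$$W(N, k_1, \ldots, k_s) = \theta^{N-k_s} \cdot \frac{(P_{N-1})!}{(P_{k_s-1})!} \cdot W(k_s, k_1, \ldots, k_{s-1}) + \sum_{j=1}^{N-k_s} \theta^{j-1} \cdot \frac{(P_{N-1})!}{(P_{N-j})!} \cdot T_{\le s-1}(N-j, k_1, \ldots, k_s),$$
where I have used that the trailing threshold $k_s$ is irrelevant inside $W(k_s,k_1,\ldots,k_{s-1},k_s)$ since any list of length $k_s$ permits at most $s-1$ selections. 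Dividing by $(P_N)!$ exhibits $W(N,k_1,\ldots,k_s)/(P_N)!$ as a scaled copy of $W(k_s,k_1,\ldots,k_{s-1})/(P_{k_s})!$ plus a residual sum.

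Two asymptotic estimates then drive the analysis. First, using $P_m=(\theta^m-1)/(\theta-1)$ for $\theta>1$, the coefficient of the leading term simplifies to
$$\frac{\theta^{N-k_s}\cdot P_{k_s}}{P_N} = \frac{1-\theta^{-k_s}}{1-\theta^{-N}},$$
which tends to $1-\theta^{-k_s}$ as $N\to\infty$ and then to $1$ as $k_s\to\infty$. Second, because each ratio $T_{\le s-1}(m,k_1,\ldots,k_s)/(P_m)!$ is a probability in $[0,1]$, the residual sum divided by $(P_N)!$ is dominated by
$$\sum_{j=1}^{N-k_s}\frac{\theta^{j-1}}{P_N} = \frac{\theta^{N-k_s}-1}{\theta^N-1},$$
which tends to $\theta^{-k_s}$ as $N\to\infty$ and then to $0$ as $k_s\to\infty$. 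Combining the two estimates yields
$$\frac{W(N,k_1,\ldots,k_s)}{(P_N)!} = \frac{W(k_s,k_1,\ldots,k_{s-1})}{(P_{k_s})!} + o(1),$$
which already settles the case $i=s$.

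For the remaining cases $i<s$, the plan is simply to iterate: apply the same peeling argument to $W(k_s,k_1,\ldots,k_{s-1})/(P_{k_s})!$ with $k_{s-1}$ playing the role of $k_s$, then to the result with $k_{s-2}$, and so on, until the $(i-1)$-threshold probability $W(k_i,k_1,\ldots,k_{i-1})/(P_{k_i})!$ is exposed. The hypothesis $k_i\to\infty$, together with monotonicity $k_i\le k_{i+1}\le\cdots\le k_s$, ensures that every peeled threshold $k_j$ with $j\ge i$ also diverges, so each peeling step contributes a multiplicative factor $1-\theta^{-k_j}$ tending to $1$ and an additive error of order $\theta^{-k_j}$ tending to $0$. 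The main obstacle I anticipate is the telescoping bookkeeping: one needs to verify that, after $s-i+1$ peelings, the product of coefficient factors indeed tends to $1$ and that the accumulated additive error, obtained by multiplying each residual by the already-accumulated (uniformly bounded) coefficient, remains $o(1)$. Everything else reduces to routine manipulation of the geometric identity $P_m=(\theta^m-1)/(\theta-1)$ valid for $\theta>1$.
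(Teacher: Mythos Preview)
Your approach is correct and is essentially the paper's own argument, differing only in organization: the paper unrolls the recursion of Lemma~\ref{relations-s} all the way from $N$ down to $k_i$ in a single display and then bounds the entire block of $T_{\le j}$ terms by the geometric series $1+\theta+\cdots+\theta^{N-k_i-1}$, whereas you peel off one threshold at a time and bound each residual separately. The two computations use the same three ingredients---the recursion, the trivial bound $T_{\le j}(m,\ldots)/(P_m)!\le 1$, and the identity $P_m=(\theta^m-1)/(\theta-1)$---and yield the same $O(\theta^{-k_j})$ control on each residual; since only finitely many peelings occur and the multiplicative coefficients lie in $[1-\theta^{-k_j},1]$, your anticipated ``telescoping bookkeeping'' is indeed routine. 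One cosmetic point: your phrasing ``tends to $\theta^{-k_s}$ as $N\to\infty$ and then to $0$ as $k_s\to\infty$'' suggests a sequential limit, but in fact the bound $\frac{\theta^{N-k_s}-1}{\theta^N-1}\le \theta^{-k_s}/(1-\theta^{-1})$ holds uniformly for all $N\ge k_s$, so no iterated-limit issue arises.
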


\begin{proof}
By Lemma~\ref{relations-s} and $T_{\le s-1}(k_s, k_1, \ldots, k_{s-1}, k_s) = (P_{k_s})!$, 
$$\frac{W(N, k_1, \ldots, k_s)}{(P_N)!} = \frac{1}{P_N} \cdot  \Big( \frac{\theta^{N-k_s} \cdot P_{k_s} \cdot W(k_s, k_1, \ldots, k_{s-1})}{(P_{k_s})!} +  \frac{T_{\le s-1}(N-1, k_1, \ldots, k_s)}{(P_{N-1})!} +$$
$$
\theta \cdot \frac{T_{\le s-1}(N-2, k_1, \ldots, k_s)}{(P_{N-2})!}  + \ldots + \theta^{N-k_s-2} \cdot \frac{T_{\le s-1}(k_s+1, k_1, \ldots, k_s)}{(P_{k_s + 1})!} + \theta^{N-k_s-1}  \Big).
$$
$$
= \quad \cdots =
$$
$$
\frac{1}{P_N} \cdot  \Big( \frac{\theta^{N-k_i} \cdot P_{k_i} \cdot W(k_i, k_1, \ldots, k_{i-1})}{(P_{k_i})!} +  
$$
$$
\frac{T_{\le s-1}(N-1, k_1, \ldots, k_s)}{(P_{N-1})!} +\theta \cdot \frac{T_{\le s-1}(N-2, k_1, \ldots, k_s)}{(P_{N-2})}  + \ldots + \theta^{N-k_s-2} \cdot \frac{T_{\le s-1}(k_s+1, k_1, \ldots, k_s)}{(P_{k_s + 1})!} + \theta^{N-k_s-1}
$$
$$
+ \quad \cdots \quad +
$$
\begin{equation}\label{prob2}
+ \theta^{N-k_{i+1}} \cdot \frac{T_{\le i}(k_{i+1}-1, k_1, \ldots, k_{i})}{(P_{k_{i+1} - 1})!} + \ldots + \theta^{N-k_i-2} \cdot \frac{T_i(k_i + 1, k_1, \ldots, k_i)}{(P_{k_i+1})!} + \theta^{N-k_i-1}
 \Big).
\end{equation}

Since $T_j(m, k_1, \ldots, k_{j+1}) \le (P_m)!$,
$$
\frac{T_{\le s-1}(N-1, k_1, \ldots, k_s)}{(P_{N-1})!} +\theta \cdot \frac{T_{\le s-1}(N-2, k_1, \ldots, k_s)}{(P_{N-2})!}  + \ldots + \theta^{N-k_s-2} \cdot \frac{T_{\le s-1}(k_s+1, k_1, \ldots, k_s)}{(P_{k_s + 1})!} + \theta^{N-k_s-1}
$$
$$
+ \quad \cdots  \quad +
$$
$$
+ \theta^{N-k_{i+1}} \cdot \frac{T_{\le i}(k_{i+1}-1, k_1, \ldots, k_{i})}{(P_{k_{i+1} - 1})!} + \ldots + \theta^{N-k_i-2} \cdot \frac{T_i(k_i + 1, k_1, \ldots, k_i)}{(P_{k_i+1})!} + \theta^{N-k_i-1}
$$
$$
\le 1 + \theta + \ldots + \theta^{N-k_i-1}= \frac{\theta^{N-k_i} - 1}{\theta - 1},
$$
and thus $~\eqref{prob2}$ converges to
$$
\frac{1}{1-1/(\theta^N)} \cdot  \Big(  (1-1/\theta^{k_i}) \cdot \frac{ W(k_i, k_1, \ldots, k_{i-1})}{(P_{k_i})!} +  (1/\theta^{k_i} - 1/(\theta^N))  \Big) \to \frac{ W(k_i, k_1, \ldots, k_{i-1})}{(P_{k_i})!} \text{ as } k_i \to \infty. \qedhere
$$
\end{proof}

We compare the class of $(k_1, \ldots, k_s)$-strategies for which $k_{i-1} \not \to \infty$ (Case 1) and $k_i \to \infty$, with the class of $(k_1, \ldots, k_s)$-strategies for which $k_{i} \not \to \infty$ and $k_{i+1} \to \infty$ (Case 2). 

\begin{claim}\label{better}
For every strategy $(k_1', \ldots, k_{s}')$ covered under Case 1 there is a strategy covered under Case 2 which performs better.
\end{claim}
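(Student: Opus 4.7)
The plan is to take an arbitrary Case 1 strategy $(k_1', \ldots, k_s')$ with $k_{i-1}' \not\to \infty$ and $k_i' \to \infty$, and to exhibit a Case 2 strategy whose asymptotic winning probability is at least as large. First, applying Claim~\ref{converge} at position $i$ we obtain
$$\lim_{N \to \infty} \frac{W(N, k_1', \ldots, k_s')}{(P_N)!} \;=\; \lim_{k_i' \to \infty} \frac{W(k_i', k_1', \ldots, k_{i-1}')}{(P_{k_i'})!},$$
which, by the closed form in Lemma~\ref{formula-s}, is the asymptotic winning probability of the $(i-1)$-threshold strategy $(k_1', \ldots, k_{i-1}')$ on a list whose length tends to infinity.

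Next, I construct a Case 2 strategy $(k_1, \ldots, k_s)$ by setting $k_j := k_j'$ for $1 \le j \le i-1$, $k_i := k_{i-1}'$ (bounded and consistent with the monotonicity constraint $k_{i-1} \le k_i$), and $k_{i+1} := \cdots := k_s := \lfloor \log N \rfloor$ (so that $k_{i+1} \to \infty$ while the ordering is preserved). A second application of Claim~\ref{converge}, this time at position $i+1$, gives
$$\lim_{N \to \infty} \frac{W(N, k_1, \ldots, k_s)}{(P_N)!} \;=\; \lim_{k_{i+1} \to \infty} \frac{W(k_{i+1}, k_1', \ldots, k_{i-1}', k_{i-1}')}{(P_{k_{i+1}})!},$$
i.e.\ the asymptotic winning probability of the $i$-threshold strategy $(k_1', \ldots, k_{i-1}', k_{i-1}')$ on a list whose length tends to infinity.

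It remains to compare the two asymptotic quantities. I argue that the $i$-threshold strategy $(k_1', \ldots, k_{i-1}', k_{i-1}')$ dominates the $(i-1)$-threshold strategy $(k_1', \ldots, k_{i-1}')$ pointwise on every $S_M$: the first $i-1$ selections are carried out in exactly the same way under both strategies, so every permutation on which the $(i-1)$-threshold strategy selects $N$ is also won by the $i$-threshold strategy. Weighting by $\theta^{c(\cdot)}$, dividing by $(P_M)!$, and passing to the limit $M \to \infty$ transfers this pointwise dominance to the asymptotic limits computed above, completing the argument. The main obstacle is the bookkeeping around commuting limits — verifying that the two invocations of Claim~\ref{converge} produce comparable limits regardless of the specific growth rate of the unbounded thresholds — but this is a routine consequence of Lemma~\ref{formula-s}, in which the unbounded thresholds appear only through $\theta^{-k_j}$ factors that vanish at infinity, leaving a limit that depends solely on $\theta$ and on the bounded thresholds.
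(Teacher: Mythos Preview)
Your proposal is correct and follows essentially the same approach as the paper: both apply Claim~\ref{converge} to reduce each strategy's asymptotic winning probability to that of a shorter-threshold strategy on a list of diverging length, then compare by observing that the Case~2 reduction yields one extra selection with the same earlier thresholds. The only cosmetic difference is your construction of the Case~2 witness---you set $k_i:=k_{i-1}'$ and force the tail to $\lfloor\log N\rfloor$, whereas the paper simply keeps $k_j'':=k_j'$ for $j\neq i$ (so that $k_{i+1}''=k_{i+1}'\ge k_i'\to\infty$ automatically) and leaves $k_i''$ free subject to $k_i''\ge k_{i-1}'$ and $k_i''\not\to\infty$; your explicit pointwise dominance argument (``the first $i-1$ selections coincide'') spells out what the paper summarizes as ``the latter case has one more selection.''
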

\begin{proof}
Let $(k_1', \ldots, k_{s}')$ be a strategy with $k_{i-1}' \not \to \infty$ and $k_i' \to \infty$ (Case 1) and let $(k_1'', \ldots, k_{s}'')$ be a strategy such that $k_j'' = k_j'$ for $j \in \{1, \ldots, s\} - \{i\}$, $k_i'' \not \to \infty$. Note that $k_i'' \not\to \infty, k_{i+1}'' \to \infty$ and thus the $(k_1'', \ldots, k_s'')$-strategy is also covered under Case 2.

By Claim~\ref{converge}, the probability of success for the $(k_1', \ldots, k_s')$-strategy is
$$
\lim\limits_{k_i \to \infty} \frac{ W(k_i', k_1', \ldots, k_{i-1}')}{(P_{k_i'})!} < \lim\limits_{k_{i+1}'' \to \infty} \frac{ W(k_{i+1}'', k_1'', \ldots, k_{i-1}'', k_{i}'')}{(P_{k_{i+1}''})!},
$$
which is the winning probability for the $(k_1'', \ldots, k_s'')$-strategy since the latter case has one more selection than the former case and the value $k_i''$ can be placed anywhere as long as $k_i'' \ge k_{i-1}''$ and $k_i'' \not \to \infty$. \qedhere
\end{proof}

By Claim~\ref{better}, the proof follows. \hfill\qed

\end{document}